\newcounter{Step}
\newenvironment{step}[0]{\bigskip\addtocounter{Step}{1}\noindent\textbf{Step \theStep :} }{\
  \begin{flushright} \end{flushright}}
\newcommand{\Macto}{{\overset{M}{\rightarrow}\,}}
\newcommand{\frakF}{{\mathfrak{F}}}
\newcommand{\dotT}{\dot{T}}
\newcommand{\telque}{{\,;\,}}
\newcommand{\bbC}{{\mathbb{C}}}
\newcommand{\bbK}{{\mathbb{K}}}
\newcommand{\bbR}{{\mathbb{R}}}
\newcommand{\bbN}{{\mathbb{N}}}
\newcommand{\calB}{{\mathcal{B}}}
\newcommand{\calD}{{\mathcal{D}}}
\newcommand{\calE}{{\mathcal{E}}}
\newcommand{\calT}{{\mathcal{T}}}
\newcommand{\supp}{{\mathrm{supp\,}}}
\newcommand{\WF}{{\mathrm{WF}}}
\newcommand{\calS}{{\mathcal{S}}}
\newtheorem{prop}{Proposition}
\newtheorem{thm}[prop]{Theorem}
\newtheorem{cor}[prop]{Corollary}
\newtheorem{lem}[prop]{Lemma}
\newtheorem{dfn}[prop]{Definition}
\begin{document}
\title{Functional properties of H{\"o}rmander's space of 
  distributions having a specified wavefront set}
\titlerunning{Functional properties of $\calD'_\Gamma$}
\author{Yoann Dabrowski\inst{1} \and Christian Brouder\inst{2}
}                     
\institute{
Institut Camille Jordan UMR~5208,
Universit\'{e} de Lyon, Universit\'{e} Lyon 1,\\
43 bd. du 11 novembre 1918, F-69622 Villeurbanne cedex,
France \and 
Institut de Min\'eralogie, de Physique des Mat\'eriaux et de
Cosmochimie, Sorbonne Universit\'es,
UMR CNRS 7590, UPMC Univ. Paris 06, Mus\'eum National d'Histoire
Naturelle, IRD UMR 206,
4 place Jussieu, F-75005 Paris, France.
}

\date{Received: date / Accepted: date}
%
\communicated{M. Salmhofer}
\maketitle
\begin{abstract}
The space  $\calD'_\Gamma$ of distributions 
having their wavefront
sets in a closed cone $\Gamma$ has become important in physics
because of its role in the formulation of quantum field theory
in curved spacetime. In this paper, the topological and bornological 
properties of $\calD'_\Gamma$ and its dual $\calE'_\Lambda$ are 
investigated. It is found that $\calD'_\Gamma$ is a 
nuclear, semi-reflexive and semi-Montel complete normal space
of distributions.
Its strong dual $\calE'_\Lambda$ is a 
nuclear, barrelled and (ultra)bornological normal space of distributions
which, however, is not even sequentially complete.
Concrete rules are given to determine whether a distribution
belongs to $\calD'_\Gamma$, whether a sequence converges in
$\calD'_\Gamma$ and whether a set of distributions is bounded
in $\calD'_\Gamma$.
\end{abstract}

\section{Introduction}
\label{mul-dis-sect}

Standard quantum field theory uses
 Feynman diagrams
in the momentum space. However, this framework 
is not suitable for quantum field theory in arbitrary
spacetimes because of
the absence of translation invariance.
In 1992, Radzikowski \cite{Radzikowski-92-PhD,Radzikowski} showed 
the wavefront set of distributions 
to be  a key
concept to describe quantum fields in curved spacetime.
This idea was developed into a rigorous renormalized
scalar field theory in curved spacetime by
Brunetti and Fredenhagen~\cite{Brunetti2}, followed by
Hollands and Wald~\cite{Hollands2}.
This approach was rapidly extended to 
deal with
Dirac fields~\cite{Kratzert-00,Hollands-01,Antoni-06,%
Dappiaggi-09,Sanders-10-Dirac,Rejzner-11},
gauge fields~\cite{Hollands-08,Fredenhagen-11,Fredenhagen-13}
and even the quantization of gravitation~\cite{Brunetti-13-QG}.

This tremendous progress was made possible by
a complete reformulation of quantum field theory,
where the wavefront set of distributions plays a
central role, for example to determine the algebra
of microcausal functionals, to define a
spectral condition for time-ordered products
and quantum states and to give a rigorous
description of renormalization.

In other words, the natural space where quantum field
theory takes place is not the space of distributions
$\calD'$, but the space $\calD'_\Gamma$ of distributions having
their wavefront set in a 
specified closed cone $\Gamma$.
This space and its simplest properties were described
by H{\"o}rmander in 1971~\cite{Hormander-71}.
Since $\calD'_\Gamma$ is now a crucial tool
of quantum field theory, it is important
to investigate its topological and functional
properties. 
For example, renormalized time-ordered
products are determined as an extension of a distribution
to the thin diagonal. Since this extension is defined as the
limit of a sequence, we need simple criteria to determine
the convergence of a sequence in $\calD'_\Gamma$.
The ambiguity of renormalization is determined, among other things,
by the way this distribution varies under scaling. 
Scaled distributions are defined with respect to a bounded set
in $\calD'_\Gamma$.  
Thus, we need simple tests to know when a set
of distributions is bounded.
The purpose of this paper is to provide tools
to answer these questions in a simple way.

The wavefront set of distributions plays also a key role
in microlocal analysis,
to determine whether a distribution can be pulled back,
restricted to a submanifold or multiplied by another
distribution~\cite[Chapter 8]{HormanderI}. Therefore,
the wavefront set has become a standard subject
in textbooks of distribution theory and microlocal
analysis~\cite{HormanderI,Duistermaat,Guillemin,%
  Chazarain,ReedSimonII,%
Friedlander,Strichartz-03,Grigis,Strohmaier,Eskin,Wagschal-11}.
However, to the best of our knowledge, no detailed study was
published on the functional properties of $\calD'_\Gamma$.

Many properties of $\calD'_\Gamma$ will be deduced
from properties of its dual. Thus, we shall first
calculate the dual of $\calD'_\Gamma$,
denoted by $\calE'_\Lambda$, which turns out to be
the space of compactly supported distributions having
their wavefront set included in an \emph{open cone}
$\Lambda$ which is the complement of $\Gamma$ up to
a change of sign. 
Such a space $\calE'_\Lambda$ is used
in quantum field theory to define microcausal
functionals~\cite{Fredenhagen-11}.

We now summarize our main results. Although they are both
nuclear and normal spaces of distributions,
$\calD'_\Gamma$ and $\calE'_\Lambda$ have very contrasted
properties; (i) $\calD'_\Gamma$ is semi-reflexive and complete while
$\calE'_\Lambda$ is not even sequentially complete;
(ii) $\calE'_\Lambda$ is barrelled,
and ultrabornological, while
$\calD'_\Gamma$ is neither barrelled nor bornological.
For applications, the most significant property of $\calD'_\Gamma$
is to be semi-Montel. 
Indeed, two steps involving $\calD'_\Gamma$ are particularly 
important in the renormalization
process described by Brunetti and Fredenhagen~\cite{Brunetti2}.
The first step is a control
of the divergence of the relevant distributions
near the diagonal: there must be a real number $s$ such that
the family  $\{\lambda^{-s} u_\lambda\}_{0<\lambda \le 1}$ 
is a bounded set of distributions,
where $u_\lambda$ is a scaled distribution. 
This proof is facilitated by our determination of bounded sets:
\begin{prop}
A set $B$ of distributions in $\calD'_\Gamma$ is bounded
if and only if, for every $v\in \calE'_\Lambda$, there is
a constant $C_v$ such that $|\langle u,v\rangle| < C_v$
for all $u\in B$. Such a weakly bounded set is also
strongly bounded and equicontinuous.
Moreover, the closed bounded sets
of $\calD'_\Gamma$ are compact, complete and metrizable.
\end{prop}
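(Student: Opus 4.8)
The plan is to read off every assertion from the structural facts already at our disposal, namely that $\calE'_\Lambda$ is \emph{the} dual of $\calD'_\Gamma$, that $\calE'_\Lambda$ is barrelled, and that $\calD'_\Gamma$ is semi-reflexive, semi-Montel and complete. First I would settle the equivalence between boundedness and weak boundedness. Since $\calE'_\Lambda=(\calD'_\Gamma)'$ by definition, the given H\"ormander topology on $\calD'_\Gamma$ is compatible with the duality $\langle\calD'_\Gamma,\calE'_\Lambda\rangle$, and so is $\sigma(\calD'_\Gamma,\calE'_\Lambda)$. By Mackey's theorem all topologies compatible with a given duality share the same bounded sets, so $B$ is bounded in $\calD'_\Gamma$ precisely when it is $\sigma(\calD'_\Gamma,\calE'_\Lambda)$-bounded, i.e. when $\sup_{u\in B}|\langle u,v\rangle|<\infty$ for every $v\in\calE'_\Lambda$. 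This is exactly the claimed criterion.

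For equicontinuity and strong boundedness I would use semi-reflexivity, which identifies $\calD'_\Gamma$ with the dual of $\calE'_\Lambda$: a weakly bounded $B$ is then a pointwise bounded family of continuous linear forms on $\calE'_\Lambda$. Since $\calE'_\Lambda$ is barrelled, Banach--Steinhaus forces $B$ to be equicontinuous. An equicontinuous set lies in the polar of some neighbourhood of zero and is therefore bounded for the strong topology $\beta(\calD'_\Gamma,\calE'_\Lambda)$; the reverse implications being trivial, weak boundedness, strong boundedness and equicontinuity all coincide.

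It remains to treat a closed bounded set $B$. Compactness is exactly the semi-Montel property, and completeness follows at once, either because $B$ is closed in the complete space $\calD'_\Gamma$ or because a compact subset of a Hausdorff uniform space is complete. For metrizability I would first note that on the compact set $B$ the given topology coincides with $\sigma(\calD'_\Gamma,\calE'_\Lambda)$: the identity from $B$ with the given (compact) topology to $B$ with the weak (Hausdorff, since $\calE'_\Lambda$ separates points) topology is a continuous bijection, hence a homeomorphism. Then, since $B$ is equicontinuous and $\calD$ is a dense separable subspace of $\calE'_\Lambda$ (by normality), pointwise convergence on a countable dense set $\{\varphi_n\}\subseteq\calD$ already induces the weak topology on $B$; this countable family separates the points of $B$, so the topology is given by the metric $d(u,u')=\sum_n 2^{-n}\min(1,|\langle u-u',\varphi_n\rangle|)$.

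The boundedness equivalences are routine once Mackey's theorem and the barrelledness of $\calE'_\Lambda$ are invoked. The genuine obstacle is metrizability: the heart of the matter is replacing the abstract weak topology by a countable system of seminorms, which relies on the separability of $\calE'_\Lambda$ (equivalently, on density and separability of the test functions inside it) together with the equicontinuity of $B$, while the reduction to the weak topology itself depends crucially on the semi-Montel property to supply compactness.
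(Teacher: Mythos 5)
Your proposal is correct, and for most of the statement it follows the same route as the paper: the equivalence of boundedness with weak boundedness via Mackey's theorem on topologies compatible with a duality (the paper cites Tr\`eves for exactly this), equicontinuity and strong boundedness via Banach--Steinhaus applied to the barrelled space $\calE'_\Lambda$ (the paper cites the corresponding statements for duals of barrelled spaces in Horv\'ath), and compactness and completeness of closed bounded sets from the semi-Montel property and the completeness of $\calD'_\Gamma$. The genuine divergence is in the metrizability claim, which you correctly single out as the delicate point. The paper obtains it in one line from nuclearity: $\calE'_\Lambda$, the strong dual of $\calD'_\Gamma$, is nuclear, and bounded subsets of a space whose strong dual is nuclear are metrizable (citing Hogbe-Nlend--Moscatelli); it also quotes the general semi-Montel fact that the normal and weak topologies agree on \emph{all} bounded sets. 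You instead argue by separability: $\calD(\Omega)$ is separable and, since $\calE'_\Lambda$ is a normal space of distributions, it injects continuously and densely into $\calE'_\Lambda$, so $\calE'_\Lambda$ contains a countable dense (hence total) family $\{\varphi_n\}$; on the equicontinuous set $B$ the weak topology coincides with pointwise convergence on $\{\varphi_n\}$, giving an explicit metric, and your reduction of the normal topology to the weak one on $B$ uses the compact-to-Hausdorff homeomorphism argument rather than the cited semi-Montel property (this restricts the agreement to closed bounded sets, which is all the statement needs). Both routes are sound: yours is more elementary and essentially self-contained, resting on the classical weak-$*$ metrizability of equicontinuous sets in the dual of a separable space, whereas the paper's is shorter given that nuclearity of $\calE'_\Lambda$ has already been established and yields the weak/normal agreement on arbitrary bounded sets.
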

The second step is the proof that the extension of a distribution
can be defined as the limit of a sequence of distributions
in $\calD'_\Gamma$. For this we derive the following 
convergence test:
\begin{prop}
If $u_i$ is a sequence of elements of $\calD'_\Gamma$ such
that, for any $v\in \calE'_\Lambda$, the 
sequence  $\langle u_i,v\rangle$ converges in $\bbC$ to 
a number $\lambda_v$, then $u_i$ converges to a distribution
$u$ in $\calD'_\Gamma$ and $\langle u,v\rangle=\lambda_v$
for all $v\in \calE'_\Lambda$.
\end{prop}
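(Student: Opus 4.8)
The plan is to reduce everything to the first Proposition (the boundedness criterion together with the compactness and metrizability of closed bounded sets) and then to close with an elementary point-set topology argument. First I would observe that the hypothesis makes the sequence weakly bounded: for each $v\in\calE'_\Lambda$ the scalar sequence $\langle u_i,v\rangle$ converges in $\bbC$, hence is bounded, say $|\langle u_i,v\rangle|\le C_v$ for all $i$. By the boundedness criterion of the first Proposition, the set $B=\{u_i:i\in\bbN\}$ is therefore bounded in $\calD'_\Gamma$. Consequently its closure $\overline{B}$ is, again by that Proposition, a compact and metrizable subset of $\calD'_\Gamma$; in particular $\overline{B}$ is sequentially compact and is contained in $\calD'_\Gamma$.

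Next I would extract a candidate limit. By sequential compactness some subsequence $u_{i_k}$ converges in $\calD'_\Gamma$ to a distribution $u\in\overline{B}\subseteq\calD'_\Gamma$. Since the topology of $\calD'_\Gamma$ is finer than the weak topology $\sigma(\calD'_\Gamma,\calE'_\Lambda)$, convergence in $\calD'_\Gamma$ forces $\langle u,v\rangle=\lim_k\langle u_{i_k},v\rangle$ for every $v$; as the full scalar sequence converges to $\lambda_v$, so does every subsequence, whence $\langle u,v\rangle=\lambda_v$ for all $v\in\calE'_\Lambda$. This already produces the limit $u$ and identifies $\lambda_v$ as claimed.

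It remains to upgrade subsequential convergence to convergence of the whole sequence, which is the only point requiring care. I would argue that every subsequence of $(u_i)$ has a further subsequence converging in $\calD'_\Gamma$ to the \emph{same} $u$: indeed, any such sub-subsequence lies in the compact metrizable set $\overline{B}$, hence has a convergent sub-sub-subsequence with some limit $w\in\overline{B}$, and the argument of the previous paragraph forces $\langle w,v\rangle=\lambda_v=\langle u,v\rangle$ for all $v$. Because $\calE'_\Lambda$ is the dual of the Hausdorff locally convex space $\calD'_\Gamma$, the Hahn--Banach theorem shows that $\calE'_\Lambda$ separates points, so $w=u$. Thus, inside the metric space $\overline{B}$, every subsequence of $(u_i)$ admits a further subsequence converging to the fixed point $u$; by the standard metric-space criterion this forces the entire sequence to converge to $u$ in $\calD'_\Gamma$, completing the proof.

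I expect the main obstacle to be conceptual rather than computational: all the genuine functional-analytic weight---that weak boundedness implies boundedness, and that bounded sets have compact, metrizable closures---is carried by the first Proposition (ultimately by the semi-Montel property). The delicate remaining point is therefore only the passage from subsequential to full-sequence convergence and the appeal to separation of points to pin down the limit; everything else is routine.
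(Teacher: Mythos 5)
Your proof is correct, and it is worth noting that it fills in details along a slightly different path than the paper takes. The paper disposes of this proposition in one line, by quoting H\"orv\'ath's characterization of convergence in semi-Montel spaces: implicitly, the weakly Cauchy sequence is bounded, semi-reflexivity (i.e.\ weak quasi-completeness, Proposition~\ref{quasicompleteprop}) provides a weak limit $u\in\calD'_\Gamma$, and then the fact that on bounded subsets of $\calD'_\Gamma$ the weak and normal topologies coincide (the remark after Proposition~\ref{semiMontel}) upgrades weak convergence of the full sequence to convergence in the normal topology --- no subsequences and no metrizability are needed. You instead take the first Proposition as a black box (weak boundedness implies boundedness; closed bounded sets are compact, complete and metrizable), obtain the limit by sequential compactness rather than by weak quasi-completeness, and recover full-sequence convergence via the sub-subsequence criterion together with separation of points (your appeal to Hahn--Banach is fine, and the paper even proves separation directly in Lemma~\ref{pairinglem}). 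Both arguments rest on the same functional-analytic content, namely the semi-Montel property; yours trades the ``topologies agree on bounded sets'' fact for metrizability (which the paper derives from nuclearity of $\calE'_\Lambda$) plus elementary point-set topology, and has the merit of being entirely self-contained once the first Proposition is granted. Two small points you glossed over, both harmless: the closure of a bounded set in a topological vector space is again bounded (needed before you can apply the ``closed bounded sets are compact'' clause), and the sub-subsequence criterion in fact holds in any topological space, so metrizability is only genuinely used to extract convergent subsequences.
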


We now describe the organization of the paper.
After this introduction, we determine a pairing between
$\calD'_\Gamma$ and $\calE'_\Lambda$ and we show that
this pairing is compatible with duality.
Then, we prove that $\calD'_\Gamma$ is a normal
space of distributions.
The next section investigates several topologies 
on $\calE'_\Lambda$ and shows their equivalence.
Then, the nuclear and bornological properties
of $\calD'_\Gamma$ and $\calE'_\Lambda$ are discussed.
Bornology enables us to prove that $\calD'_\Gamma$ is complete
and it is relevant to the problem of quantum field theory on curved
spacetime because some isomorphisms of the space of sections of a vector
bundle over a manifold are stronger in the bornological setting than
in the topological one (see section~\ref{borno-sect}).
These results are put together to determine the
main functional properties of $\calD'_\Gamma$ and its dual.
Finally, a counter-example is constructed to 
show that $\calE'_\Lambda$ is not sequentially complete.
This will imply that $\calD'_\Gamma$ and its dual do not
enjoy all the nice properties of $\calD'$.

\section{The dual of $\calD'_\Gamma$}
In this section, we review what is known about
the topology of $\calD'_\Gamma$
and we describe the functional analytic tools
(duality pairing and normal spaces of distributions)
that enable us to investigate the dual
of $\calD'_\Gamma$.

\subsection{What is known about $\calD'_\Gamma$}
\label{wwk-sect}

Let us fix the notation.
Let $\Omega$ be an open set in $\bbR^n$, 
we denote by $T^*\Omega$ the 
cotangent bundle over $\Omega$,
by $UT^*\Omega=\{(x;k)\in T^*\Omega\telque |k|=1\}$
(where $|k|$ is the standard Euclidian norm on $\bbR^n$)
the sphere bundle over $\Omega$ and by
$\dotT^*\Omega=T^*\Omega\backslash\{(x;0)\telque x\in \Omega\}$
the cotangent bundle without the zero section.
We say that a subset $\Gamma$ of $\dotT^*\Omega$ is
a cone if $(x;\lambda k)\in \Gamma$ whenever
$(x;k)\in \Gamma$ and $\lambda>0$ and such a cone
is said to be closed if it is closed in $\dotT^*\Omega$.
For any closed cone $\Gamma$,
H\"ormander defined~\cite[p.~125]{Hormander-71}
the space $\calD'_\Gamma$ to be the set of 
distributions in $\calD'(\Omega)$ 
having their wavefront set in $\Gamma$.
He also described what he called a \emph{pseudo-topology}
on $\calD'_\Gamma$, which means that he defined a concept of 
convergence in $\calD'_\Gamma$ but not a topology (as a family of open sets).
His definition was equivalent to the following 
one~\cite[p.~262]{HormanderI}: a sequence
$u_j\in \calD'_\Gamma$ converges to $u\in \calD'_\Gamma$
if 
\begin{itemize}
\item[(i)] The sequence of numbers $\langle u_j,f\rangle$ 
  converges to $\langle u,f\rangle$ in the ground field
  $\bbK$ (i.e. $\bbR$ or $\bbC$) for all $f\in \calD(\Omega)$.
\item[(ii)] If $V$ is a closed cone in $\bbR^n$
  and $\chi$ is an element of $\calD(\Omega)$ that satisfy
  $(\supp\chi\times V)\cap \Gamma=\emptyset$, then
  $\sup_{k\in V}(1+|k|)^N
  |\widehat{u_j\chi}(k)-\widehat{u\chi}(k)|\to 0$
  for all integers $N$,
\end{itemize}
where a hat over a distribution 
(e.g. $\widehat{u\chi}$) denotes its
Fourier transform (the
Fourier transform of $f\in \calD(\Omega)$ being
defined by
$\hat{f}(k) = \int_\Omega  e^{i k\cdot x} f(x) dx$).

H\"ormander then showed that $\calD(\Omega)$ is dense
in $\calD'_\Gamma$. More precisely, for every $u\in \calD'_\Gamma$
there is a sequence of functions $u_j\in \calD(\Omega)$
such that $u_j$ converges to $u$ in the 
above sense~\cite[p.~262]{HormanderI}.
This concept of convergence is compatible with
different topologies. 
The topology of $\calD'_\Gamma$ used in the
literature~\cite{Grigis,Cardoso-75,Duistermaat,%
 Grudzinski,Guillemin,Alesker-10}, which is
usually called the H\"ormander 
topology~\cite{Brunetti2,Strohmaier,Stottmeister-14}),
is that of a locally convex topological vector space defined
by the following seminorms:
\begin{itemize}
\item[(i)] $p_f(u)=|\langle u,f\rangle|$
  for all $f\in \calD(\Omega)$.
\item[(ii)] 
  $||u||_{N,V,\chi}=\sup_{k\in V}(1+|k|)^N
  |\widehat{u\chi}(k)|$,
  for all integers $N$, all closed cones $V$
  and all $\chi\in \calD(\Omega)$ such that 
  $(\supp\chi\times V)\cap \Gamma=\emptyset$.
\end{itemize}
We immediately observe that $\calD'_\Gamma$
is a Hausdorff locally convex space 
because $u=0$ if $p_i(u)=0$ for all its seminorms
$p_i$~\cite[p.~96]{Horvath}. Indeed, if
$p_f(u)=|\langle u,f\rangle|=0$ for all $f\in \calD(\Omega)$,
then $u=0$.  
When we speak of ``all the seminorms'' of a 
locally convex space $E$,
we mean all the seminorms of a family of seminorms 
defining the topology of $E$~\cite[p.~63]{Treves}.

\subsection{Duality pairing}

Mackey's duality theory~\cite{Mackey-43,Mackey-43-2,%
Mackey-45,Mackey-46} is a powerful technique to investigate
the topological properties of locally convex
spaces~\cite{Bourbaki-TVS,Horvath}. The first step of this method is to find
a duality pairing between two spaces.

Let us take the example of the duality pairing between
$\calD'(\Omega)$ and $\calD(\Omega)$. Any test
function $u\in\calD(\Omega)$ can be paired to any $f\in \calD(\Omega)$ by
$\langle u,f\rangle= \int_\Omega u(x)f(x) dx$.
The density of $\calD(\Omega)$ in $\calD'(\Omega)$ implies that
this pairing can be uniquely extended to a pairing
between $\calD'(\Omega)$ and $\calD(\Omega)$, also denoted by
$\langle u,f\rangle$, that can be written
\begin{eqnarray}
\langle u,f\rangle &=& \frac{1}{(2\pi)^n} 
\int_{\bbR^n} \widehat{u\varphi}(k)\hat{f}(-k) dk,
\label{uf}
\end{eqnarray}
where the function $\varphi\in\calD(\Omega)$ is equal
to 1 on a compact neighborhood of the support of $f$.
Indeed, $\langle u,f\rangle=
\langle \varphi u,f\rangle$~\cite[p.~90]{Schwartz-66} and
$\varphi u$ has a Fourier transform because it 
is a compactly supported distribution~\cite[p.~165]{HormanderI}.
This pairing is compatible with duality, in the sense
that any element $\alpha$ in the topological dual of
$\calD(\Omega)$ can be written $\alpha(f)=\langle u,f\rangle$
for one element $u$ of $\calD'(\Omega)$, by definition
of the space of distributions.

We would like to find a similar pairing between $\calD'_\Gamma$
and another space to be determined.
Grigis and Sj{\"o}strand~\cite[p.~80]{Grigis} showed
that the pairing 
$\langle u,v\rangle=\int_\Omega u(x) v(x) dx$
between $C^\infty(\Omega)$ and $\calD(\Omega)$
extends uniquely to the pairing defined by
eq.~(\ref{uf}) between $\calD'_\Gamma$ and 
every space  $\calE'_\Xi$ of compactly supported
distributions whose wavefront set is contained in $\Xi$,
where $\Xi$ is any closed cone such that
$\Gamma'\cap\Xi=\emptyset$,
where $\Gamma'=\{(x;k)\in \dotT^*\Omega\telque
   (x;-k)\in \Gamma\}$
(see also \cite[p.~512]{Chazarain} for a similar result).

We need to slightly extend their definition by pairing
$\calD'_\Gamma$ with the space $\calE'_\Lambda$, where $\Lambda$ is
now the \emph{open} cone $\Lambda=(\Gamma')^c$. Note that this
space is the union of the ones considered by Grigis and
Sj{\"o}strand.
The next lemma does not contain more information than their result, but, for the reader's convenience, we first show that this extended pairing is well defined.
\begin{lem}
\label{pairinglem}
If $\Gamma$ is a closed cone in $\dotT^*\Omega$ and
$\Lambda=(\Gamma')^c=\{(x;k)\in \dotT^*\Omega\telque
(x,-k)\notin \Gamma\}$, then
the following pairing between $\calD'_\Gamma$ and 
$\calE'_\Lambda=\{v\in \calE'(\Omega)\telque \WF(v)\subset \Lambda\}$
is well defined:
\begin{eqnarray*}
\langle u,v\rangle &=& \frac{1}{(2\pi)^n} 
\int_{\bbR^n} \widehat{u\varphi}(k)\hat{v}(-k) dk,
\end{eqnarray*}
where $u\in \calD'_\Gamma$, $v\in \calE'_\Lambda$ and
$\varphi$ is any function in $\calD(\Omega)$ equal
to 1 on a compact neighborhood of the support of $v$.
This pairing is separating and, for any $v\in \calE'_\Lambda$,
the map $\lambda:\calD'_\Gamma\to\bbK$
defined by $\lambda(u)=\langle u,v\rangle$ is continuous.
\end{lem}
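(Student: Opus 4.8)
The plan is to reduce all three assertions—well-definedness, separation, and continuity—to the single algebraic feature of the hypothesis $\Lambda=(\Gamma')^c$: if $(x;\xi)\in\Gamma$ then $(x;-\xi)\in\Gamma'$, hence $(x;-\xi)\notin\Lambda$, so at no point $x$ do $\WF(u)\subset\Gamma$ and $\WF(v)\subset\Lambda$ contain opposite covectors. First I would record the standard facts I will lean on: $u\varphi$ and $v$ are compactly supported, so $\widehat{u\varphi}$ and $\hat v$ are smooth and polynomially bounded, and for a compactly supported $w$ the cone of directions in which $\hat w$ fails to decay rapidly is exactly the projection $\pi_2\WF(w)$ onto the fibre.

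The hard part is well-definedness, and the obstacle is that wavefront directions are attached to base points: the naive global estimate on $\widehat{u\varphi}(k)\hat v(-k)$ fails, because opposite directions may occur at \emph{different} points of $\supp v$, so this product need not decay in every direction. The device that repairs this is localization. I would fix a finite partition of unity $\sum_j\chi_j^2=1$ near $\supp v$, with each $\chi_j$ supported in a small ball $B_j$ about a point $y_j$ on which $\varphi\equiv1$. On $\overline{B_j}$ the direction cones $S_j^\Gamma=\{\xi:(x;\xi)\in\Gamma,\ x\in\overline{B_j}\}$ and $S_j^v=\{\xi:(x;-\xi)\in\WF(v),\ x\in\overline{B_j}\}$ are disjoint for $B_j$ small: they are disjoint at $y_j$ by the algebraic observation above, and remain disjoint by upper semicontinuity of the fibres of the closed cone $\Gamma$ (and of $\WF(v)$). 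Choosing a closed cone $V_j$ with $S_j^\Gamma\subset V_j$ and $V_j\cap S_j^v=\emptyset$, the $j$-th integrand $\widehat{u\chi_j}(\xi)\widehat{\chi_j v}(-\xi)$ then decays rapidly in \emph{every} direction—rapidly off $V_j$ because $(\supp\chi_j\times V_j^c)\cap\Gamma=\emptyset$, and rapidly on $V_j$ because $-V_j$ avoids $\WF(v)$ over $\supp\chi_j$. Hence each localized integral converges absolutely, and the pairing is realized as $\sum_j(2\pi)^{-n}\int\widehat{u\chi_j}(\xi)\widehat{\chi_j v}(-\xi)\,d\xi$, i.e.\ as the compactly supported product distribution $uv$ (legitimate by the pointwise transversality) evaluated on the constant $1$. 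Independence of $\varphi$ follows because two admissible cutoffs differ by a function vanishing near $\supp v$, so the correction has support disjoint from $v$ and contributes $0$; alternatively one invokes the cited Grigis--Sj\"ostrand result, applicable since $\WF(v)\cap UT^*\Omega$ is a compact subset of the open set $\Lambda\cap UT^*\Omega$, whence $v\in\calE'_\Xi$ for some closed cone $\Xi\subset\Lambda$.

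Separation is then easy, since $\calD(\Omega)$ lies in both spaces (a test function has empty wavefront set) and the formula collapses to the ordinary pairing whenever one argument is a test function (there the relevant Fourier transform is Schwartz, so the single integral converges trivially). Thus $\langle u,f\rangle=0$ for all $f\in\calD(\Omega)$ forces $u=0$, and $\langle f,v\rangle=0$ for all $f\in\calD(\Omega)$ forces $v=0$.

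For continuity of $\lambda(u)=\langle u,v\rangle$ I would estimate each localized piece by splitting the integral into $V_j$ and $V_j^c$. On $V_j^c$ the factor $\widehat{u\chi_j}$ is rapidly decreasing and is bounded by the single seminorm $||u||_{N,V_j^c,\chi_j}$ (type (ii), legitimate since $(\supp\chi_j\times V_j^c)\cap\Gamma=\emptyset$), times a finite integral of the polynomially bounded $\widehat{\chi_j v}$. On $V_j$ the factor $\widehat{\chi_j v}(-\xi)$ is rapidly decreasing, so $\xi\mapsto\chi_j e^{i\xi\cdot}\,\widehat{\chi_j v}(-\xi)$ is integrable as a $\calD(\Omega)$-valued function, and the integral equals $\langle u,g_j\rangle=p_{g_j}(u)$ for one explicit $g_j\in\calD(\Omega)$ (a type (i) seminorm). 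Summing over the finitely many $j$ bounds $|\langle u,v\rangle|$ by finitely many defining seminorms of $\calD'_\Gamma$, which is exactly continuity of $\lambda$. I expect the only genuinely delicate point to be the localized direction-separation in the second paragraph; the rest is bookkeeping once that geometric fact is in place.
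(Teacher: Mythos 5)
Your proposal is correct in substance and follows the same overall strategy as the paper's proof: localize with a finite squared partition of unity near $\supp v$ so that over each patch the $\Gamma$-directions and the reflected $\WF(v)$-directions are separated by a cone, split the frequency integral accordingly, bound the piece where $\widehat{u\chi_j}$ decays by a type (ii) seminorm, and convert the piece where $\widehat{\chi_j v}$ decays into $\langle u,g_j\rangle$ for an explicit $g_j\in\calD(\Omega)$, i.e.\ a type (i) seminorm. The differences lie in the decomposition, and they are genuine simplifications. The paper invokes Eskin's construction (cones $V_{uj},V_{vj}$ with $V_{uj}\cap(-V_{vj})=\emptyset$ plus smooth homogeneous cutoffs $\alpha_j,\beta_j$), inserts $1=\big(\alpha_j+(1-\alpha_j)\big)\big(\beta_j+(1-\beta_j)\big)$, and handles four terms $I_{1j},\dots,I_{4j}$ (one vanishing by cone disjointness, two estimated by rapid decay, one converted to a type (i) seminorm). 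You instead establish the cone separation directly via upper semicontinuity of the fibres of closed cones --- a correct compactness argument on the sphere bundle, and precisely what Eskin's construction encodes --- and then use a single sharp split $\int_{V_j}+\int_{V_j^c}$, exploiting that one matched cone per patch suffices because the complement of $V_j$ automatically avoids the $\Gamma$-directions there. This buys a two-term instead of four-term estimate; what is lost is only the smoothness of the cutoffs, which is inessential, since measurability plus fast decrease is all that is needed for $g_j=\chi_j\,\calF^{-1}\big[\mathbf{1}_{V_j}\widehat{\chi_j v}(-\,\cdot\,)\big]$ to be a test function (the paper makes the same point for its $f_j$). Your explicit treatment of the independence of $\varphi$ (via disjoint supports, or by noting $\WF(v)\subset\Xi\subset\Lambda$ for some closed cone $\Xi$ by compactness) is actually more complete than the paper's, which leaves this point to the cited Grigis--Sj{\"o}strand result.

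Two small repairs are needed. First, $V_j^c$ is an \emph{open} cone, so $||u||_{N,V_j^c,\chi_j}$ is not literally one of the defining seminorms of $\calD'_\Gamma$, which require closed cones; you must choose $V_j$ so that $S_j^\Gamma$ lies in its \emph{interior} (possible by the same compactness argument, adding a conic buffer), and then use the closed cone $\overline{V_j^c}=(\mathrm{int}\,V_j)^c$, which is still disjoint from $\Gamma$ over $\supp\chi_j$. Second, the interchange $\int_{V_j}\langle u,\chi_j e_\xi\rangle\,\widehat{\chi_j v}(-\xi)\,d\xi=\langle u,g_j\rangle$ deserves a one-line justification: the integrand is a continuous $\calD(K)$-valued function all of whose seminorms are integrable over $V_j$ by the rapid decay of $\widehat{\chi_j v}$ there, so the integral converges in the Fr\'echet space $\calD(K)$ and commutes with the continuous functional $u$; the paper's route via the smoothness of the inverse Fourier transform of a fast-decreasing measurable function serves the same purpose. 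Finally, the paper treats the degenerate cases $\Gamma=\emptyset$ and $\Gamma=\dotT^*\Omega$ separately; your argument covers them uniformly (take $V_j$ empty or all of $\bbR^n\setminus\{0\}$), but a remark to that effect would make this explicit.
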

\begin{proof}
We first consider the case where $\Gamma$ is neither
empty nor $\dotT^*\Omega$.
A distribution $v\in\calE'_\Lambda$ is compactly supported
and its wavefront set is a closed cone contained in $\Lambda$,
which implies $\WF(v)\cap\Gamma'=\emptyset$.
The product of distributions $uv$ is then a well-defined
distribution by H\"ormander's theorem~\cite[p.~267]{HormanderI}.
We estimate now
$\langle u,v\rangle=(2\pi)^{-n}\int \widehat{u\varphi}(k)\hat{v}(-k) dk$.


By a classical construction~\cite[p.~61]{Eskin}, there is a finite set
of non-negative smooth functions $\psi_j$ such that
$\sum_j \psi_j^2=1$ on a compact neighborhood $K$ of the support of $v$
and there are closed cones  $V_{uj}$ and $V_{vj}$
that satisfy the three conditions: (i) $V_{uj}\cap (-V_{vj})=\emptyset$,
(ii) $\supp\psi_j\times V_{uj}^c\cap \Gamma=\emptyset$ and
(iii) $\supp\psi_j\times V_{vj}^c\cap \WF(v)=\emptyset$.
As a consequence of these conditions, we have
$\Gamma|_K \subset \cup_j\,(\supp\psi_j\times V_{uj})$ and
$\WF(v) \subset \cup_j \,(\supp\psi_j\times V_{vj})$.
If we choose $\varphi=\sum_j\psi_j^2$ we can write
$\langle u,v\rangle=\sum_j I_j$, where
$I_j=(2\pi)^{-n}\int \widehat{u\psi_j}(k)\widehat{v\psi_j}(-k) dk$.

Following again Eskin~\cite[p.~62]{Eskin}, we can define 
homogeneous functions
of degree zero $\alpha_j$ and $\beta_j$ on 
$\bbR^n$,
which are smooth except at the origin, measurable,
non-negative and bounded by 1 on $\bbR^n$ and such that
$\supp\alpha_j$ and $\supp\beta_j$ are closed cones
satisfying the three conditions (i), (ii)
and (iii) stated above, with
$\alpha_j=1$ on $V_{uj}$ and $\beta_j=1$ on $V_{vj}$.
Then we insert $1=\big(\alpha_j+(1-\alpha_j)\big)
\big(\beta_j+(1-\beta_j)\big)$ in the integral defining
$I_j$ and we obtain 
$I_j=I_{1j}+I_{2j}+ I_{3j}+I_{4j}$, where
\begin{eqnarray*}
I_{1j} &=& (2\pi)^{-n} \int_{\bbR^n} 
  \alpha_j(-k) \widehat{\psi_j u}(-k)\,
  \beta_j(k) \widehat{\psi_j v}(k) \,dk,
\\
I_{2j} &=& (2\pi)^{-n} \int_{\bbR^n} 
  \alpha_j(-k) \widehat{\psi_j u}(-k)\,
  \big(1-\beta_j(k)\big) \widehat{\psi_j v}(k) \,dk,
\\
I_{3j} &=& (2\pi)^{-n} \int_{\bbR^n} 
  \big(1-\alpha_j(-k)\big) \widehat{\psi_j u}(-k)\,
  \beta_j(k) \widehat{\psi_j v}(k) \,dk,
\\
I_{4j} &=& (2\pi)^{-n} \int_{\bbR^n} 
  \big(1-\alpha_j(-k)\big) \widehat{\psi_j u}(-k)\,
  \big(1-\beta_j(k)\big) \widehat{\psi_j v}(k) \,dk.
\end{eqnarray*}
We first notice that $I_{1j}=0$
because $(-\supp \alpha_j)\cap \supp\beta_j=\emptyset$.
We estimate $I_{4j}$.
The function $\beta_j$ was built so that
$(1-\beta_j)=0$ on $V_{vj}$
and $\supp\psi_j\times \supp(1-\beta_j)\cap\WF(v)=\emptyset$.
Then, for any integer $N$,
\begin{eqnarray*}
\big|\big(1-\beta_j(k)\big) \widehat{\psi_j v}(k)\big|
 & \le & ||v||_{N,U_{\beta j},\psi_j} (1+|k|)^{-N},
\end{eqnarray*}
where $U_{\beta j}=\supp(1-\beta_j)$.
Similarly
\begin{eqnarray}
\big|\big(1-\alpha_j(k)\big) \widehat{\psi_j u}(k)\big|
 & \le & ||u||_{M,U_{\alpha j},\psi_j} (1+|k|)^{-M},
\label{alphau}
\end{eqnarray}
where $U_{\alpha j}=\supp(1-\alpha_j)$.
Thus, for $N+M>n$,
\begin{eqnarray*}
|I_{4j}| & \le & ||u||_{M,U_{\alpha j},\psi_j} 
  ||v||_{N,U_{\beta j},\psi_j} 
  I_n^{N+M},
\end{eqnarray*}
where $I_n^N=(2\pi)^{-n} \int_{\bbR^n} (1+|k|)^{-N} dk$.

For $I_{3j}$ we use the fact that, $\psi_j v$ being
a compactly supported distribution, there is an integer
$m$ and a constant $C$ such that
$|\widehat{\psi_j v}(k)|\le C (1+|k|)^{m}$~\cite[p.~181]{HormanderI}.
When this estimate is combined with eq.~(\ref{alphau}) we obtain
for $M>n+m$,
\begin{eqnarray*}
|I_{3j}| & \le & ||u||_{M,U_{\alpha j},\psi_j} C
I_n^{M-m}.
\end{eqnarray*}
For the integral $I_{2j}$ we proceed differently
because we want to recover a seminorm of $\calD'_\Gamma$.
If we define $\hat{f}_j(k)=\alpha_j(-k)(1-\beta_j(k))
\widehat{\psi_j v}(k)$, then
$$I_{2j}=(2\pi)^{-n} \int \widehat{\psi_ju}(-k) \hat{f}_j(k) dk.$$
We call \emph{fast decreasing} a function $f(k)$ such that,
for every integer $N$, $|f(k)|\le C_N (1+|k|)^{-N}$ for 
some constant $C_N$. Note that our fast decreasing functions
are different from Schwartz rapidly decreasing functions.
The function $\hat{f}_j(k)$ is fast decreasing because
$\alpha_j$ and $\beta_j$ are bounded by 1,
$\widehat{\psi_j v}(k)$ is fast decreasing outside the
wavefront set of $v$ and $(1-\beta_j(k))$ cancels 
$\widehat{\psi_j v}(k)$ on this wavefront set.
The function $\hat{f}_j$ is also measurable because it is
the product of measurable functions.
Thus, by a standard result in the spirit
of~\cite[p.~145]{Friedlander},
its inverse Fourier transform $f_j$ 
exists and is smooth.
We can now rewrite $I_{2j}=\langle \psi_j u, f_j\rangle=
\langle u,\psi_jf_j\rangle$, which is well defined because
$\psi_j f_j$ is smooth and compactly supported.
Finally $|I_{2j}|\le p_{\psi_jf_j}(u)$, where
 $ p_{\psi_jf_j}(u)=|\langle u,\psi_jf_j\rangle|$,
and we obtain
\begin{eqnarray}\label{paireq}
|\langle u,v\rangle| & \le & \sum_j
\Big(
p_{\psi_jf_j}(u) + ||u||_{M,U_{\alpha j},\psi_j} C
I_n^{M-m} 
\nonumber\\&& + ||u||_{M,U_{\alpha j},\psi_j} 
   ||v||_{N,U_{\beta j},\psi_j} 
  I_n^{N+M} \Big).
\label{estuv}
\end{eqnarray}
Thus, $\langle u,v\rangle$ is well defined because all
the terms in the right hand side are finite
and the sum is over a finite number of $j$.
Note that $p_{\psi_j f_j}(u)$ and 
$||u||_{M,U_{\alpha j},\psi_j}$ are seminorms of 
$\calD'_\Gamma$ because $\psi_j f_j\in \calD(\Omega)$ and,
by construction, $U_{\alpha j}$ is a closed cone and 
$\supp\psi_j\times U_{\alpha j}\cap \Gamma=\emptyset$.
Equation~(\ref{estuv}) shows 
that, for any $v\in \calE'_\Lambda$, the map
 $u\mapsto\langle u,v\rangle$ is continuous.

The second case is $\Gamma=\dotT^*\Omega$ and $\Lambda=\emptyset$, so that
$\calD'_\Gamma=\calD'(\Omega)$ and $\calE'_\Lambda=\calD(\Omega)$.
The seminorm $|\langle u,v\rangle|=p_v(u)$ is then a seminorm
of $\calD'_\Gamma$ since $v\in \calD(\Omega)$. 
The last case is when $\Gamma=\emptyset$ and
$\Lambda=\dotT^*\Omega$, so that
$\calD'_\Gamma=C^\infty(\Omega)$ and $\calE'_\Lambda=\calE'(\Omega)$.
If we use the fact that the usual topology of $C^\infty(\Omega)$
is equivalent with the topology defined by $||\cdot||_{N,V,\chi}$
for all closed cones $V$ and all 
$\chi\in \calD(\Omega)$~\cite{BDH-13}, then we 
see that the elements of $\calE'(\Omega)$ are continuous
maps from $C^\infty(\Omega)$ to $\bbK$~\cite[p.~89]{Schwartz-66}.

Finally, the pairing is separating because,
if $\langle u,v\rangle=0$ for all $v\in \calE'_\Lambda$,
then $\langle u,f\rangle=0$ for all $f\in \calD(\Omega)$ because
$\calD(\Omega)\subset \calE'_\Lambda$  and a distribution $u$ which
is zero on $\calD(\Omega)$ is the zero distribution.
Similarly, $v=0$ if
$\langle u,v\rangle=0$ for all 
$u\in \calD'_\Gamma$
because $\calD(\Omega)\subset \calD'_\Gamma$.
\end{proof}

To simplify the discussion, we used Eskin's $\alpha_j$
and $\beta_j$ functions to build maps from
$v\in \calE'_\Lambda$ to $f_j \in C^\infty(\Omega)$. 
This can be improved by defining  maps from 
$\calE'_\Lambda$ to the Schwartz space $\calS$ of
rapidly decreasing functions 
(see section~\ref{nuclear-sect}).

\subsection{Normal space of distributions}
The usual spaces of  distribution theory (e.g. $\calD$,
$\calS$, $C^\infty$, $\calD'$, $\calS'$, $\calE'$), are 
\emph{normal spaces of distributions}~\cite[p.~10]{Schwartz-57},
which enjoy useful properties with respect to duality.
They are defined as follows:
\begin{dfn}
A Hausdorff locally convex space $E$ is said to be
a \emph{normal space of distributions} if there
are continuous injective linear maps $i:\calD(\Omega)\to E$
and $j:E\to \calD'(\Omega)$,
where $\calD'(\Omega)$ is equipped with its strong topology,
such that: (i) The image of $i$ is dense in $E$,
(ii) for any $f$ and $g$ in $\calD(\Omega)$
$\langle j\circ i (f),g\rangle=\int_\Omega 
f(x) g(x)dx$~\cite[p.~319]{Horvath}.
\end{dfn}
To transform $\calD'_\Gamma$ into a normal space of distributions
we need to refine its topology.
In the case of $\calD'_\Gamma$ condition (ii) is
obviously satisfied because the injections $i$ and $j$ are
the identity. The fact that $j$ is a continuous
injection means that the topology of $\calD'_\Gamma$
must be finer than the topology induced 
on it by the strong topology of $\calD'(\Omega)$~\cite[p.~302]{Treves}.
Therefore, we now equip $\calD'_\Gamma$ with 
the topology defined by the seminorms
$p_B(u)=\sup_{f\in B}|\langle u,f\rangle|$
of uniform convergence on the bounded sets $B$
of $\calD(\Omega)$ (instead of only the seminorms
$p_f=|\langle u,f\rangle|$) and we keep the seminorms
$||u||_{N,V,\chi}$ defined in section~\ref{wwk-sect}.
Since $p_B$ are the seminorms of $\calD'(\Omega)$,
$\calD'_\Gamma$ has more seminorms than $\calD'(\Omega)$,
the identity is a continuous injection
and its topology is finer than that of $\calD'(\Omega)$~\cite[p.~98]{Horvath}.
We call this topology the \emph{normal topology} of $\calD'_\Gamma$,
while the usual topology 
will be called the \emph{H\"ormander topology} of $\calD'_\Gamma$.
Note that $\calD'_\Gamma$ is Hausdorff for the normal
topology because it is Hausdorff for the coarser
H\"ormander topology.
It remains to show that 
\begin{lem}
\label{continjlem}
The injection of $\calD(\Omega)$ in $\calD'_\Gamma$ is continuous.
\end{lem}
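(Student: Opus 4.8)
The plan is to use the standard criterion that a linear map into a locally convex space is continuous as soon as each seminorm of the target, pulled back through the map, is a continuous seminorm on the source. Since the normal topology of $\calD'_\Gamma$ is generated by the two families $p_B(u)=\sup_{f\in B}|\langle u,f\rangle|$ (for $B$ bounded in $\calD(\Omega)$) and $||u||_{N,V,\chi}$, it suffices to bound $p_B(g)$ and $||g||_{N,V,\chi}$, for $g\in\calD(\Omega)$, by continuous seminorms of $\calD(\Omega)$. Because $\calD(\Omega)$ carries the strict inductive limit topology of the Fr\'echet spaces $\calD_K(\Omega)$ of functions supported in a fixed compact $K$, it is in fact enough to produce such bounds on each $\calD_K(\Omega)$, whose topology is given by the seminorms $g\mapsto \sup_{|\alpha|\le m}\sup_x|\partial^\alpha g(x)|$.

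For the first family, I would invoke the description of bounded sets in $\calD(\Omega)$: any bounded $B$ is contained in some $\calD_{K'}(\Omega)$ and satisfies $\sup_{f\in B}\sup_x|f(x)|<\infty$. Since $\langle g,f\rangle=\int_\Omega g f\,dx$ for test functions, the elementary estimate $|\langle g,f\rangle|\le |K'|\,(\sup_{f\in B}\sup_x|f|)\,\sup_{x}|g(x)|$ shows that $p_B(g)\le C_B\sup_x|g(x)|$, and $\sup_x|g|$ is a continuous seminorm of $\calD(\Omega)$. This is nothing but the (well known) continuity of the canonical injection of $\calD(\Omega)$ into the strong dual $\calD'(\Omega)$, and it is exactly the property one needs, since the seminorms $p_B$ were added precisely to make $\calD'_\Gamma$ finer than $\calD'(\Omega)$.

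For the second family, the key ingredient is a Paley--Wiener-type estimate. Writing $h=g\chi\in\calD(\Omega)$, which is supported in the fixed compact $K=\supp\chi$, integration by parts gives $|k^\alpha\hat h(k)|=|\widehat{\partial^\alpha h}(k)|\le |K|\sup_x|\partial^\alpha h(x)|$. Combining this with an inequality of the form $(1+|k|)^N\le C_N\sum_{|\alpha|\le N}|k^\alpha|$ and expanding $\partial^\alpha(g\chi)$ by the Leibniz rule, I obtain a bound $\sup_{k\in\bbR^n}(1+|k|)^N|\widehat{g\chi}(k)|\le C\sum_{|\beta|\le N}\sup_x|\partial^\beta g(x)|$, with $C$ depending only on $N$, $\chi$ and $K$ but not on $g$. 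In particular $||g||_{N,V,\chi}$ is dominated by this (the restriction to $V$ only shrinks the supremum, so it plays no role), and the right-hand side is a continuous seminorm of $\calD(\Omega)$.

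Putting the two estimates together shows that every defining seminorm of the normal topology restricts to a continuous seminorm on $\calD(\Omega)$, whence the injection is continuous. I do not expect a genuine obstacle here: the computation is routine, and the only points requiring a little care are to keep the constant in the Fourier estimate uniform in $g$ (which is automatic once $\chi$ is fixed and one works inside a single $\calD_K(\Omega)$) and to use the standard characterization of bounded subsets of $\calD(\Omega)$ in the first estimate.
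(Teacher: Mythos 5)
Your proof is correct and follows essentially the same route as the paper's: reduce to each $\calD(K)$ via the inductive limit topology, bound $p_B$ using the uniform sup bound and common support of a bounded set $B$, and bound $\|\cdot\|_{N,V,\chi}$ by an integration-by-parts Fourier estimate combined with the Leibniz rule. The only cosmetic difference is that the paper inserts $(1-\Delta)^N$ (yielding a bound by $\pi_{2N,K}(f)$) while you use monomials $k^\alpha$, $|\alpha|\le N$ (yielding the slightly sharper $\pi_{N,K}(f)$); both are the same standard argument.
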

\begin{proof}
We have to prove that the identity map
$\calD(\Omega)\hookrightarrow \calD'_\Gamma$ is continuous.
Because of the inductive limit topology of $\calD(\Omega)$,
we must show that, for any compact subset $K$ of $\Omega$,  the map
$\calD(K)\hookrightarrow \calD'_\Gamma$ is continuous
for the topology of $\calD(K)$~\cite[p.~66]{Schwartz-66}.
Recall that $\calD(K)$ is the set of elements of $\calD(\Omega)$
whose support is contained in $K$.
Its topology is defined by the seminorms
$\pi_{m,K}(f)=\sup_{|\alpha|\le m} \sup_{x\in K}
|\partial^\alpha f(x)|$.

Continuity is proved
by showing that all the seminorms of 
$\calD'_\Gamma$ are bounded by seminorms of 
$\calD(K)$~\cite[p.~98]{Horvath}.
Let $B$ be a bounded set of $\calD(\Omega)$ and
$p_B(f)=\sup_{g\in B} |\langle f,g\rangle|$
with 
$\langle f,g\rangle=\int_K f(x) g(x) dx$.
The function $f(x)$ is bounded by $\pi_{0,K}(f)$
and all the
$g(x)$ in $B$ are bounded by a common number $M_0$
because $B$ is bounded~\cite[p.~69]{Schwartz-66}. Thus,
$p_B(f)\le |K| M_0 \pi_{0,K}(f)$, where $|K|$
is the volume of $K$.

We still must estimate the seminorms
$||f||_{N,V,\chi}=\sup_{k\in V} (1+|k|)^N
|\widehat{f\chi}(k)|$.
By using
$(1+|k|) \le \beta (1+|k|^2)$, with
$\beta=(1+\sqrt{2})/2$, we find
\begin{eqnarray*}
(1+|k|)^N |\widehat{f\chi}(k)|
 & \le &
\beta^N \Big| (1+|k|^2)^N
\int e^{ik\cdot x} f(x)\chi(x) dx \Big|
\\&\le &
\beta^N \Big|
\int e^{ik\cdot x} (1-\Delta)^N (f\chi)(x) dx \Big|
\end{eqnarray*}
We expand $(1-\Delta)^N=\sum_{i=0}^N \binom{N}{i}
(-\Delta)^i$ and we estimate 
each $|\Delta^i (f\chi)(x)|\le n^i\pi_{2N,K}(f\chi)$.
This gives us
$(1+|k|)^N |\widehat{f\chi}(k)|\le ((1+n)\beta)^N |K|\pi_{2N,K}(f\chi)$.
To calculate $\pi_{2N,K}(f\chi)$ we notice that,
for any multi-index $\alpha$ such that $|\alpha|\le m$, we have
\begin{eqnarray}
|\partial^\alpha (f\chi)| &\le &
\sum_{\beta\le \alpha} \binom{\alpha}{\beta}
|\partial^\beta f||\partial^{\alpha-\beta}\chi|
\le 
\sum_{\beta\le \alpha} \binom{\alpha}{\beta}
\pi_{m,K}(f)\pi_{m,K}(\chi)
\nonumber\\& \le &
 2^m \pi_{m,K}(f)\pi_{m,K}(\chi).
\label{pimfchi}
\end{eqnarray}
Thus,
\begin{eqnarray}
(1+|k|)^N|\widehat{f\chi}(k)| &\le & (4(n+1)\beta)^N |K|
\pi_{2N,K}(\chi)\pi_{2N,K}(f),
\label{ekNVchi}
\end{eqnarray}
with a bound independent
of $k$ and
$||f||_{N,V,\chi}\le C \pi_{2N,K}(f)$,
where $C=(4(n+1)\beta)^N |K| \pi_{2N,K}(\chi)$.
The proof that the identity is continuous is complete.
\end{proof}

It is now clear that $\calD'_\Gamma$ with its normal topology
is a normal space of distribution because 
$\calD(\Omega)$ is dense in $\calD'_\Gamma$ (since sequential
convergence for the weak and strong topologies of $\calD'(\Omega)$
are equivalent~\cite[p.~70]{Schwartz-66} and 
from H\"ormander's density result~\cite[p.~262]{HormanderI}).
From the general properties of normal spaces of distributions
we obtain:
\begin{prop}
\label{normalprop}
If we (temporarily) denote by $\calD_\Gamma$ the dual 
of $\calD'_\Gamma$, then
\begin{itemize}
\item[(i)] The restriction map induces an injection $\calD_\Gamma\hookrightarrow\calD'(\Omega)$~\cite[p.~259]{Horvath}
\item[(ii)] If $\calD_\Gamma$ is equipped with 
the strong topology $\beta(\calD_\Gamma,\calD'_\Gamma)$,
then the injection $\calD_\Gamma\hookrightarrow \calD'(\Omega)$
is continuous~\cite[p.~259]{Horvath}
\item[(iii)] If $\calD_\Gamma$ is equipped with the 
topology $\kappa(\calD_\Gamma,\calD'_\Gamma)$
of uniform convergence on the balanced, convex, compact
sets for the normal topology of $\calD'_\Gamma$
(also called Arens topology~\cite{Arens-47}), then
$\calD_\Gamma$ is a normal space of distributions~\cite[p.~259]{Horvath}
and the dual of $\calD_\Gamma$ is $\calD'_\Gamma$~\cite[p.~235]{Horvath}
\item[(iv)] A distribution $v\in \calD'(\Omega)$ belongs to
$\calD_\Gamma$ if and only if it is continuous on
$\calD(\Omega)$ for the topology induced by 
$\calD'_\Gamma$~\cite[p.~319]{Horvath}
\item[(v)] $\calD(\Omega)$ is dense in $\calD_\Gamma$ equipped with
  any topology compatible with duality~\cite[p.~10]{Schwartz-57}
\end{itemize}
\end{prop}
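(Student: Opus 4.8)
The plan is to deduce every assertion from the single fact, just established, that $\calD'_\Gamma$ equipped with its normal topology is a normal space of distributions: the continuous dense injection $i:\calD(\Omega)\to\calD'_\Gamma$ is provided by Lemma~\ref{continjlem} together with H\"ormander's density result, while the continuous injection $j:\calD'_\Gamma\to\calD'(\Omega)$ is built into the very definition of the normal topology. Once this is in hand, items (i)--(v) are instances of the general duality theory of normal spaces of distributions as developed by Schwartz and Horv\'ath, and the task reduces to checking that each cited general statement applies verbatim to our concrete spaces.

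For (i), I would observe that any $v\in\calD_\Gamma$ is a continuous linear form on $\calD'_\Gamma$, so $v\circ i$ is a continuous linear form on $\calD(\Omega)$, i.e. an element of $\calD'(\Omega)$; the assignment $v\mapsto v\circ i$ is injective precisely because $i$ has dense image, so a form vanishing on $i(\calD(\Omega))$ vanishes identically. This restriction map realizes $\calD_\Gamma$ as a subspace of $\calD'(\Omega)$. Item (ii) then asserts that this identification is topological when $\calD_\Gamma$ carries its strong topology, which is the standard continuity statement for the transpose of a continuous dense injection. For (iv), I would note that membership $v\in\calD_\Gamma$ means $v$ extends to a continuous form on all of $\calD'_\Gamma$; since $\calD(\Omega)$ is dense, such an extension exists and is unique exactly when $v$ is already continuous on $\calD(\Omega)$ for the topology induced by $\calD'_\Gamma$, which is the asserted criterion. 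Item (v) is Schwartz's theorem that, in any normal space of distributions, the test functions remain dense in the dual for every topology compatible with the duality.

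The substantive point is (iii), the near-reflexivity statement. Here I would invoke the Mackey--Arens theorem: among all locally convex topologies on $\calD_\Gamma$ yielding the dual pair $\langle\calD_\Gamma,\calD'_\Gamma\rangle$, the Arens topology $\kappa(\calD_\Gamma,\calD'_\Gamma)$ of uniform convergence on the balanced convex compact subsets of $\calD'_\Gamma$ lies between the weak and the Mackey topologies (compact sets being weakly compact), so it is precisely a topology for which the dual of $\calD_\Gamma$ is again $\calD'_\Gamma$. Combined with the continuous dense injection $\calD(\Omega)\to\calD_\Gamma$ inherited from the normal structure and the injection $\calD_\Gamma\to\calD'(\Omega)$ from (i)--(ii), this exhibits $(\calD_\Gamma,\kappa)$ itself as a normal space of distributions whose dual is $\calD'_\Gamma$.

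The main obstacle I anticipate is not any single computation but the bookkeeping of compatible topologies in (iii): one must be careful that the Arens topology, and not merely the weak or the strong topology, is the correct choice guaranteeing $(\calD_\Gamma)'=\calD'_\Gamma$, since for a non-reflexive pairing these need not coincide. Verifying that all hypotheses of the cited general theorems --- Hausdorffness, the density and continuity of both injections, and the compatibility of the chosen topology with the pairing --- are genuinely met for our specific spaces is where the care is required; once that is done, the proposition follows immediately.
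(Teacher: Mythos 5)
Your proposal is correct and takes essentially the same route as the paper: the paper gives no separate argument for this proposition, deducing all five items from the just-established fact that $\calD'_\Gamma$ with its normal topology is a normal space of distributions together with the cited general duality theorems of Horv\'ath and Schwartz, which is exactly your strategy. Your unpacking of those citations --- transpose continuity for (ii), the Mackey--Arens theorem for (iii), Hahn--Banach extension plus density for (iv) --- is sound and consistent with the references the paper points to.
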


We are now ready to prove
\begin{prop}
\label{firstdualprop}
The dual of $\calD'_\Gamma$ for its normal topology
is $\calE'_\Lambda$.
\end{prop}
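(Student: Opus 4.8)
The plan is to prove the two inclusions $\calE'_\Lambda \subseteq \calD_\Gamma$ and $\calD_\Gamma \subseteq \calE'_\Lambda$, where $\calD_\Gamma$ denotes the dual of $\calD'_\Gamma$ for the normal topology, as in Proposition~\ref{normalprop}. The first inclusion is immediate: by Lemma~\ref{pairinglem} every $v\in\calE'_\Lambda$ defines a continuous linear form $u\mapsto\langle u,v\rangle$ on $\calD'_\Gamma$, so $v\in\calD_\Gamma$, and the pairing being separating makes this assignment injective. All the work is in the reverse inclusion. Given $\alpha\in\calD_\Gamma$, Proposition~\ref{normalprop}(i) and (iv) let me identify $\alpha$ with a distribution $v\in\calD'(\Omega)$ whose restriction to the dense subspace $\calD(\Omega)$ is continuous for the topology induced by $\calD'_\Gamma$; by density $\alpha(u)=\langle u,v\rangle$ for all $u$ once I know $v\in\calE'_\Lambda$. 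Continuity on $\calD(\Omega)$ means there are finitely many seminorms and a constant $C$ with
\begin{eqnarray*}
|\langle v,f\rangle| &\le& C\Big(\sum_i p_{B_i}(f) + \sum_j ||f||_{N_j,V_j,\chi_j}\Big)
\end{eqnarray*}
for all $f\in\calD(\Omega)$, where the $B_i$ are bounded sets of $\calD(\Omega)$ and $(\supp\chi_j\times V_j)\cap\Gamma=\emptyset$. Everything must be extracted from this one estimate.

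First I would show that $v$ is compactly supported. Each bounded set $B_i$ of $\calD(\Omega)$ has its supports contained in a common compact set, and each $\chi_j$ is compactly supported; let $K_0$ be a compact set containing all these supports. If $\supp f$ is disjoint from $K_0$, then $fg=0$ for every $g\in B_i$, so $p_{B_i}(f)=0$, and $f\chi_j=0$, so $||f||_{N_j,V_j,\chi_j}=0$. The estimate then forces $\langle v,f\rangle=0$, hence $\supp v\subset K_0$ and $v\in\calE'(\Omega)$.

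The main step is the wavefront inclusion $\WF(v)\subset\Lambda$, i.e. that $(x_0;-k_0)\in\Gamma$ implies $(x_0;k_0)\notin\WF(v)$. The idea is to feed the test functions $f_k(x)=\chi(x)e^{ik\cdot x}\in\calD(\Omega)$ into the estimate, since $\widehat{\chi v}(k)=\langle v,f_k\rangle$. For the first family, $p_{B_i}(f_k)=\sup_{g\in B_i}|\widehat{g\chi}(k)|$ is fast decreasing in $k$ because $B_i$ is bounded in some $\calD(K)$. The cone seminorms give $||f_k||_{N_j,V_j,\chi_j}=\sup_{\eta\in V_j}(1+|\eta|)^{N_j}|\widehat{\chi\chi_j}(k+\eta)|$. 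Here I choose $\chi$ with $\chi(x_0)\neq0$ and support in a small ball around $x_0$: for the indices $j$ with $x_0\notin\supp\chi_j$ this makes $\chi\chi_j=0$ and kills the term. For the remaining indices $x_0\in\supp\chi_j$, the disjointness $(\supp\chi_j\times V_j)\cap\Gamma=\emptyset$ together with $(x_0;-k_0)\in\Gamma$ forces $-k_0\notin V_j$; since $V_j$ is a closed cone I can pick an open cone $W\ni k_0$ (a finite intersection over these $j$) with $\overline{-W}\cap V_j=\{0\}$.

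The crux is then the elementary separated-cones estimate: for $k\in W$ and $\eta\in V_j$ the directions of $k$ and $\eta$ are bounded away from antiparallel, so $|k+\eta|\ge c(|k|+|\eta|)$ for some $c>0$. Since $\widehat{\chi\chi_j}$ is rapidly decreasing, $(1+|\eta|)^{N_j}|\widehat{\chi\chi_j}(k+\eta)|\le C_M(1+|\eta|)^{N_j}(1+c|k|)^{-P}(1+c|\eta|)^{-(M-P)}$, and choosing $M-P>N_j+n$ makes the supremum over $\eta\in V_j$ bounded by $C_P(1+|k|)^{-P}$, which is fast decreasing in $k\in W$. Hence $\widehat{\chi v}$ is fast decreasing on the open cone $W\ni k_0$, so $(x_0;k_0)\notin\WF(v)$ and $v\in\calE'_\Lambda$. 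I expect this cone bookkeeping — splitting the finitely many seminorms according to whether $x_0\in\supp\chi_j$ and extracting the open cone $W$ — to be the delicate part; the two degenerate cases $\Gamma=\dotT^*\Omega$ (where $\calD_\Gamma=\calD(\Omega)$) and $\Gamma=\emptyset$ (where $\calD_\Gamma=\calE'(\Omega)$) are classical and can be treated separately as in Lemma~\ref{pairinglem}.
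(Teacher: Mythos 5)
Your proposal is correct and follows essentially the same route as the paper's proof: the easy inclusion via Lemma~\ref{pairinglem}, identification of a continuous functional with a distribution via Proposition~\ref{normalprop}, the same compact-support argument, and the wavefront-set bound obtained by feeding the test functions $\chi e_k$ into the continuity estimate, splitting the cone seminorms according to whether the supports meet, and invoking the separated-cones inequality. The only difference is bookkeeping: the paper proves the global inclusion $\WF(\lambda)\subset\Lambda_M=\cup_i\,\supp\chi_i\times(-V_i)$ by bounding the seminorms $||\lambda||_{N,W,\psi}$, whereas you exclude the points of $\Gamma'$ from $\WF(v)$ one at a time; the underlying estimates are identical.
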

\begin{proof}
We already proved that $\calE'_\Lambda \hookrightarrow\calD_\Gamma$
because, by lemma~\ref{pairinglem}, any $v\in\calE'_\Lambda$ defines
a continuous map $\calD'_\Gamma\to\bbK$ 
(for the H\"ormander and thus for the normal topology) 
and the injectivity is obvious by density of $\mathcal{D}(\Omega)$ in 
$\calD'_\Gamma$.
It remains to show that any continuous linear map 
$\lambda:\calD'_\Gamma\to\bbK$ defines a distribution
in $\calE'_\Lambda$. By item~(i) of proposition~\ref{normalprop},
we know that $\lambda$ is a distribution.
We first show that
this distribution is compactly supported, then that its
wavefront set is included in $\Lambda$.

Since the map $\lambda$ is continuous for the normal topology
of $\calD'_\Gamma$, there exists
a finite number of seminorms $p_i$ and a constant
$M$ such that $|\lambda(u)|\le M 
\sup_i p_i(u)$ for all $u$ in $\calD'_\Gamma$~\cite[p.~98]{Horvath}.
In other words, there is a bounded set $B$ in $\calD(\Omega)$ 
(one is enough because $\sup_i p_{B_i}\le p_B$ where
$B=\cup_i B_i$), and there are $r$ integers $N_i$,
$r$ functions $\chi_i$ in $\calD(\Omega)$ and $r$
closed cones 
$V_i$ such that
$\supp\chi_i\times V_i \cap \Gamma=\emptyset$ and
\begin{eqnarray*}
|\lambda(u)| \le M\sup(p_B(u),||u||_{N_1,V_1,\chi_1},
  \dots,||u||_{N_r,V_r,\chi_r}).
\end{eqnarray*}

We first show that $\lambda$ is a compactly supported
distribution. Indeed, $B$ is
a bounded set of $\calD(\Omega)$ if and only if there is a compact
subset $K$ of $\Omega$ and  constants $M_m$ such that
all $g\in B$ are supported on $K$ and 
$\pi_{m,K}(g)\le M_m$~\cite[p.~68]{Schwartz-66}.
According to the definition of the support of a
distribution~\cite[p.~42]{HormanderI}, 
$\langle u,g\rangle=0$ if 
$\supp u\cap \supp g=\emptyset$.
Thus $p_B(u)=\sup_{g\in B} |\langle u,g\rangle|=0$
if $\supp u\cap K=\emptyset$. Similarly,
$||u||_{N_i,V_i,\chi_i}=0$ if $\supp u\cap \supp \chi_i=
\emptyset$. Finally, for any $f\in \calD(\Omega)$ whose support
does not meet $K_\lambda=\cup_i \supp\chi_i \cup K$, we have
$|\lambda(f)|= 0$. This implies that the support
of $\lambda$ is included in the compact set 
$K_\lambda$~\cite[p.~42]{HormanderI}.

Then we show that
$\WF(\lambda)\subset \Lambda_M=\cup_{i=1}^M \supp\chi_i\times (-V_i)$.
We fix  an integer $N$,
a function $\psi\in \calD(\Omega)$ and a closed cone $W$
such that $\supp \psi\times W\cap \Lambda_M=\emptyset$
and we define $f_k=(1+|k|)^N \psi e_k$, where
$e_k(x)=e^{ik\cdot x}$. Hence,
\begin{eqnarray*}
||\lambda||_{N,W,\psi} &=& 
\sup_{k\in W} (1+|k|)^N |\widehat{\lambda\psi}(k)|
=\sup_{k\in W} |\lambda(f_k)|,
\end{eqnarray*}
where we used the fact that the Fourier transform of the
compactly supported distribution $\lambda\psi$
is $\lambda(\psi e_k)$~\cite[p.~165]{HormanderI}.
Since, by continuity, $|\lambda(f_k)|\le M\sup_i p_i(f_k)$, 
where $p_0=p_B$ and $p_i=||\cdot||_{N_i,V_i,\chi_i}$,
it suffices to bound each $\sup_{k\in W}p_i(f_k)$.

We first estimate $p_B(f_k)$. 
Since $B$ is a bounded set in $\calD(\Omega)$, the support of all $g\in B$ 
is contained in a common compact set $K$~\cite[p.~88]{Schwartz-66} and
\begin{eqnarray*}
|\langle f_k,g\rangle| &=& 
(1+|k|)^N |\langle \psi e_k,g\rangle|
= (1+|k|)^N |\widehat{\psi g}(k)| \\
& \le & (4(n+1)\beta)^N |K| \pi_{2N,K}(g)\pi_{2N,K}(\psi),
\end{eqnarray*}
where we used eq.~(\ref{ekNVchi}).
Moreover, all the seminorms of elements of $B$ are 
bounded~\cite[p.~88]{Schwartz-66}.
Thus, there is a number $M_{2N}$ such that
$\pi_{2N,K}(g)\le M_{2N}$ for all $g\in B$
and we obtain
$|\langle f_k,g\rangle| \le (8\beta)^N |K| \pi_{2N,K}(\psi)
M_{2N}$.  Since this bound is independent of $k$,
 we obtain our first bound 
$\sup_{k\in \mathbb{R}^n}p_B(f_k)<\infty$.

Consider now the second type of seminorms and 
calculate $p_i(f_k)=||f_k||_{N_i,V_i,\chi_i}$.
We have two cases:

(i) If $ (\supp\psi\cap\supp\chi_i)=\emptyset$,
then $\sup_{k\in \mathbb{R}^n}p_i(f_k)=0$ and we are done.

(ii) If $\supp\psi\cap\supp\chi_i\not=\emptyset$,
we want to estimate
\begin{eqnarray*}
||f_k||_{N_i,V_i,\chi_i} &=& 
\sup_{q\in V_i} (1+|q|)^{N_i} |\widehat{f_k\chi_i}(q)|=
\sup_{q\in V_i} (1+|q|)^{N_i}(1+|k|)^N |\widehat{e_k\psi\chi_i}(q)|.
\end{eqnarray*}
We have
$\widehat{e_k\psi\chi_i}(q)=\langle e_k e_q,\psi\chi_i\rangle
=\widehat{\psi\chi_i}(k+q)$.
Since we chose $W$ such that $(-V_i)\cap W=\emptyset$, 
by compactness of the intersection of $V_i$ and $W$ with the unit sphere, 
there is a $1\geq c>0$ such that $|k-q|/|k|>c$ and 
$|q-k|/|q|>c$ for all $k\in W$ and $q\in -V_i$.
We thus deduce:
\begin{eqnarray*}
||f_k||_{N_i,V_i,\chi_i} &\le &
 c^{-N-N_i} \sup_{q\in V_i}
  (1+|k+q|)^{N+N_i} \widehat{\psi\chi_i}(k+q).
\end{eqnarray*}
The function $\psi\chi_i$ is smooth and compactly supported.
We can use eq.~(\ref{ekNVchi}) again to show that
the right hand side of this inequality is bounded 
uniformly in $k$.

This concludes the proof of  
$\WF(\lambda)\subset \Lambda_M$.
Finally, $\supp\chi_i\times V_i\cap\Gamma=\emptyset$
implies $\supp\chi_i\times (-V_i)\subset\Lambda$
and $\Lambda_M\subset\Lambda$.
Thus, $\WF(\lambda)\subset \Lambda$ and 
since $\lambda$ is compactly
supported we have $\lambda\in \calE'_\Lambda$.
\end{proof}
In the following, we shall use $\calE'_\Lambda$
(instead of $\calD_\Gamma$) to denote the dual of 
$\calD'_\Gamma$.
Note that a similar proof shows that 
$\calE'_\Lambda$ is the topological dual of 
$\calD'_\Gamma$ equipped with the H\"ormander topology.
Indeed, lemma~\ref{pairinglem} shows in fact that the
pairing is continuous for the H\"ormander topology
because $p_{\psi_i f_i}$ in Eq.~\eqref{estuv} is a
seminorm of the weak topology of $\calD'(\Omega)$,
and the proof of the reverse inclusion just
requires to replace $p_B$ by a finite set of
$p_{f_i}$.

\section{Topologies on  $\calE'_\Lambda$}
Our purpose in this section is to show that,
if $(\calE'_\Lambda,\beta)$ denotes the space $\calE'_\Lambda$
equipped with the strong 
$\beta(\calE'_\Lambda,\calD'_\Gamma)$ topology,
then the topological dual of $(\calE'_\Lambda,\beta)$ 
 is
$\calD'_\Gamma$. This implies immediately that 
$\calD'_\Gamma$ is semi-reflexive 
and $\calE'_\Lambda$ is barrelled.
However, we shall not work directly with the strong
topology $\beta(\calE'_\Lambda,\calD'_\Gamma)$. It will
be convenient (especially to show that 
$\calE'_\Lambda$ is nuclear and $\calD'_\Gamma$ is complete) 
to define a topology on $\calE'_\Lambda$ as an inductive limit. 
Then, we prove that the inductive topology
is compatible with duality and we conclude by showing
that this inductive topology is equivalent to
the strong topology.

\subsection{Inductive limit topology on $\calE'_\Lambda$}
We want to define a topology on $\calE'_\Lambda$
as the topological inductive limit of some topological
spaces $E_\ell$. We shall first determine the vector
spaces $E_\ell$,  then we equip them with a topology.

Let us express $\calE'_\Lambda$ as the union
of increasing spaces  $E_\ell$.
Inspired by the work of Brunetti and coll.~\cite{Brunetti-09},
we take $E_\ell$ to be a set of distributions whose wavefront
set is contained in some closed cone, that we denote by $\Lambda_\ell$.
To determine $\Lambda_\ell$ we notice that $\Lambda$ is an open set and
the projection $\pi_i$ of a product
space into each of its coordinate spaces is
open~\cite[p.~90]{Kelley}. Thus, $\pi_1(\Lambda)$
is an open subset of $\Omega$. On the other hand,
the singular support of $v\in\calE'_\Lambda$ (i.e. 
$\Sigma(v)=\pi_1(\WF(v))$~\cite[p.~254]{HormanderI})
is closed~\cite[p.~108]{Friedlander}. It is even compact
because it is a closed subset of the support of 
$v$, which is compact. Hence, if we exhaust $\pi_1(\Lambda)$
by an increasing sequence of compact sets $K_\ell$ we know that, for any
$v\in \calE'_\Lambda$, $\Sigma(v)$ will be contained
in $K_\ell$ for $\ell$ large enough (because $\Sigma(v)\subset\pi_1(\Lambda)$
implies that the distance between the compact set $\Sigma(v)$ 
and the closed set $\pi_1(\Lambda)^c$ is strictly positive).
Let us define $K_\ell$ to be the set of points
that are at a distance smaller than $\ell$ from the origin
and at a distance larger than $1/\ell$ from the
boundary of $\Omega$ and from the boundary of 
$\pi_1(\Lambda)$:
$K_\ell=\{x\in \Omega\telque |x|\le \ell, d(x,\Omega^c)\ge 1/\ell,
d(x,\partial\pi_1(\Lambda))\ge 1/\ell\}$, where
$\partial\pi_1(\Lambda)$ is the boundary of $\pi_1(\Lambda)$ and
$d(x,A)=\inf\{|x-y|, y\in A\}$ is the distance between
a point $x$ and a subset $A$ of $\Omega$.
If $A$ is empty, we consider that $d(x,A)=+\infty$.
The sets $K_\ell$ are obviously compact
(they are intersections of closed sets with a compact ball),
$K_\ell\subset K_{\ell+1}$ and
$\pi_1(\Lambda)=\cup_{\ell=1}^\infty K_\ell$. Indeed,
$\Omega^c$ is closed because $\Omega$ is open and 
$\partial\pi_1(\Lambda)$ is a closed set disjoint from
$\pi_1(\Lambda)$ because $\pi_1(\Lambda)$ is open~\cite[p.~46]{Kelley}.
Thus, any point of $\pi_1(\Lambda)$ is at a finite distance 
$\epsilon_1$ from $\Omega^c$, $\epsilon_2$ from 
$\partial\pi_1(\Lambda)$ and $M$ from zero.
Then $x\in K_\ell$ for all integers $\ell$ greater
than $1/\epsilon_1$, $1/\epsilon_2$ and $M$.

We can now build the closed cones $\Lambda_\ell$, that will be subsets
of $\pi_1^{-1}(K_\ell)$ at a finite distance from $\Gamma'$:
$\Lambda_\ell=\{(x;k)\in \dotT^*\Omega\telque
x\in K_\ell, d\big((x;k/|k|),\Gamma'\big)\ge 1/\ell\}$.
This set is clearly a cone because it is defined in terms
of $k/|k|$ and it is closed in $\dotT^*\Omega$ because
it is the intersection of two close sets:
$\pi_1^{-1}(K_\ell)$ and $\{(x;k)\in \dotT^*\Omega\telque
d\big((x;k/|k|),\Gamma'\big)\ge 1/\ell\}$.
The first set is closed because
$K_\ell$ is compact and $\pi_1$ is continuous  and 
the second set is closed because the function
$(x;k)\mapsto d\big((x;k/|k|),\Gamma'\big)$ is continuous on
$\dotT^*\Omega$.

For some proofs, it will be useful for the support of the
distributions to be contained in a fixed compact set. Therefore,
we also consider an increasing sequence of 
compact sets $\{L_\ell\}_{\ell\in \bbN}$ exhausting $\Omega$
and such that $L_\ell$ is a compact neighborhood
of $K_\ell \cup L_{\ell-1}$ ($L_0=\emptyset$). Finally, we define 
$E_\ell=\calE'_{\Lambda_\ell}(L_\ell)$ to be the set
of distributions in $\calE'(\Omega)$ whose support is contained
in $L_\ell$ and whose wavefront set is contained in $\Lambda_\ell$. 
Note that $E_\ell$ will be equipped with the topology induced
by $\calD'_{\Lambda_\ell}$ as a closed subset
(it is closed because, by definition of the support
of a distribution, $E_\ell$ is the intersection of the
kernel of all continuous maps $u\mapsto \langle u,\phi\rangle$
where $\supp\phi\subset L_\ell^c$).

This is an increasing sequence of spaces exhausting
$\calE'_\Lambda$. It is increasing because
$L_\ell\subset L_{\ell+1}$ and 
$\Lambda_\ell\subset \Lambda_{\ell+1}$ imply
$\calE'_{\Lambda_\ell}(L_\ell)\subset
\calE'_{\Lambda_{\ell+1}}(L_{\ell+1})$.
To show that it is exhausting, consider
any $v\in \calE'_\Lambda$. 
Since the support of $v$ is compact, it is contained 
in some $L_{\ell_0}$ and then 
in $L_\ell$ for all $\ell \ge \ell_0$.
To show that $\WF(v)\subset \Lambda_{\ell_1}$
for some $\ell_1$,
consider the set $S_v=\{(x;k)\telque |k|=1\text{ and }
(x;k)\in WF(v)\}$. It is compact because it is closed
and bounded (the support of $v$ being compact).
Since $\WF(v)\subset \Lambda$ and $\Lambda\cap\Gamma'=\emptyset$,
we have $S_v\cap \Gamma'=\emptyset$.
There is a number $\delta>0$ such that
$d\big((x;k),\Gamma'\big) > \delta$ for all $(x;k)\in S_v$
because $S_v$ is compact and $\Gamma'$ is closed.
Thus, $S_v \subset \Lambda_\ell$ for $\ell > 1/\delta$.
Since both $S_v$ and $\Lambda_\ell$ are 
cones we have $\WF(v) \subset \Lambda_\ell$.
Finally, $v\in E_\ell$ for all $\ell$ larger
than $\ell_0$ and $1/\delta$.

We obtained the first part of
\begin{lem}
If $\Lambda$ is an open cone in $\dotT^*\Omega$, then
\begin{eqnarray*}
\calE'_\Lambda &=& \bigcup_{\ell=1}^\infty E_\ell,
\end{eqnarray*}
where $E_\ell=\calE'_{\Lambda_\ell}(L_\ell)$ is the 
set of distributions in $\calE'(\Omega)$
with a wavefront set contained in $\Lambda_\ell$
and a support contained in $L_\ell$.
If $E_\ell$ is equipped with the topology induced
by $\calD'_{\Lambda_\ell}$ (with its normal topology) we 
define on $\calE'_\Lambda$ the topological inductive limit
\begin{eqnarray*}
\calE'_\Lambda &=& \lim_{\to} E_\ell.
\end{eqnarray*}
This topology will be called the \emph{inductive topology}
on $\calE'_\Lambda$.
\end{lem}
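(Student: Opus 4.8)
The plan is to prove the asserted set equality by a two-sided inclusion, and then to verify that the bonding maps of the inductive system are continuous, so that the topological inductive limit is well posed.

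First I would prove $\bigcup_\ell E_\ell \subseteq \calE'_\Lambda$. Since each $E_\ell$ consists of distributions supported in $L_\ell\subset\Omega$ with wavefront set contained in $\Lambda_\ell$, it suffices to check $\Lambda_\ell\subset\Lambda$ for every $\ell$. Indeed, if $(x;k)\in\Lambda_\ell$ then $d\big((x;k/|k|),\Gamma'\big)\ge 1/\ell>0$; as $\Gamma'$ is closed this forces $(x;k/|k|)\notin\Gamma'$, and since $\Gamma'$ is a cone, $(x;k)\notin\Gamma'$, i.e.\ $(x;k)\in\Lambda$. Hence $E_\ell\subset\calE'_\Lambda$ for all $\ell$.

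The reverse inclusion $\calE'_\Lambda\subseteq\bigcup_\ell E_\ell$ is precisely the exhaustion argument already assembled in the paragraphs preceding the statement, which I would record as follows. Given $v\in\calE'_\Lambda$, compactness of $\supp v$ yields $\supp v\subset L_\ell$ for all large $\ell$. For the wavefront set I set $S_v=\{(x;k)\telque |k|=1,\ (x;k)\in\WF(v)\}$, which is compact and, because $\WF(v)\subset\Lambda=(\Gamma')^c$, disjoint from the closed cone $\Gamma'$. Then $\delta:=\inf_{(x;k)\in S_v} d\big((x;k),\Gamma'\big)>0$, so $S_v\subset\Lambda_\ell$ whenever $\ell>1/\delta$; by conic invariance of both $\WF(v)$ and $\Lambda_\ell$ this gives $\WF(v)\subset\Lambda_\ell$. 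Choosing $\ell$ large enough to meet both requirements places $v$ in $E_\ell$, completing the equality. Finally I would justify the inductive limit itself by showing the inclusions $E_\ell\hookrightarrow E_{\ell+1}$ are continuous. From $\Lambda_\ell\subset\Lambda_{\ell+1}$ it follows that any triple $(N,V,\chi)$ with $(\supp\chi\times V)\cap\Lambda_{\ell+1}=\emptyset$ also satisfies $(\supp\chi\times V)\cap\Lambda_\ell=\emptyset$, so the seminorms $||\cdot||_{N,V,\chi}$ defining the normal topology of $\calD'_{\Lambda_{\ell+1}}$ form a subfamily of those of $\calD'_{\Lambda_\ell}$ (the seminorms $p_B$ being common to both). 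Thus the topology induced on $E_\ell$ from $\calD'_{\Lambda_{\ell+1}}$ is coarser than the one induced from $\calD'_{\Lambda_\ell}$, the inclusion is continuous, and since each $E_\ell$ is a closed subspace of the Hausdorff space $\calD'_{\Lambda_\ell}$, the limit $\lim_\to E_\ell$ is well defined.

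The hard part will be securing the uniform positive distance $\delta$ in the wavefront-set step: everything there hinges on $S_v$ being compact (via compactness of $\supp v$), so that the continuous function $d(\cdot,\Gamma')$, strictly positive on $S_v$, attains a positive minimum. The remaining inclusions and the continuity of the bonding maps are then routine consequences of the cone containments $\Lambda_\ell\subset\Lambda_{\ell+1}\subset\Lambda$.
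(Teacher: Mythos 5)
Your proof is correct and follows essentially the same route as the paper: the set equality is obtained via compactness of $S_v$ and its positive distance $\delta$ to the closed cone $\Gamma'$ (plus compactness of $\supp v$ for the $L_\ell$ condition), and the inductive limit is well posed because the seminorms $\|\cdot\|_{N,V,\chi}$ of $\calD'_{\Lambda_{\ell+1}}$ form a subfamily of those of $\calD'_{\Lambda_\ell}$, making the injections $E_\ell\hookrightarrow E_{\ell+1}$ continuous. The only point to watch is that $S_v\subset\Lambda_\ell$ also requires $\pi_1(S_v)\subset K_\ell$ (the definition of $\Lambda_\ell$ demands $x\in K_\ell$, not merely the distance condition to $\Gamma'$), which holds for $\ell$ large since the compact singular support $\Sigma(v)$ sits inside the exhaustion $\pi_1(\Lambda)=\bigcup_\ell K_\ell$ --- a step the paper condenses in exactly the same way and which your reference to the preceding paragraphs implicitly supplies.
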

\begin{proof}
The inductive limit of $E_\ell$ defines
a topology on $\calE'_\Lambda$
iff the injections $E_\ell\hookrightarrow E_{\ell+1}$
are continuous~\cite[p.~221]{Kothe-I}.
Since $E_\ell\subset \calD'_{\Lambda_\ell}$,
we can equip $E_\ell$ with the topology induced by
$\calD'_{\Lambda_\ell}$, which is 
defined by the seminorms $p_B(v)$ for all bounded sets
$B$ of $\calD(\Omega)$ and $||\cdot||_{N,V\chi}$, where 
$\supp\chi\times V\cap \Lambda_\ell=\emptyset$.
We prove that $E_\ell\hookrightarrow E_{\ell+1}$ is continuous
by showing that $E_\ell$ has more seminorms than $E_{\ell+1}$.
We have $\Lambda_\ell\subset\Lambda_{\ell+1}$.
Thus, $\Lambda_\ell^c \supset \Lambda_{\ell+1}^c$,
$\supp\chi\times V\cap \Lambda_\ell=\emptyset$
if $\supp\chi\times V\cap \Lambda_{\ell+1}=\emptyset$
and all the seminorms $||v||_{N,V,\chi}$ 
on $\calE'_{\Lambda_{\ell+1}}$ are also seminorms on
$\calE'_{\Lambda_\ell}$.
The seminorms $p_B$ are the same for 
$\calE'_{\Lambda_{\ell+1}}$ and
$\calE'_{\Lambda_\ell}$ because the sets $B$
are identical (i.e. the bounded sets of $\calD(\Omega)$).
\end{proof}

This inductive limit is not
strict if the open cone $\Lambda$ is not closed.
Indeed, if the inductive limit were strict, then the
Dieudonn\'e-Schwartz theorem~\cite[p.~161]{Horvath} would 
imply that each bounded set of $\calE'_\Lambda$ is included
and bounded in an $E_\ell$, which is wrong
when $\Lambda$ is not both open and closed, as we shall
prove in section~\ref{Ebornosect}.

\subsection{Duality of the inductive limit}
In this section, we show that the inductive topology on $\calE'_\Lambda$
is compatible with the pairing:
\begin{prop}
\label{inductprop}
The topological dual of $\calE'_\Lambda$ equipped with its
inductive topology is $\calD'_\Gamma$.
\end{prop}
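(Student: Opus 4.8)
The plan is to exploit the defining property of the inductive limit topology: a linear form on $\calE'_\Lambda=\lim_{\to}E_\ell$ is continuous if and only if its restriction to every $E_\ell$ is continuous \cite{Kothe-I}. Since $E_\ell$ carries the topology induced by $\calD'_{\Lambda_\ell}$ and $\Lambda_\ell$ is a closed cone, Proposition~\ref{firstdualprop} applies verbatim with $\Gamma$ replaced by $\Lambda_\ell$ and identifies the dual of $\calD'_{\Lambda_\ell}$ with $\calE'_{(\Lambda_\ell')^c}$. Two geometric facts will drive the whole argument: first, $\Gamma\subset(\Lambda_\ell')^c$ for every $\ell$ (if $(x;k)\in\Gamma$ and $(x;-k)\in\Lambda_\ell\subset\Lambda=(\Gamma')^c$, then $(x;k)\notin\Gamma$, a contradiction); and second, since $\Lambda'=\Gamma^c$ and $\bigcup_\ell\Lambda_\ell=\Lambda$, one has $\bigcap_\ell(\Lambda_\ell')^c=(\bigcup_\ell\Lambda_\ell')^c=(\Lambda')^c=\Gamma$.

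For the inclusion $\calD'_\Gamma\subset(\calE'_\Lambda)'$ I would fix $u\in\calD'_\Gamma$ and, for each $\ell$, choose $\varphi_\ell\in\calD(\Omega)$ equal to $1$ on a neighbourhood of $L_\ell$. Then $\varphi_\ell u$ is compactly supported with $\WF(\varphi_\ell u)\subset\WF(u)\subset\Gamma\subset(\Lambda_\ell')^c$, so $\varphi_\ell u\in\calE'_{(\Lambda_\ell')^c}$, the dual of $\calD'_{\Lambda_\ell}$. Hence $v\mapsto\langle v,\varphi_\ell u\rangle$ is continuous on $\calD'_{\Lambda_\ell}$, and by the consistency of the pairing (Lemma~\ref{pairinglem}) it coincides with $v\mapsto\langle u,v\rangle$ on $E_\ell$, since every $v\in E_\ell$ has support in $L_\ell$ where $\varphi_\ell=1$. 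Thus $\langle u,\cdot\rangle$ is continuous on each $E_\ell$, hence on $\calE'_\Lambda$.

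The substantive direction is $(\calE'_\Lambda)'\subset\calD'_\Gamma$. Given a continuous $\phi$ on $\calE'_\Lambda$, I would first recover an underlying distribution: restricting $\phi$ to $\calD(\Omega)$ is legitimate because $\calD(K)\hookrightarrow E_\ell\hookrightarrow\calE'_\Lambda$ is continuous for $K\subset L_\ell$ (the analogue of Lemma~\ref{continjlem} for $\Lambda_\ell$), so $f\mapsto\phi(f)$ defines $u\in\calD'(\Omega)$. To locate $\WF(u)$ I would extend each $\phi|_{E_\ell}$ by the Hahn--Banach theorem to a continuous form on $\calD'_{\Lambda_\ell}$ and represent it, via Proposition~\ref{firstdualprop}, by some $u_\ell\in\calE'_{(\Lambda_\ell')^c}$. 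On $\calD(L_\ell^\circ)$ one has $\langle u_\ell,f\rangle=\phi(f)=\langle u,f\rangle$, so $u_\ell=u$ on $L_\ell^\circ$ and therefore $\WF(u)$ restricted to $L_\ell^\circ$ is contained in $\WF(u_\ell)\subset(\Lambda_\ell')^c$. Any $(x_0;k_0)\in\WF(u)$ satisfies $x_0\in L_\ell^\circ$ for all large $\ell$, hence $(x_0;k_0)\in(\Lambda_\ell')^c$ for all large $\ell$; since these cones decrease to $\Gamma$, we get $(x_0;k_0)\in\Gamma$, i.e. $u\in\calD'_\Gamma$. Finally, for $v\in E_\ell$ one has $v\in E_{\ell+1}$ with $\supp v\subset L_{\ell+1}^\circ$, where $u=u_{\ell+1}$, so $\phi(v)=\langle u_{\ell+1},v\rangle=\langle u,v\rangle$; as $\ell$ is arbitrary, $\phi=\langle u,\cdot\rangle$.

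The hard part is this second inclusion, specifically the passage from the family of local representatives $u_\ell$ to a single global $u\in\calD'_\Gamma$: the $u_\ell$ are only determined up to the Hahn--Banach extension, so one must verify that they agree with $u$ on the interiors $L_\ell^\circ$ and that the nested cones $(\Lambda_\ell')^c$ collapse exactly onto $\Gamma$. The identity $\bigcap_\ell(\Lambda_\ell')^c=\Gamma$, resting on $\bigcup_\ell\Lambda_\ell=\Lambda$ and $\Lambda'=\Gamma^c$, is the crucial input. If a route avoiding Hahn--Banach is preferred, the continuity estimate $|\phi(v)|\le M\sup_i(p_B(v),||v||_{N_i,V_i,\chi_i})$ on $E_\ell$ can be fed the test functions $f_k=(1+|k|)^N\psi\,e_k$ directly, reproducing the computation in the proof of Proposition~\ref{firstdualprop} to bound $||u||_{N,W,\psi}$ and conclude $\WF(u)\subset\Gamma$ as $\ell\to\infty$.
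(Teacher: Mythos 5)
Your overall strategy is correct but genuinely different from the paper's. The paper proves the inclusion $\calD'_\Gamma\subset(\calE'_\Lambda)'$ by redoing the Eskin-type estimates of Lemma~\ref{pairinglem} on each $E_\ell$ (with the roles of $u$ and $v$, $\alpha_j$ and $\beta_j$ exchanged, so that the bound involves seminorms of $\calD'_{\Lambda_\ell}$), and proves the converse by restricting $\lambda$ to $\calD(\Omega)$ and feeding the functions $f_k=(1+|k|)^N\psi e_k$ into the continuity bound, exactly as in Proposition~\ref{firstdualprop}. You instead reduce both inclusions to Proposition~\ref{firstdualprop} applied with $\Lambda_\ell$ in place of $\Gamma$: localization $u\mapsto\varphi_\ell u$ together with $\Gamma\subset(\Lambda_\ell')^c$ for the easy direction, and Hahn--Banach extension of $\phi|_{E_\ell}$ to $\calD'_{\Lambda_\ell}$, locality of the wavefront set, and the identity $\bigcap_\ell(\Lambda_\ell')^c=\Gamma$ for the hard one. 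Your two geometric inputs are correct ($\Lambda_\ell\subset\Lambda=(\Gamma')^c$ gives $\Gamma\subset(\Lambda_\ell')^c$, and $\bigcup_\ell\Lambda_\ell=\Lambda$ with $\Lambda'=\Gamma^c$ gives the intersection formula), and your fallback remark at the end is precisely the paper's argument. What your route buys is economy: no estimate is repeated. What it costs is the point below.

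The step you attribute to Lemma~\ref{pairinglem} --- the ``consistency of the pairing'' --- is not contained in that lemma, and it is the crux of your reduction. Lemma~\ref{pairinglem} gives well-definedness (independence of the cutoff) and continuity of the pairing in its \emph{first} (i.e. $\calD'_\Gamma$-) slot only. You need more: for a compactly supported $a$ with $\WF(a)\subset\Gamma$ (namely $a=\varphi_\ell u$, or $a$ a cutoff of $u_{\ell+1}=u$ near $\supp v$) and $v\in E_\ell$, the number $\langle v,a\rangle$ computed in the $(\calD'_{\Lambda_\ell},\calE'_{(\Lambda_\ell')^c})$ duality must equal $\langle a,v\rangle$ computed in the $(\calD'_\Gamma,\calE'_\Lambda)$ duality. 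This symmetry cannot be extracted from Lemma~\ref{pairinglem} by density alone: approximating either argument by test functions lets you pass to the limit in one pairing (first-slot continuity) but not in the other, since continuity in the \emph{second} slot, for the topology induced by $\calD'_{\text{cone}}$ on compactly supported elements, is exactly the content of the estimates the paper performs and that your route is designed to avoid; a dominated-convergence argument also fails because $\widehat{v\varphi}$ is only polynomially bounded in the bad directions. The fact is nonetheless true: both numbers equal $\widehat{a v}(0)$ for H\"ormander's product $av$, which is commutative and local (this is the theorem of \cite[p.~267]{HormanderI} already invoked in the proof of Lemma~\ref{pairinglem}), or alternatively one can appeal to the Grigis--Sj\"ostrand uniqueness-of-extension statement quoted in Section~2.2. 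So the gap is fillable by a correct citation and a short argument identifying the cutoff formulas with $\widehat{av}(0)$, but as written the justification points to a lemma that does not contain the needed fact, and this must be repaired for the proof to stand.
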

\begin{proof}
We first show that $\calD'_\Gamma \hookrightarrow (\calE'_\Lambda)'$.
We already know that, for any $u\in \calD'_\Gamma$, 
$\langle u,v\rangle$ is well defined for all $v\in E_\ell$
because $E_\ell\subset \calE'_\Lambda$. 
Note that injectivity is obvious since smooth compactly supported 
functions, which form a separating set for 
distributions, are in $\calE'_\Lambda$.
A linear map from an inductive limit into a locally convex
space is continuous if and only if its restriction to all $E_\ell$ is
continuous~\cite[p.~217]{Kothe-I}.
Therefore, we must show that, for any
$\ell$, the map $\lambda:v\mapsto \langle u,v\rangle$ is continuous
from $E_\ell$ to $\bbK$.
The proof is so close to the derivation
of lemma~\ref{pairinglem} that it suffices to list
the differences.
We define a finite number of compactly supported
smooth functions $\psi_j$ such that 
$\sum_j \psi_j^2=1$ on a compact neighborhood of $L_\ell$
(here we use the fact that the support of all $v\in E_\ell$
is contained in a common compact set)
and closed cones $V_{uj}$ and
$V_{vj}$ satisfying the three conditions
(i) $V_{uj}\cap (-V_{vj})=\emptyset$,
(ii) $\supp\psi_j\times V_{uj}^c\cap \WF(u)=\emptyset$ and
(iii) $\supp\psi_j\times V_{vj}^c\cap \Lambda_\ell=\emptyset$.
The integral $I_{2j}$ is calculated as $I_{3j}$ in
lemma~\ref{pairinglem} if we interchange $u$ and $v$,
$\alpha$ and $\beta$:
$|I_{2j}|\le ||v||_{N,U_{\beta j},\psi_j} C I^{N-m}_n$,
where $m$ is the order of $v$,
and $I_{3j}$ is bounded as $I_{2j}$ in
lemma~\ref{pairinglem}:
$|I_{3j}|\le p_{\psi_jg_j}(v)$, where
$\hat{g}_j(k)=\beta_j(k)(1-\alpha_j(-k))\widehat{\psi_j u}(-k)$.
We obtain
\begin{eqnarray*}
|\langle u,v\rangle|  &\le & \sum_j\Big(
p_{\psi_jg_j}(v)+ ||v||_{N,U_{\beta j},\psi_j} C I^{N-m}_n
\\&&
+ ||u||_{M,U_{\alpha j},\psi_j} ||v||_{N,U_{\beta j},\psi_j} 
  I_n^{N+M}\Big),
\end{eqnarray*}
for any $N>m+n$ (the condition $N+M>n$ being then satisfied
for any nonnegative integer $M$).
This shows the continuity of $\lambda$ because the
right hand side is a finite sum of terms involving seminorms
of $\calD'_{\Lambda_\ell}$, which induce the topology of $E_\ell$.

Conversely, to prove that $(\calE'_\Lambda)'\hookrightarrow\calD'_\Gamma$,
we show that any element $\lambda$ of 
$(\calE'_\Lambda)'$ defines by restriction to $\calD(\Omega)$ a distribution and then that its wavefront
set is contained in $\Gamma$. This will be enough since by density of $\calD(\Omega)$ in $\calE'_\Lambda$ the restriction then extends uniquely to $\calE'_\Lambda$ and is thus the inverse of the reverse embedding. A linear map $\lambda:\calE'_\Lambda\to \bbK$
is continuous if its restriction to all
$E_\ell$ is continuous.
In other words, for each $E_\ell$ 
there is a bounded set $B$ in $\calD(\Omega)$
and there are smooth functions $\chi_i$ and
closed cones $V_i$ such that
$\supp\chi_i\times V_i \cap \Lambda_\ell=\emptyset$ and
\begin{eqnarray}
|\lambda(v)| \le M\sup(p_B(v),||v||_{N_1,V_1,\chi_1},
  \dots,||v||_{N_r,V_r,\chi_r}).
\label{alphav}
\end{eqnarray}
We first prove that $\lambda$ is a distribution, i.e. a 
continuous linear
map from $\calD(\Omega)$ to $\bbK$.
Recall that the space $\calD(\Omega)$
is the inductive limit of $\calD(L_\ell)$ 
because $L_\ell$ is an increasing sequence of
compact sets exhausting $\Omega$~\cite[p.~66]{Schwartz-66}. 
Thus, a map $\lambda$ is a distribution if the restriction of
$\lambda$ to each $\calD(K_\ell)$ is continuous.
For any $f\in \calD(K_\ell)$, we must show 
that all the seminorms on the right hand side
of eq.~(\ref{alphav}) can be bounded by some $\pi_m(f)$.
But this is a consequence of the fact that
$\calD(\Omega)\hookrightarrow \calD'_{\Lambda_\ell}$
is continuous, which was established in lemma~\ref{continjlem}.

Since $\lambda$ is a distribution, it has
a wavefront set. To prove that $\WF(\lambda)\subset\Gamma$ 
consider a smooth compactly
supported function $\psi$ and a closed cone
$W$ such that $\supp\psi\times W \cap \Gamma=\emptyset$,
i.e.  $\supp\psi\times (-W)\subset \Lambda$.
Since the restriction of $\supp\psi\times (-W)$
to the unit sphere is compact, there is an $\ell$
such that $\supp\psi\times (-W)\subset \Lambda_\ell$.
Note also that $\supp\psi \subset \pi_1(\Lambda_\ell)
\subset L_\ell$ so that $f_k=(1+|k|)^N \psi e_k$ is in $E_\ell$.
We can now repeat the same reasoning as for the proof 
of proposition~\ref{firstdualprop} to show that
$||\lambda||_{N,W,\psi}=\sup_{k\in W}|\lambda(f_k)|$ is bounded.
This shows that $\WF(\lambda)\subset \Gamma$, which implies
$\lambda\in \calD'_\Gamma$ and
$(\calE'_\Lambda)'\subset \calD'_\Gamma$.

This completes the proof that $(\calE'_\Lambda)'=\calD'_\Gamma$.
\end{proof}

\subsection{The strong topology on $\calE'_\Lambda$}
We showed that the coupling between 
$\calE'_\Lambda$ and $\calD'_\Gamma$ is compatible
with duality. Thus, the inductive topology on $\calE'_\Lambda$
is coarser than the Mackey topology~\cite[p.~IV.4]{Bourbaki-TVS}.
The strong topology $\beta(\calE'_\Lambda,\calD'_\Gamma)$ is always
finer than the Mackey topology~\cite[p.~IV.4]{Bourbaki-TVS}.
Therefore, if we can show that the inductive
topology is finer than the strong topology,
we prove the identity of the inductive, Mackey and strong topologies.

\begin{lem}
\label{stronglem}
The inductive, Mackey and strong topologies on
$\calE'_\Lambda$ are equivalent.
\end{lem}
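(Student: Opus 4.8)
The preamble to the statement already supplies half the argument: since the inductive topology is compatible with the pairing by Proposition~\ref{inductprop}, it is coarser than the Mackey topology $\tau(\calE'_\Lambda,\calD'_\Gamma)$, and the Mackey topology is always coarser than $\beta(\calE'_\Lambda,\calD'_\Gamma)$. The plan is therefore to prove the one remaining inclusion, namely that the inductive topology is \emph{finer} than the strong topology. Concretely, I must show that every strong seminorm $q_A(v)=\sup_{u\in A}|\langle u,v\rangle|$, with $A$ a bounded set of $\calD'_\Gamma$, is continuous on $\calE'_\Lambda$ for the inductive topology. Since a seminorm on an inductive limit is continuous as soon as its restriction to each step is continuous, it suffices to dominate $q_A$ on each $E_\ell$ by finitely many seminorms of the topology that $\calD'_{\Lambda_\ell}$ induces on $E_\ell$.

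First I would record the two uniformity consequences of the boundedness of $A$, since these are what let the per-distribution estimate of Proposition~\ref{inductprop} be made uniform. As the normal topology of $\calD'_\Gamma$ is finer than that of $\calD'(\Omega)$, the set $A$ is bounded in $\calD'(\Omega)$; because $\calD(\Omega)$ is barrelled, $A$ is then equicontinuous, so on each fixed compact set it has a common order and a uniform polynomial bound $|\widehat{\psi_j u}(k)|\le C(1+|k|)^m$ valid for all $u\in A$. Directly from $A$ being bounded in $\calD'_\Gamma$, every seminorm $||u||_{M,V,\chi}$ with $\supp\chi\times V\cap\Gamma=\emptyset$ is bounded uniformly over $u\in A$.

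Now fix $\ell$. The geometric input that makes everything run uniformly in $u$ is built into the definition of $\Lambda_\ell$: it lies at distance at least $1/\ell$ from $\Gamma'=-\Gamma$, so $\Gamma$ and $-\Lambda_\ell$ are separated over $L_\ell$. I can thus run the Eskin partition of Lemma~\ref{pairinglem} once and for all, with functions $\psi_j$ (with $\sum_j\psi_j^2=1$ on a neighbourhood of $L_\ell$, which contains $\supp v$ for every $v\in E_\ell$) and cones $V_{uj},V_{vj}$ adapted to the \emph{fixed} cones $\Gamma$ and $\Lambda_\ell$ rather than to individual distributions: $V_{uj}\cap(-V_{vj})=\emptyset$, $\supp\psi_j\times V_{uj}^c\cap\Gamma=\emptyset$, and $\supp\psi_j\times V_{vj}^c\cap\Lambda_\ell=\emptyset$. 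With the homogeneous cut-offs $\alpha_j,\beta_j$ this reproduces, with the same derivation as in Proposition~\ref{inductprop} and now for all $u\in A$ and all $v\in E_\ell$ simultaneously,
\begin{eqnarray*}
|\langle u,v\rangle| &\le& \sum_j\Big(p_{\psi_jg_j}(v)
 + ||v||_{N,U_{\beta j},\psi_j}\,C\,I_n^{N-m}
 + ||u||_{M,U_{\alpha j},\psi_j}\,||v||_{N,U_{\beta j},\psi_j}\,I_n^{N+M}\Big),
\end{eqnarray*}
with $\hat g_j(k)=\beta_j(k)(1-\alpha_j(-k))\widehat{\psi_j u}(-k)$, $U_{\alpha j}=\supp(1-\alpha_j)\subset V_{uj}^c$ and $U_{\beta j}=\supp(1-\beta_j)\subset V_{vj}^c$. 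By the second and third conditions, $||u||_{M,U_{\alpha j},\psi_j}$ is a genuine seminorm of $\calD'_\Gamma$ and $||v||_{N,U_{\beta j},\psi_j}$ a genuine seminorm of $E_\ell$, and the constant $m,C$ in the middle term are the order and constant of $\widehat{\psi_j u}$.

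It remains to take $\sup_{u\in A}$ term by term. The equicontinuity lets me fix once and for all an $N>m+n$ with $m$ the common order of $A$, so the middle term is already a fixed multiple of $||v||_{N,U_{\beta j},\psi_j}$; in the last term $\sup_{u\in A}||u||_{M,U_{\alpha j},\psi_j}<\infty$ by the second uniformity statement, again leaving a multiple of $||v||_{N,U_{\beta j},\psi_j}$. The one genuinely delicate term is the first: I would show that $B'_j=\{\psi_jg_j:u\in A\}$ is a bounded subset of $\calD(\Omega)$, whence $\sup_{u\in A}p_{\psi_jg_j}(v)=p_{B'_j}(v)$ is a seminorm of $E_\ell$. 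This holds because each $\psi_jg_j$ is supported in the fixed compact $\supp\psi_j$ and because $(1-\alpha_j(-k))\widehat{\psi_j u}(-k)$ is fast decreasing with all its constants controlled by the uniformly bounded seminorms $||u||_{M,U_{\alpha j},\psi_j}$, so that $|\hat g_j(k)|\le C_M(1+|k|)^{-M}$ for every $M$ with $C_M$ independent of $u\in A$; differentiating under the inverse Fourier transform then bounds every $C^\infty$-seminorm of $g_j$ uniformly in $u\in A$. Summing the three contributions dominates $q_A|_{E_\ell}$ by finitely many seminorms of $E_\ell$, which gives continuity and hence the lemma. I expect this simultaneous uniformity in $u\in A$ — the common order in the middle term and the $\calD(\Omega)$-boundedness of $\{\psi_jg_j\}$ in the first — to be the real obstacle, and the point where the boundedness hypothesis on $A$ must be exploited in full.
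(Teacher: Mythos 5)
Your proof is correct and follows essentially the same route as the paper's: the same reduction to continuity of the strong seminorm $p_{B'}$ on each step $E_\ell$, the same Eskin decomposition with $\psi_j,\alpha_j,\beta_j$ fixed by $\Gamma$, $L_\ell$, $\Lambda_\ell$ alone, the same vanishing of $I_{1j}$ and product bound for $I_{4j}$, and the same argument that $\{\psi_j g_j^u \telque u\in A\}$ is bounded in $\calD(\Omega)$ so that this term is dominated by a seminorm $p_{B_j}$ of $E_\ell$. The only deviation is in the remaining term, where you use Banach-Steinhaus/equicontinuity to obtain a uniform polynomial bound on $\widehat{\psi_j u}$ over $u\in A$, whereas the paper uses Schwartz's structure theorem ($u=\partial^\alpha f_u$ with uniform order and uniform bound) -- a substitution the paper itself flags at exactly this point as ``an equivalent method''.
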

\begin{proof}
To show that the identity map, from
$\calE'_\Lambda$ with the inductive topology to
$\calE'_\Lambda$ with the strong topology, is continuous,
we must prove that the identity map is continuous
from all $E_\ell$  to 
$\calE'_\Lambda$ with the strong topology.
In other words, for any bounded set
$B'$ of $\calD'_\Gamma$, we must show that
$p_{B'}(v)=\sup_{u\in B'}|\langle u,v\rangle|$ is bounded on 
$E_\ell$ by some seminorms of $E_\ell$.

We proceed as in the proof of lemma~\ref{pairinglem}.
From the fact that $\Gamma'\cap \Lambda_\ell=\emptyset$
and $\supp v\subset L_\ell$ we can build a finite number
of smooth compactly
supported functions $\psi_j$ such that $\sum_j \psi_j^2=1$
on  a compact neighborhood $K'$ of $L_\ell$, and closed cones
$V_{uj}$ and $V_{vj}$ satisfying the three conditions
(i) $V_{uj}\cap (-V_{vj})=\emptyset$,
(ii) $\supp\psi_j\times V_{uj}^c\cap \Gamma=\emptyset$ and
(iii) $\supp\psi_j\times V_{vj}^c\cap \Lambda_\ell=\emptyset$.
The support of all $\psi_j$ is assumed to be contained
in a common compact neighborhood $K$ of $K'$.
Then, we define again homogeneous functions $\alpha_j$
and $\beta_j$ of degree 0, measurable, smooth except 
at the origin, non-negative and bounded by 1 on $\bbR^n$, such that 
the closed cones $\supp\alpha_j$
and $\supp\beta_j$ satisfy the three conditions (i), (ii)
and (iii), with
$\alpha_j=1$ on $V_{uj}$ and $\beta_j=1$ on $V_{vj}$
and, as in the proof of lemma~\ref{pairinglem}, we write
$\langle u, v\rangle=\sum_j (I_{1j}+I_{2j}+ I_{3j}+I_{4j})$.
We have again $I_{1j}=0$ because the supports of
$\alpha_j$ and $\beta_j$ are disjoint, 
and $|I_{4j}|\le ||u||_{M,U_{\alpha j},\psi_j} 
  ||v||_{N,U_{\beta j},\psi_j} 
  I_n^{N+M}$ for any integers $N$ and $M$  such that $N+M>n$.  
It is important to remark that $\psi_j$,
$\alpha_j$ and $\beta_j$ depend only on $\Gamma$,
$L_\ell$ and $\Lambda_\ell$ and not on $u$ and $v$.

To estimate $I_{2j}$ and $I_{3j}$, we need to establish some
properties of the bounded sets of $\calD'_\Gamma$.
The continuity of the injection
$\calD'_\Gamma\hookrightarrow \calD'(\Omega)$ 
implies that a set $B'$ which is bounded in $\calD'_\Gamma$
is also bounded in $\calD'(\Omega)$~\cite[p.~109]{Horvath}.
According to Schwartz~\cite[p.~86]{Schwartz-66},
a subset $B'$ is bounded in $\calD'(\Omega)$
iff, for any relatively compact open set $U\subset\Omega$,
there is an integer $m$ such that every $u\in B'$
can be expressed in $U$ as $u=\partial^\alpha f_u$
for $|\alpha|\le m$, where $f_u$ a continuous function.
Moreover, there is a number $M$
such that $|f_u(x)|\le M$ for all $x\in U$
and $u\in B'$. 
The elements of $E_\ell$ are supported on $L_\ell$
and we need only consider bounded sets of $\calD'_\Gamma$
that are defined on the compact neighborhood
$K$ of $L_\ell$. Thus, we can take for $U$ any
relatively compact open set containing $K$.

To calculate $I_{2j}$, as in the proof of lemma~\ref{pairinglem},
we define $\hat{g}_j(k)=\alpha_j(-k)(1-\beta_j(k))
\widehat{\psi_j v}(k)$ and we obtain
$I_{2j}=(2\pi)^{-n} \int \widehat{\psi_ju}(-k) \hat{g}_j(k) dk
=\langle u,\psi_j g_j\rangle$. At this stage one might apply
the Banach-Steinhaus theorem but we shall use an equivalent
method using $u=\partial^\alpha f_u$:
\begin{eqnarray*}
\langle u,\psi_j g_j\rangle &=&
\langle \partial^\alpha f_u,\psi_j g_j\rangle=
(-1)^{|\alpha|} \langle f_u,\partial^\alpha (\psi_jg_j)\rangle
=
(-1)^{|\alpha|} \langle \varphi f_u,\partial^\alpha (\psi_jg_j)\rangle,
\end{eqnarray*}
where $\varphi$ is a smooth function, equal to 1 on $K$ and
supported on $U$. Thus
\begin{eqnarray*}
\langle u,\psi_j g_j\rangle &=&
(-1)^{|\alpha|} (2\pi)^{-n} \int_{\bbR^n} \widehat{\varphi f_u}(-k)
\widehat{\partial^\alpha (\psi_jg_j)}(k) dk
\\ &=&
i^{|\alpha|} (2\pi)^{-n} \int_{\bbR^n} \widehat{\varphi f_u}(-k)
k^\alpha \widehat{\psi_jg_j}(k) dk.
\end{eqnarray*}
We must estimate
$\widehat{\psi_jg_j}(k)=(2\pi)^{-n}
\int_{\bbR^n} \widehat{\psi_j}(k-q)
\widehat{g_j}(q) dq$.
The functions $\alpha_j$ and $(1-\beta_j)$ are bounded
by 1 and $\widehat{\psi_jv}$ is fast decreasing on
$U_{\beta j}=\supp(1-\beta_j)$.
Thus,
$|\widehat{g_j}(q)|\le ||v||_{N,U_{\beta j},\psi_j}
(1+|q|)^{-N}$ for all integers $N$.
In the proof of lemma~\ref{continjlem}, we estimated
the Fourier transform of a smooth compactly supported
function: 
$|\psi_j(k-q)|\le C^{N'}_j (1+|k-q|)^{-N'}$ for all integers $N'$,
where $C^{N'}_j=((1+n)\beta)^{N'} |K| \pi_{2N',K}(\psi_j)$.
If we take $N=n+m+1$, where $m=|\alpha|$ is the
degree of $\partial^\alpha$,  and $N'=2N$ we obtain
\begin{eqnarray*}
|\widehat{\psi_jg_j}(k)| &\le &
(2\pi)^{-n} ||v||_{N,U_{\beta j},\psi_j} C^{2N}_j
\int_{\bbR^n} (1+|k-q|)^{-2N} (1+|q|)^{-N} dq
\\ &\le& ||v||_{N,U_{\beta j},\psi_j} C^{2N}_j
I_n^N (1+|k|)^{-N},
\end{eqnarray*}
where we used
$(1+|q|)^{-N} \le (1+|k-q|)^N (1+|k|)^{-N}$~\cite[p.~50]{Eskin}.
This estimate enables us to calculate
\begin{eqnarray*}
|I_{2j}| &=& |\langle u,\psi_j g_j\rangle| \le
(2\pi)^{-n} \int_{\bbR^n} |\widehat{\varphi f_u}(-k)|
|k|^m |\widehat{\psi_jg_j}(k)| dk
\\
&\le&
(2\pi)^{-n}  |U| M
||v||_{N,U_{\beta j},\psi_j} C^{2N}_j
I_n^N 
  \int_{\bbR^n} 
\frac{|k|^m}{ (1+|k|)^{n+m+1}}dk
\\ &\le &
  |U| M
||v||_{N,U_{\beta j},\psi_j} C^{2N}_j
I_n^N I_n^{n+1},
\end{eqnarray*}
where $N=n+m+1$, $|U|$ is the volume of $U$ and we used
the obvious bound
$|\widehat{\varphi f_u}(-k)| \le  |U| M$.

For the estimate of $I_{3j}$ we
start from 
$I_{3j}=(2\pi)^{-n} \int \widehat{g_j^u}(-k) 
  \widehat{\psi_j v}(k) dk$,
where $\widehat{g_j^u}(-k)=(1-\alpha_j(-k))\beta_j(k)
\widehat{\psi_j u}(-k)$.
Thus $|I_{3j}|=|\langle \psi_j g_j^u,v\rangle|$
can be bounded by $p_{B_j}(v)=\sup_{f\in B_j}|\langle f,v\rangle|$ if
the set $B_j=\{\psi_j g_j^u\telque u\in B'\}$ is
a bounded set in $\calD(\Omega)$.
It is clear that all $f\in B_j$ are supported on
$K=\supp\psi_j$ and that all $\psi_j g_j^u$
are smooth because $\psi_j$ is smooth and the
Fourier transform of $g_j^u$ is fast decreasing.
It remains to show that all the derivatives of
$\psi_j g_j^u$ are bounded by a constant independent
of $u$. For this we write
\begin{eqnarray*}
\partial^\gamma (\psi_j g_j^u)(x) &=& 
(2\pi)^{-n} (-i)^{|\gamma|} \int_{\bbR^n}
  e^{-ik\cdot x} k^\gamma \widehat{\psi_j g_j^u}(k) dk.
\end{eqnarray*}
If $|\gamma|\le m$, we use the estimate of 
$\widehat{\psi_j g_j}$ obtained in the previous
section and we interchange $u$ and $v$, $\alpha_j$
and $\beta_j$
\begin{eqnarray*}
|\widehat{\psi_jg^u_j}(k)| &\le &
||u||_{N,U_{\alpha j},\psi_j} C^{2N}_j
I_n^N (1+|k|)^{-N},
\end{eqnarray*}
where $N=n+m+1$. A set $B'$ is bounded in $\calD'_\Gamma$
iff it is bounded for all the seminorms of 
$\calD'_\Gamma$~\cite[p.~109]{Horvath}.
In particular, there is a constant
$M_{N,U_{\alpha j},\psi_j}$ such that
$||u||_{N,U_{\alpha j},\psi_j} \le
M_{N,U_{\alpha j},\psi_j}$ for all $u\in B'$.
Thus, for all $f\in B_j$,
$|\widehat{f}(k)| \le 
M_{N,U_{\alpha j},\psi_j} C^{2N}_j
I_n^N (1+|k|)^{-N}$ and
\begin{eqnarray*}
|\partial^\gamma f(x)| &\le& 
(2\pi)^{-n} \int_{\bbR^n} |k|^m |\widehat{f}(k)| dk
\le 
M_{N,U_{\alpha j},\psi_j} C^{2N}_j
I_n^N I_n^{n+1},
\end{eqnarray*}
where $N=n+m+1$ as for the estimate of $I_{3j}$.
In other words, for any $\gamma$ 
there is a constant $C_{|\gamma|}$ such that
$|\partial^\gamma f|\le C_{|\gamma|}$
for all $f\in B_j$. Thus,
$\pi_m(f)\le \sup_{0\le k\le m} C_k$ is bounded
independently of $f$, and we proved that
$B_j$ is a bounded set of $\calD(\Omega)$.
Hence, $|I_{3j}| \le p_{B_j}(v)$ where $p_{B_j}$
is a seminorm of $\calD'_\Gamma$.

If we gather our results we obtain
\begin{eqnarray}
p_{B'}(v) & \le &
\sum_j\Big( 
M_{n,U_{\alpha j},\psi_j} 
  ||v||_{n,U_{\beta j},\psi_j} 
  I_n^{2n}
+
 M 
||v||_{N,U_{\beta j},\psi_j} C^{2N}_j
I_n^N I_n^{n+1}
\nonumber\\&&
+p_{B_j}(v)
\Big),
\end{eqnarray}
where the sum over $j$ is finite and $N=n+m+1$ where
$m$ is the maximum order of the distributions 
of $B'$. The proof is complete.
\end{proof}

\section{Nuclearity}
\label{nuclear-sect}
In this section we investigate the nuclear
properties of the spaces studied in this paper.
To prove that $\calD'_\Gamma$ with the normal
topology is nuclear, we use a theorem
due to Grothendieck~\cite[Ch.~II, p.~48]{Grothendieck-55}
that can be expressed as follows~\cite[p.~92]{Pietsch}:
\begin{thm}
Let $E$ be a locally convex space and $(f_i)_{i\in I}$
a family of continuous linear maps from $E$ to
nuclear locally convex spaces $F_i$. If the topology
of $E$ is the initial topology for the maps $f_i$,
then $E$ is nuclear.
\end{thm}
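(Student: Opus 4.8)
The plan is to verify Grothendieck's local criterion for nuclearity: a locally convex space $E$ is nuclear if and only if, for every continuous seminorm $p$ in a generating family, there is a continuous seminorm $q\ge p$ for which the canonical linking map $\widehat{E}_q\to\widehat{E}_p$ between the associated local Banach spaces $\widehat{E}_p=\overline{E/p^{-1}(0)}$ is nuclear. Fix such a $p$. Since $E$ carries the initial topology for $(f_i)_{i\in I}$, the seminorms $u\mapsto\max_{1\le k\le m}p_k\big(f_{i_k}(u)\big)$, with each $p_k$ a continuous seminorm on $F_{i_k}$, form a generating family, so it suffices to treat a $p$ of this form. A finite product of nuclear spaces is again nuclear; bundling the indices $i_1,\dots,i_m$ into the single continuous linear map $f=(f_{i_1},\dots,f_{i_m})\colon E\to F:=F_{i_1}\times\cdots\times F_{i_m}$, whose target is nuclear, I may write $p=\tilde p\circ f$ for a continuous seminorm $\tilde p$ on $F$. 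The problem is thereby reduced to a single continuous map into a nuclear space.

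Next I would pull back a good pair of seminorms from $F$. By nuclearity of $F$, choose a continuous seminorm $\tilde q\ge\tilde p$ on $F$ with $\widehat{F}_{\tilde q}\to\widehat{F}_{\tilde p}$ nuclear, and set $q=\tilde q\circ f\ge p$. From $p=\tilde p\circ f$ one gets $p^{-1}(0)=f^{-1}\big(\tilde p^{-1}(0)\big)$, so $f$ descends to an isometric injection $E/p^{-1}(0)\hookrightarrow F/\tilde p^{-1}(0)$ and, after completion, to an isometric embedding of $\widehat{E}_p$ onto a closed subspace of $\widehat{F}_{\tilde p}$; the same holds at the level of $q$ and $\tilde q$. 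These two embeddings together with the two canonical linking maps form a commuting square whose right-hand vertical arrow $\widehat{F}_{\tilde q}\to\widehat{F}_{\tilde p}$ is nuclear, and everything comes down to showing that the left-hand vertical arrow $\widehat{E}_q\to\widehat{E}_p$ is nuclear.

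This last step is where I expect the genuine difficulty to lie. Precomposing the nuclear map $\widehat{F}_{\tilde q}\to\widehat{F}_{\tilde p}$ with the inclusion $\widehat{E}_q\hookrightarrow\widehat{F}_{\tilde q}$ keeps it nuclear, as the nuclear operators form an operator ideal, and its range lands in the closed subspace $\widehat{E}_p\subseteq\widehat{F}_{\tilde p}$; but the \emph{corestriction} of a nuclear operator to a closed subspace of its target need not remain nuclear, since the nuclear ideal is not injective. I would remove this obstruction by working with Hilbertian seminorms: nuclearity of $F$ allows one to replace $\tilde p,\tilde q$ by comparable continuous pre-Hilbert seminorms and to interpose a third pre-Hilbert seminorm between them, so that the two successive linking maps between the resulting local \emph{Hilbert} spaces are Hilbert--Schmidt and compose to the nuclear linking map. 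For Hilbert spaces the Hilbert--Schmidt ideal \emph{is} injective: the orthogonal projection of the intermediate Hilbert space onto the closed subspace coming from $E$ composes with a Hilbert--Schmidt map to give a Hilbert--Schmidt map, so no loss occurs upon corestriction. Transporting this through the square exhibits $\widehat{E}_q\to\widehat{E}_p$ as a composite of two Hilbert--Schmidt maps, hence as a nuclear map.

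Since $p$ ranged over a generating family of seminorms and the required $q\ge p$ has been produced, $E$ satisfies Grothendieck's criterion and is nuclear. Equivalently, the argument amounts to the observations that the diagonal $(f_i)_{i\in I}\colon E\to\prod_{i\in I}F_i$ is a topological embedding, that an arbitrary product of nuclear spaces is nuclear, and that a linear subspace of a nuclear space is nuclear; the last permanence property is precisely where the Hilbertian projection argument is needed, and the non-injectivity of the nuclear ideal is the one real obstacle to be circumvented.
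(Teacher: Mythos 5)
The paper offers no proof of this statement: it is quoted as a known theorem of Grothendieck, with Pietsch's book cited for the formulation, and is then simply applied to $\calD'_\Gamma$. So the only meaningful comparison is with the proof in those cited references, and your argument is essentially that proof, correctly reconstructed. The reduction is right: finite maxima of pulled-back seminorms form a directed generating family; bundling finitely many $f_{i_k}$ into one map to a finite (hence nuclear) product reduces everything to a single continuous map $f\colon E\to F$ with $p=\tilde p\circ f$; the induced maps $\widehat{E}_p\to\widehat{F}_{\tilde p}$ are isometric embeddings onto closed subspaces; and you correctly isolate the one genuine difficulty, namely that the nuclear operator ideal is not injective, so the corestriction of the nuclear linking map $\widehat{F}_{\tilde q}\to\widehat{F}_{\tilde p}$ to the subspaces coming from $E$ cannot simply be declared nuclear. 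Your fix --- replace the seminorms on $F$ by Hilbertian ones, interpose a third so that two successive linking maps are Hilbert--Schmidt, use the orthogonal projection onto the closed subspace to see that restriction--corestriction preserves the Hilbert--Schmidt property, and recall that a composite of two Hilbert--Schmidt maps is nuclear --- is precisely the classical device of Grothendieck and Pietsch for proving that subspaces and initial topologies inherit nuclearity. One cosmetic caveat concerns your closing ``equivalently'' remark: the diagonal $E\to\prod_{i\in I}F_i$ is a topological embedding only when the family $(f_i)$ separates points of $E$; your actual seminorm-level argument never uses injectivity, so it proves the theorem as stated, without any separation hypothesis, and is to be preferred over that summary formulation.
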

We recall that, if the topology of $F_i$ is defined by
the seminorms $(p^i_\alpha)_{\alpha\in J_i}$, then
the initial topology of $E$ is defined by the seminorms
$(p^i_{\alpha_i}\circ 
f_i)_{i\in I,\alpha_i\in J_i}$~\cite[p.~152]{Horvath}.

The simplest case to prove is
\begin{prop}
The space $\calD'_\Gamma$ with the normal topology is
nuclear.
\end{prop}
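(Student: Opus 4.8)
The plan is to exhibit the normal topology of $\calD'_\Gamma$ as an initial topology for a family of continuous linear maps into nuclear spaces, and then to invoke the above theorem of Grothendieck. Recall that the normal topology is generated by the seminorms $p_B(u)=\sup_{f\in B}|\langle u,f\rangle|$ (for $B$ bounded in $\calD(\Omega)$) together with the seminorms $||u||_{N,V,\chi}=\sup_{k\in V}(1+|k|)^N|\widehat{u\chi}(k)|$ (for $\supp\chi\times V\cap\Gamma=\emptyset$). The first family is precisely the system of seminorms of the strong topology of $\calD'(\Omega)$, so it is recovered from the continuous injection $j:\calD'_\Gamma\hookrightarrow\calD'(\Omega)$ used to define the normal topology; since $\calD'(\Omega)$ is nuclear, this map already has nuclear target. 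It then remains to produce, for each admissible pair $(V,\chi)$, a continuous map into a nuclear space whose pulled-back seminorms dominate $||\cdot||_{N,V,\chi}$ and are in turn dominated by normal seminorms.

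First I would build such maps into the Schwartz space. Fix an admissible pair $(V,\chi)$, choose a slightly larger closed cone $V'\supseteq V$ with $\supp\chi\times V'\cap\Gamma=\emptyset$, and a function $\theta\in C^\infty(\bbR^n)$ that is homogeneous of degree $0$ away from the origin, equals $1$ on $V$, is supported in $V'$, and is smoothed near $0$. Define $T_{V,\chi}(u)=\theta\cdot\widehat{u\chi}$. Since $u\chi$ is compactly supported, $\widehat{u\chi}$ is smooth, and on $\supp\theta\subseteq V'$ it is fast decreasing. The crucial observation is that its $k$-derivatives are again Fourier transforms of admissible distributions, $\partial^\beta\widehat{u\chi}=i^{|\beta|}\,\widehat{x^\beta\chi\,u}$, where $x^\beta\chi\in\calD(\Omega)$ has support inside $\supp\chi$, so that $(V',x^\beta\chi)$ is again admissible. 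Combining this with the decay of the derivatives of the degree-$0$ function $\theta$ and the Leibniz rule, one gets for every $N$ and every multi-index $\alpha$ a bound of the Schwartz seminorm $\sup_k(1+|k|)^N|\partial^\alpha T_{V,\chi}(u)(k)|$ by a finite sum of seminorms $||u||_{M,V',x^\beta\chi}$. Hence $T_{V,\chi}$ maps $\calD'_\Gamma$ continuously into $\calS(\bbR^n)$, which is nuclear.

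It then remains to check that the initial topology for the family $\{j\}\cup\{T_{V,\chi}\}$ coincides with the normal topology. On one hand every pulled-back seminorm is a normal seminorm: those from $j$ are exactly the $p_B$, and those from $T_{V,\chi}$ are, by the estimate above, dominated by finite sums of the $||\cdot||_{M,V',x^\beta\chi}$. On the other hand every normal seminorm is dominated by an initial one: the $p_B$ come from $j$, while $||u||_{N,V,\chi}=\sup_{k\in V}(1+|k|)^N|\widehat{u\chi}(k)|\le\sup_{k\in\bbR^n}(1+|k|)^N|T_{V,\chi}(u)(k)|$, because $\theta=1$ on $V$, and the right-hand side is a Schwartz seminorm of $T_{V,\chi}(u)$. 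The two topologies therefore agree, and Grothendieck's theorem yields that $\calD'_\Gamma$ is nuclear.

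The Leibniz-and-decay estimate is routine; the one point requiring care, and the reason the naive map $u\mapsto\widehat{u\chi}\big|_V$ into a merely weighted sup-norm space does not work, is that nuclearity of the target forces control of all $k$-derivatives of $\widehat{u\chi}$. This is exactly what the identity $\partial^\beta\widehat{u\chi}=i^{|\beta|}\widehat{x^\beta\chi\,u}$ supplies, by reducing derivative control to admissible H\"ormander seminorms of $\calD'_\Gamma$. Thus the main obstacle is simply to recognize that the derivative seminorms are already built into the defining family, after which the nuclearity of $\calS(\bbR^n)$ and of $\calD'(\Omega)$ does the rest.
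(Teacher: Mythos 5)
Your strategy is the same as the paper's: Grothendieck's initial-topology criterion, the inclusion $j:\calD'_\Gamma\hookrightarrow\calD'(\Omega)$ into the nuclear strong dual, and cutoff maps $u\mapsto\theta\,\widehat{u\chi}$ into $\calS(\bbR^n)$, with derivatives controlled through $\partial^\beta\widehat{u\chi}=i^{|\beta|}\,\widehat{x^\beta\chi\,u}$. However, there is a genuine gap, located at the origin in frequency space. The three conditions you impose on $\theta$ --- smooth on $\bbR^n$, equal to $1$ on $V$, supported in the closed cone $V'$ --- are mutually inconsistent whenever $V'$ is a proper cone (the relevant case): a continuous function supported in $V'$ must vanish at the origin, since it is zero on points arbitrarily close to $0$ in directions outside $V'$, hence it cannot equal $1$ at points of $V$ arbitrarily close to $0$. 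Once you actually ``smooth near $0$'' (concretely: multiply the degree-zero function by a factor $1-h$ vanishing near the origin, which is what the paper does with $g=(1-h)\zeta$), the statement ``$\theta=1$ on $V$'' fails on $V\cap\supp h$, and your key inequality $||u||_{N,V,\chi}\le\sup_{k\in\bbR^n}(1+|k|)^N|T_{V,\chi}(u)(k)|$ is false as written: it misses the low-frequency term $\sup_{k\in V\cap\supp h}(1+|k|)^N\,|(1-\theta(k))\widehat{u\chi}(k)|$.

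This missing term is not a technicality; it is precisely where the normal topology is used. The paper bounds it by $p_B(u)$ with $B=\{(1+|k|)^N\chi e_k\telque k\in V\cap\supp h\}$, after verifying via eq.~(\ref{ekNVchi}) that $B$ is bounded in $\calD(\Omega)$; since $\widehat{u\chi}(k)=\langle u,\chi e_k\rangle$, the sup over the compact set of frequencies becomes a single strong seminorm. That seminorm is pulled back through $j$, so the repair is available inside your own family of maps, but it has to be carried out: a supremum over a continuum of frequencies cannot be dominated by finitely many weak seminorms $p_f$, which is exactly why the paper needs a genuinely different argument (maps into the spaces $C^\infty(K)$) to prove nuclearity for the H\"ormander topology. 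Note also that your closing remark identifies the control of $k$-derivatives as ``the one point requiring care''; in fact that part is routine and identical to the paper's estimate, while the point requiring care is the low-frequency region you passed over.
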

\begin{proof}
We first construct the spaces $F_i$ and the linear maps $f_i$.
For $i=0$ we take
$F_0=\calD'(\Omega)$ and $f_0$ the continuous inclusion
$\calD'_\Gamma\hookrightarrow \calD'(\Omega)$
where $\calD'(\Omega)$, equipped with its strong
topology, is nuclear~\cite[p.~53]{Treves}.
For each $i=(V,\chi)$ where 
$(\supp\chi\times V)\cap\Gamma=\emptyset$,
the target space $F_i$ will be the Schwartz
space $\calS$ of rapidly decreasing functions on 
$\bbR^n$ equipped
with the family of seminorms~\cite[p.~90]{Horvath}.
\begin{eqnarray*}
||f||_{N,m} &=& \sup_{|\alpha|\le m}\sup_{k\in\bbR^n} 
  (1+|k|)^N |\partial^\alpha f(k)|.
\end{eqnarray*}
The space $\calS$ is nuclear~\cite[p.~430]{Treves}.
To build the linear maps, we choose a real function
$h\in \calD(\bbR^n)$ such that $h(k)=1$ for $|k|\le 1$,
$h(k)=0$ for $|k|>2$ and $0\le h(k) \le 1$ for all $k$, 
and a nonnegative function $\gamma\in \calD(\bbR^n)$ which is
bounded by 1, equal to 1 on $V\cap S^{n-1}$ and such that
$(\supp\chi\times\supp\gamma) \cap \Gamma=\emptyset$.
We define the homogeneous function $\zeta(k)=\gamma(k/|k|)$,
which is smooth outside the origin and bounded by 1.
The function $g=(1-h)\zeta$ is smooth on $\bbR^n$.
By using the homogeneity of $\zeta$ and the fact that
$h$ and $\gamma$ are in $\calD(\bbR^n)$, we see that
for any integer $m$ there is a constant $C_m$ such that
$|\partial^\alpha g(k)|\le C_m $ for
all $|\alpha|\le m$.

We can now define
$f_i:\calD'_\Gamma \to \calS$ by
$f_i(u) = g\,\widehat{u\chi}$.
The functions $f_i(u)$ are in $\calS$
because $\widehat{u\chi}$ is in $\calS$
by definition of the wavefront set and
$g$ is supported on the cone
$W=\{\lambda k\telque k\in \supp\gamma, |k|=1, \lambda>0\}$
and $(\supp\chi\times W)\cap\Gamma=\emptyset$.
To show that $f_i$ is continuous, we
must estimate $||f_i(u)||_{N,m}$ in
terms of the seminorms of $u$ in $\calD'_\Gamma$.
By noticing that
$\partial^\alpha \widehat{u\chi}=
\widehat{(ix)^\alpha u\chi}$ we obtain
\begin{eqnarray*}
||f_i(u)||_{N,m} &\le &\sup_{k\in W} (1+|k|)^N
  \sup_{|\alpha|\le m} \sum_{\beta}
 \binom{\alpha}{\beta}   C_m 
  |\widehat{x^\beta u\chi}(k)|
  \le  C \sup_{|\alpha|\le m} ||u||_{N,W,x^\alpha\chi}.
\end{eqnarray*}

We have shown that all $f_i$ are continuous. Thus,
the topology of $\calD'_\Gamma$ is finer than the
initial topology defined by the family $f_i$.
To show that the two topologies are equivalent,
it remains to prove that every seminorm defining the
topology of $\calD'_\Gamma$ can be bounded with
seminorms of the initial topology.

This is obvious for the seminorms of $\calD'(\Omega)$
because they are the same in $\calD'_\Gamma$.
For the seminorms $||\cdot||_{N,V,\chi}$ we note that
$\widehat{u\chi}=h\widehat{u\chi} +(1-h)\widehat{u\chi}$.
The function $g=(1-h)\zeta$ corresponding to $i=(V,\chi)$
enables us to write
$\widehat{u\chi}=h\widehat{u\chi} +g\widehat{u\chi}
=h\widehat{u\chi} +f_i(u)$
on $V$ and we obtain
\begin{eqnarray}
||u||_{N,V,\chi} & \le &
||f_i(u)||_{N,0} + \sup_{k\in V} (1+|k|)^N|h(k)\widehat{u\chi}(k)|.
\label{nucbound}
\end{eqnarray}
We just need a bound for the last term.
We notice that
$\widehat{u\chi}(k)=\langle u,\chi e_k\rangle$,
where $e_k(x)=e^{ik\cdot x}$,
so that
$ \sup_{k\in V} (1+|k|)^N|h(k)\widehat{u\chi}(k)| \le
p_B(u)$, where $B=\{(1+|k|)^N  \chi e_k\telque k\in V\cap \text{supp}(h)\}$.
Thus, the equivalence is proved if
$p_B$ is a seminorm of the strong topology
of $\calD'(\Omega)$, i.e. if $B$ is bounded in $\calD(\Omega)$.
All the elements of $B$ are supported on $K=\supp \chi$.
It remains to show that they are bounded
for all seminorms $\pi_{m,K}$ but this
is obvious by 
Eq.~\eqref{pimfchi} and 
$\pi_{m,K}(e_k)\le  |k|^m$.
\end{proof}

We emphasize an interesting structural consequence of the proof above for $\calD'_\Gamma$. Recall that the class of (PLS)-spaces is the smallest class stable by countable projective limits and containing strong duals of Fr\'echet-Schwartz spaces. Since such strong duals are known to be inductive limits of Banach spaces with compact linking maps, they are also called (LS)-spaces  and since they are bornological, their associated convex bornological space is sometimes called a Silva space \cite{Hogbe-77,Hogbe-81}.
This class appeared  recently as useful in applications
of homological algebra to functional analysis (see e.g. \cite{Wengenroth}) having
applications to parameter dependence of PDE's \cite{Domanski2}. It is known that any
Fr\'echet-Schwartz space is a (PLS)-space. See more generally \cite{Domanski} for a
review. It is also known that the strong dual of a (PLS)-space is an (LFS)-space (see below), i.e. a countable inductive limit of Fr\'echet-Schwartz spaces. Moreover, both are well-known to be strictly webbed spaces in the sense of De Wilde (using general stability properties of these spaces, see e.g. \cite[\S 35]{Kothe-II}) and thus they satisfy corresponding open-mapping and closed graph theorems. Recall also that the classical sequence space $\mathfrak{s}$ 
is known to be
isomorphic to the Fr\'echet nuclear space $\mathcal{S}$ 
(see e.g.~\cite[pp.~325 and~413]{Valdivia})
 having universal properties for nuclear spaces 
in the sense that any nuclear locally convex space is a linear subspace of 
$\mathfrak{s}^I$ for some set $I$.
\begin{cor}\label{PLS}
$\calD'_\Gamma$ with its normal topology is isomorphic to a closed subspace of the countable product  $(\mathfrak{s}')^\bbN\times (\mathfrak{s})^\bbN$, and thus it is a (PLS)-space and its strong  dual $\calE'_\Lambda$ is an (LFS)-space.
\end{cor}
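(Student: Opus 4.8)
The plan is to turn the initial-topology description obtained in the proof above into an explicit closed embedding into $(\mathfrak{s}')^{\bbN}\times(\mathfrak{s})^{\bbN}$, and then to quote the permanence properties of the class of (PLS)-spaces together with the duality of Proposition~\ref{inductprop} and Lemma~\ref{stronglem}. Recall that the proof exhibits the normal topology of $\calD'_\Gamma$ as the initial topology of the inclusion $f_0\colon\calD'_\Gamma\hookrightarrow\calD'(\Omega)$ (which carries all the seminorms $p_B$, $B$ bounded in $\calD(\Omega)$) together with the maps $f_i\colon\calD'_\Gamma\to\calS$, $i=(V,\chi)$, and that $\calS\cong\mathfrak{s}$. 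For the first factor I would use that $\calD'(\Omega)$ is classically a closed subspace of $(\mathfrak{s}')^{\bbN}$: writing $\calD(\Omega)=\varinjlim_j\calD(K_j)$ as a regular $(LF)$-limit with the $K_j$ chosen with smooth boundary, each $\calD(K_j)$ is a nuclear Fr\'echet space isomorphic to $\mathfrak{s}$, so its strong dual is $\calD'(\Omega)=\varprojlim_j\calD(K_j)'_\beta$, a projective limit of copies of $\mathfrak{s}'$, hence a closed subspace of the product $(\mathfrak{s}')^{\bbN}$ (see \cite{Valdivia,Domanski}). It therefore remains to replace the uncountable index set $\{(V,\chi)\}$ of the second family by a countable one.

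For the second factor I would fix, using the second countability of the sphere bundle $UT^*\Omega$, a countable family of pairs $(\chi_\nu,V_\nu)$ with $\supp\chi_\nu\times V_\nu\cap\Gamma=\emptyset$, chosen cofinal in the following sense: for every admissible $(\chi,V)$ the compact set $\supp\chi\times(V\cap S^{n-1})$, which lies in the open set $\dotT^*\Omega\setminus\Gamma$, admits a single index $\nu$ with $\chi_\nu\equiv1$ on a neighbourhood of $\supp\chi$ and $V\subset\mathrm{int}\,V_\nu$. The key estimate is then a standard microlocal transfer: writing $\widehat{u\chi}=(2\pi)^{-n}\widehat{u\chi_\nu}*\hat{\chi}$ and splitting the convolution into $|q|\le\epsilon|k|$ and $|q|>\epsilon|k|$, the first region is controlled by $\|u\|_{N,V_\nu,\chi_\nu}$ (because $k-q$ stays in $V_\nu$ for $k\in V$), while the second equals $\sup_{k\in V}|\langle u,\chi_\nu\Psi_k\rangle|$ with $\Psi_k$ built from the rapidly decreasing tail of $\hat\chi$; since that tail and all its derivatives decay faster than any power of $|k|$, the set $\{\chi_\nu\Psi_k:k\in V\}$ is bounded in $\calD(\Omega)$, so the second region is bounded by a single seminorm $p_B$. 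Thus for each admissible $(N,V,\chi)$ one gets $\|u\|_{N,V,\chi}\le C(\|u\|_{N',V_\nu,\chi_\nu}+p_B(u))$, and the countable subfamily $\{f_\nu\}$ together with $f_0$ already generates the normal topology.

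With these two reductions the map $\Phi=(\iota\circ f_0,(f_\nu)_\nu)\colon\calD'_\Gamma\to(\mathfrak{s}')^{\bbN}\times(\mathfrak{s})^{\bbN}$, where $\iota$ is the closed embedding of $\calD'(\Omega)$, is injective (already $f_0$ is), continuous, and by the previous paragraph a homeomorphism onto its image. To see that the image is closed I would argue directly on the defining relations rather than invoke completeness: the first block of coordinates cuts out the closed subspace $\calD'(\Omega)$, and each relation $\eta_\nu=g_\nu\,\widehat{u\chi_\nu}$ fixing a second-block coordinate is closed because $u\mapsto g_\nu\widehat{u\chi_\nu}$ is continuous from $\calD'(\Omega)$ into $\calS'$ and agrees with the $\eta_\nu$-coordinate on the dense image of $\calD(\Omega)$, hence on the closure. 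A point in the closure of $\Phi(\calD'_\Gamma)$ therefore yields a genuine $u\in\calD'(\Omega)$ with $g_\nu\widehat{u\chi_\nu}=\eta_\nu\in\mathfrak{s}$ for all $\nu$; since the pairs $(\chi_\nu,V_\nu)$ were chosen to meet every codirection outside $\Gamma$, this forces $\WF(u)\subset\Gamma$, i.e. $u\in\calD'_\Gamma$, so $\Phi(\calD'_\Gamma)$ is closed.

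Finally I would conclude by permanence: $\mathfrak{s}'$ is the strong dual of the Fr\'echet--Schwartz space $\mathfrak{s}\cong\calS$ and hence an $(LS)$-space, while $\mathfrak{s}$ is itself Fr\'echet--Schwartz; both are (PLS)-spaces, and the class of (PLS)-spaces is stable under countable products and closed subspaces, so $\calD'_\Gamma$ is a (PLS)-space. Since the strong dual of a (PLS)-space is an (LFS)-space, and since by Proposition~\ref{inductprop} and Lemma~\ref{stronglem} that strong dual is precisely $\calE'_\Lambda$ with its inductive topology, $\calE'_\Lambda$ is an (LFS)-space. The hard part is the countable reduction of the second paragraph together with the closedness argument: one must produce a single countable cone--cutoff family that at once dominates every H\"ormander seminorm (via the microlocal transfer estimate, whose high-frequency remainder must be reabsorbed into the strong $\calD'(\Omega)$ seminorms carried by the $(\mathfrak{s}')^{\bbN}$ factor) and is fine enough to recover the full wavefront condition in the limit.
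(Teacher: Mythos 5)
Your proposal is correct and follows the same overall strategy as the paper: reuse the initial-topology description from the nuclearity proof, reduce the cone--cutoff family to a countable one, embed into $(\mathfrak{s}')^\bbN\times(\mathfrak{s})^\bbN$, and invoke permanence of the (PLS) class plus the duality results to handle $\calE'_\Lambda$. The differences are in which steps are carried out versus cited. First, the paper simply quotes Valdivia for the isomorphism $\calD'(\Omega)\cong(\mathfrak{s}')^\bbN$ and cites \cite{BDH-13} (or Grigis--Sj{\"o}strand) for the fact that the additional seminorms can be taken from a countable family; you re-derive both, and your microlocal transfer estimate (splitting $\widehat{u\chi}=(2\pi)^{-n}\widehat{u\chi_\nu}*\hat{\chi}$ into a conic region controlled by $\|u\|_{N',V_\nu,\chi_\nu}$ and a tail reabsorbed into a $p_B$ seminorm) is essentially the content of the cited countability result, so nothing is lost, only length gained. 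Second, and more substantively, for closedness of the image the paper explicitly offers two routes --- a direct proof, or an appeal to the completeness of $\calD'_\Gamma$ established later in Corollary~\ref{completecor} by bornological methods --- and takes the second; you carry out the first. Your direct argument is sound: since $v\mapsto g_\nu\widehat{v\chi_\nu}$ is continuous from $\calD'(\Omega)$ into $\calS'$, any point in the closure of the image comes from a $u\in\calD'(\Omega)$ with $g_\nu\widehat{u\chi_\nu}=\eta_\nu\in\mathfrak{s}$ for all $\nu$, and the cofinality of your countable family then forces $\WF(u)\subset\Gamma$. This route has the advantage of avoiding the forward reference, and, as the paper itself remarks, it delivers the completeness of $\calD'_\Gamma$ as a byproduct instead of consuming it as an input. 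Third, for the (LFS) conclusion you quote the general fact that strong duals of (PLS)-spaces are (LFS)-spaces; the paper acknowledges this as well known but re-proves it (via Mackey duals, reduced projective limits and K{\"o}the's duality theorem) for lack of an explicit reference, so your citation-only treatment is acceptable but less self-contained. One minor slip: to identify the strong dual of $\calD'_\Gamma$ with $\calE'_\Lambda$ you should invoke Proposition~\ref{firstdualprop} (the dual of $\calD'_\Gamma$ is $\calE'_\Lambda$) together with Lemma~\ref{stronglem} (the strong topology coincides with the inductive one); Proposition~\ref{inductprop} is the duality in the opposite direction.
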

\begin{proof}
By \cite[p. 385]{Valdivia} $\calD'(\Omega)$ is known to be isomorphic to 
$(\mathfrak{s}')^\bbN$. Moreover, we showed in ref.~\cite{BDH-13} (see alternatively 
~\cite[p.~80]{Grigis}) that the additional
seminorms of $\calD'_\Gamma(\Omega)$ could be chosen in a
countable set $\{p_n\telque n\in \bbN\}$. Thus, the proof of our previous lemma 
gives an embedding of $\calD'_\Gamma(\Omega)$ in $(\mathfrak{s}')^\bbN\times 
(\mathfrak{s})^\bbN$. Finally, one can either prove directly that this subspace is 
closed (and deduce in this way that $\calD'_\Gamma(\Omega)$ is complete) or merely 
use the completeness of $\calD'_\Gamma(\Omega)$ proved below in 
Corollary~\ref{completecor} to deduce that
it is necessarily closed as any complete 
subspace of a Hausdorff space. Finally, it is known 
(see e.g.~\cite[p.~96]{Wengenroth}) that a closed subspace of a (PLS)-space 
is again a (PLS)-space. The fact that the dual is an (LFS)
space is also well-known but we recall the argument by lack of an explicit 
reference. Since a complete Schwartz space is semi-Montel, its strong dual 
is also its Mackey dual, since closed subspaces of (LS) spaces are still 
(LS) spaces, we can assume the projective limit of (LS)-spaces to be reduced, 
so that one can apply \cite[\S 22.7.(9) p 294]{Kothe-I} to get its 
Mackey dual as an inductive limit of Mackey duals. But an (LS) space 
is known to be a Montel space thus this Mackey dual is also its 
strong dual which is known to be a Fr\'echet-Schwartz space 
(see e.g. \cite[Prop 8.5.26 p 293]{Perez-Carreras} or \cite[p 28]{Hogbe-81}).

\end{proof}

The fact that $\calD'_\Gamma$ is also nuclear for
the H\"ormander topology was stated by 
Fredenhagen and Rejzner~\cite{Fredenhagen-11}. However,
since the proof was only sketched, we demonstrate it
for completeness.
\begin{prop}
The space $\calD'_\Gamma$ with the H\"ormander topology is
nuclear.
\end{prop}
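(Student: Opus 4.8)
The plan is to apply Grothendieck's theorem exactly as in the proof of the previous proposition, exhibiting the H\"ormander topology as an initial topology for a family of continuous maps into nuclear spaces. The whole difference with the normal topology sits in the type-(i) seminorms: here we only have at our disposal the weak seminorms $p_f(u)=|\langle u,f\rangle|$, not the seminorms $p_B$ of uniform convergence on bounded subsets of $\calD(\Omega)$. The weak seminorms cause no trouble, since each is the pullback of the (nuclear, because finite-dimensional) target $\bbK$ through the continuous map $u\mapsto\langle u,f\rangle$; equivalently they define the initial topology of the inclusion of $\calD'_\Gamma$ into $\calD'(\Omega)$ with its weak topology $\sigma(\calD'(\Omega),\calD(\Omega))$, which is itself nuclear as an initial topology for maps into $\bbK$.

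First I would observe that the maps $f_i$ into $\calS$ used in the previous proof are still continuous for the H\"ormander topology, because the bound $||f_i(u)||_{N,m}\le C\sup_{|\alpha|\le m}||u||_{N,W,x^\alpha\chi}$ involves only seminorms of type (ii), which are common to both topologies. These maps alone, however, no longer suffice: in the normal case the low-frequency remainder $\sup_{k\in V}(1+|k|)^N|h(k)\widehat{u\chi}(k)|$ was absorbed into a seminorm $p_B$, and this is precisely the step that becomes unavailable. The main point is therefore to route the seminorms $||\cdot||_{N,V,\chi}$ through a better adapted nuclear target rather than through $\calS$ with a cutoff.

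To this end, for each pair $i=(V,\chi)$ with $(\supp\chi\times V)\cap\Gamma=\emptyset$ I would take as target the Schwartz space $\calSV$ of rapidly decreasing functions on the closed cone $V$ and define $f_i:\calD'_\Gamma\to\calSV$ by restriction, $f_i(u)=\widehat{u\chi}|_V$. This is well defined: $\widehat{u\chi}$ is smooth and, since the directions of $V$ avoid $\WF(\chi u)$, the function $\widehat{u\chi}$ together with all its derivatives $\partial^\alpha\widehat{u\chi}=\widehat{(ix)^\alpha u\chi}$ is fast decreasing on $V$, so that $f_i(u)\in\calSV$. The space $\calSV$ is nuclear, being a quotient of the nuclear space $\calS$ by the closed subspace of Schwartz functions vanishing to infinite order on $V$. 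The crucial computation is that the seminorms pull back onto seminorms of type (ii) with no leftover: because $\partial^\alpha\widehat{u\chi}=\widehat{(ix)^\alpha u\chi}$ and $\supp(x^\alpha\chi)\subset\supp\chi$, one gets $||f_i(u)||_{N,m}=\sup_{|\alpha|\le m}||u||_{N,V,x^\alpha\chi}$, a finite combination of H\"ormander seminorms, so $f_i$ is continuous; and taking $m=0$ gives $||f_i(u)||_{N,0}=||u||_{N,V,\chi}$ exactly.

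With these maps the verification closes at once. Every map in the family $\{u\mapsto\langle u,f\rangle\}_{f\in\calD(\Omega)}\cup\{f_i\}_{i=(V,\chi)}$ is continuous for the H\"ormander topology, so the associated initial topology is coarser; conversely each defining seminorm of the H\"ormander topology is a pullback (the $p_f$ through the scalar maps, the $||\cdot||_{N,V,\chi}$ through $f_i$ at $m=0$), so the initial topology is also finer. Hence the H\"ormander topology coincides with an initial topology for a family of continuous maps into nuclear spaces, and Grothendieck's theorem gives nuclearity. I expect the only genuinely delicate point to be the change of target from $\calS$ to $\calSV$ --- checking that restriction to the cone lands in $\calSV$ and that $\calSV$ is nuclear --- rather than any new estimate, since all the analytic content is already packaged in the identity $||f_i(u)||_{N,0}=||u||_{N,V,\chi}$.
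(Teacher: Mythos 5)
Your treatment of the weak seminorms and your observation that the type-(ii) seminorms must be re-routed through a different nuclear target are both sound, but there is a genuine gap exactly at the point you dismiss as a routine check. Your argument uses two different spaces under the single name $\calSV$ and needs them to coincide. The seminorm identity $||f_i(u)||_{N,m}=\sup_{|\alpha|\le m}||u||_{N,V,x^\alpha\chi}$ lives in the \emph{intrinsic} space of Whitney jets on $V$ topologized by suprema over $V$ (note that a closed cone may be a single ray, or have a Cantor set as cross-section, so $\partial^\alpha f$ is meaningless for a function defined only on $V$; one must work with jets). Your nuclearity argument, on the other hand, concerns the \emph{quotient} $\calS/N_V$, where $N_V$ is the closed subspace of Schwartz functions flat on $V$; its canonical seminorms are \emph{infima} of $\calS$-seminorms over all Schwartz extensions, not suprema over $V$. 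The two spaces coincide topologically precisely when every Whitney jet on $V$ that is rapidly decreasing together with all its jet components extends to an element of $\calS(\bbR^n)$, with control of the seminorms --- a global Whitney extension theorem with decay, which you neither state nor prove. Without it you face a dilemma: mapping into the intrinsic space, $f_i$ is trivially continuous (the identity you exhibit is just the definition of the seminorms, via $\partial^\alpha\widehat{u\chi}=\widehat{(ix)^\alpha u\chi}$), but nuclearity of the target is unproven; mapping into the quotient, nuclearity is standard, but well-definedness and, above all, continuity of $f_i$ are unproven.

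Moreover, the missing extension statement is not a technicality: it is concentrated exactly where the paper has to work. Away from the origin the extension is cheap: with $h$, $\zeta$ and $g=(1-h)\zeta$ as in the proof for the normal topology (choosing $\gamma=1$ on a conic neighborhood of $V$), the function $g\,\widehat{u\chi}$ lies in $\calS$ and agrees with $\widehat{u\chi}$ to infinite order on $V\cap\{|k|\ge 2\}$. The obstruction is the low-frequency region near the vertex of the cone, where the homogeneous cutoff $\zeta$ fails to be smooth. The natural correction $\theta\cdot(1-g)\widehat{u\chi}$, with $\theta\in\calD(\bbR^n)$ equal to $1$ on $\{|k|\le 3\}$, does yield a valid Schwartz extension, so $f_i(u)$ is at least well defined in the quotient; but its $\calS$-seminorms involve $\sup_{|k|\le 4}|\widehat{x^\alpha u\chi}(k)|$, i.e.\ seminorms $p_B$ with $B=\{x^\alpha\chi e_k\telque |k|\le 4\}$ a bounded subset of $\calD(\Omega)$. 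Such seminorms belong to the normal topology and are precisely what the H\"ormander topology lacks (in the extreme case $\Gamma=\dotT^*\Omega$ they are strong seminorms of $\calD'(\Omega)$ while the H\"ormander topology is the weak one). So this choice of extension proves continuity of $f_i$ for the normal topology --- merely reproducing the previous proposition --- but not for the H\"ormander topology, and any proof of continuity into the quotient is essentially equivalent to the missing extension theorem with bounds. This low-frequency difficulty is exactly what the paper's proof addresses by adjoining the second family of maps $g_j(u)=h\,\widehat{u\chi}|_K$ into the nuclear spaces $C^\infty(K)$, with $K$ a compact parallelepiped satisfying $(\supp\chi\times K)\cap\Gamma=\emptyset$: finitely many such maps control the remainder $\sup_{k\in V}(1+|k|)^N|h(k)\widehat{u\chi}(k)|$ using only H\"ormander-continuous data. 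To repair your proof you must either prove the conic Whitney extension theorem with rapid decay (a genuine construction: a dyadic decomposition along the cone plus a separate argument at the vertex), or incorporate something equivalent to the paper's $C^\infty(K)$ maps; as written, the step you call ``the only genuinely delicate point'' is the whole theorem.
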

\begin{proof}
The map $f_0: \calD'_\Gamma\to \calD'(\Omega)$ goes now from
$\calD'_\Gamma$ with the H\"ormander topology to
$\calD'(\Omega)$ with the weak topology, which 
is also nuclear (every locally convex space being
nuclear for its weak topology~\cite[p.~202]{Hogbe-81}).
The end of the previous proof cannot be used because
the seminorm $p_B$ is not available in the weak topology.
Instead we define, for each $j=(K,\chi)$ where $K$ is 
the image of $[0,1]^n$ by an invertible linear map  $L$ such that 
$(\supp\chi\times K) \cap \Gamma=\emptyset$, the additional
map $g_j:\calD'_\Gamma\to C^\infty(K)$,
where $C^\infty(K)$ is 
the space of functions $f\in C^\infty(\mathring{K})$
such that $f$ and all its derivatives have continuous extensions
to $K$. The space $C^\infty(K)$, equipped with the seminorms
$\pi_{m,K}$,  is a nuclear space because 
$K=L([0,1]^n)$, where
$L$ is a linear change of variable, and $C^\infty([0,1]^n)$ is 
nuclear~\cite[pp.~325 and 378]{Valdivia} or \cite{Ogrodzka}.

 We define  $g_j(u)=h\,\widehat{u\chi}|_K$
(i.e. the restriction to $K$ of the smooth function
$h\widehat{u\chi}$).
The maps $g_j$ are continuous because
$\pi_{m,K}(h\widehat{u\chi})  \le 
  2^m \pi_{m,K} (h) \pi_{m,K}(\widehat{u\chi})$
and $\pi_{m,K}(\widehat{u\chi})\le \sup_{|\alpha|\le m}
   ||u||_{0,V,x^\alpha \chi}$ with $V=\mathbb{R}_+K.$
Conversely, for $V$ a closed cone such that 
$(\supp\chi\times V)\cap\Gamma=\emptyset$, 
there is a finite set of $K_\ell=L_\ell([0,1]^n)$
such that $(\supp h\cap V)\subset \cup_{\ell=1}^p
\mathring{K}_\ell$ and $(\supp\chi\times K_\ell)\cap\Gamma=\emptyset$.
Indeed, for every $k\in (\supp h\cap V)$, there is parallelepiped
$K_k=L_k([0,1]^n)$, with one vertex at zero, such that
$k\in \mathring{K}_k$ and $(\supp\chi\times
K_k)\cap\Gamma=\emptyset$. Thus, 
$(\supp h\cap V)\subset \cup_k \mathring{K}_k$ and we can
extract a finite covering because $\supp h\cap V$ is compact.
To estimate $(1+|k|)^N|h(k)\widehat{u\chi}(k)|$
in the right hand side of inequality~\eqref{nucbound},
we can take $|k|\le 2$ because $h(k)=0$ for $|k|>2$
and, for every $k\in \supp h\cap V$, we have
$|h\widehat{u\chi}(k)|=g_\ell(u)(k)$ if $k\in K_\ell$
and 
$|h\widehat{u\chi}(k)|=0$ if $k\notin K_\ell$,
where $g_\ell(u)=|h\widehat{u\chi}|_{K_\ell}$.
Thus, for all $k\in V$,
\begin{eqnarray*}
(1+|k|)^N|h(k)\widehat{u\chi}(k)|
 & \le &
 3^N \max_{\ell=1,...,p}
   \pi_{0,K_\ell}(g_{j_\ell}(u)),
\end{eqnarray*}
and
\begin{eqnarray*}
||u||_{N,V,\chi} & \le &
||f_i(u)||_{N,0} + 3^N \max_{\ell=1,...,p}\left( 
   \pi_{0,K_\ell}(g_{j_\ell}(u))\right).
\end{eqnarray*}
Thus, the H\"ormander topology is nuclear because it is the initial
topology of $(f_i)$ and $(g_j)$.
\end{proof}

To complete this section, we show that
\begin{prop}
The space $\calE'_\Lambda$ with the strong topology is
nuclear.
\end{prop}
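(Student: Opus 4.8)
The plan is to exploit the description of $\calE'_\Lambda$ as a countable inductive limit together with the nuclearity of $\calD'_\Gamma$ that has just been established. By Lemma~\ref{stronglem} the strong topology $\beta(\calE'_\Lambda,\calD'_\Gamma)$ coincides with the inductive topology, so it suffices to prove that $\calE'_\Lambda=\lim_\to E_\ell$ is nuclear for the inductive topology. I would reduce this to two facts: (a) each step $E_\ell$ is nuclear, and (b) a countable locally convex inductive limit of nuclear spaces is again nuclear.

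For (a), recall that $E_\ell=\calE'_{\Lambda_\ell}(L_\ell)$ carries the topology induced by $\calD'_{\Lambda_\ell}$ equipped with its normal topology, and that $E_\ell$ is a \emph{closed} subspace of $\calD'_{\Lambda_\ell}$ (it is the intersection of the kernels of the continuous maps $u\mapsto\langle u,\phi\rangle$ with $\supp\phi\subset L_\ell^c$). Since $\Lambda_\ell$ is, by construction, a closed cone in $\dotT^*\Omega$, the proposition showing that $\calD'_\Gamma$ with its normal topology is nuclear applies verbatim with $\Gamma$ replaced by $\Lambda_\ell$, giving that $\calD'_{\Lambda_\ell}$ is nuclear. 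As nuclearity is inherited by arbitrary linear subspaces \cite{Treves}, each $E_\ell$ is nuclear.

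For (b), I would invoke the standard stability result that the locally convex inductive limit of a sequence of nuclear spaces is nuclear \cite{Pietsch,Hogbe-81}; equivalently, $\bigoplus_\ell E_\ell$ is nuclear because countable direct sums of nuclear spaces are nuclear, and $\calE'_\Lambda$ is a quotient of this direct sum by a closed subspace, quotients of nuclear spaces by closed subspaces being nuclear. The only hypothesis worth checking is that the inductive limit is Hausdorff (reduced), which holds because its topological dual $\calD'_\Gamma$ separates its points, by Proposition~\ref{inductprop} together with the separating pairing of Lemma~\ref{pairinglem}. Combining (a) and (b) yields nuclearity of $\calE'_\Lambda$ for the inductive, hence the strong, topology.

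The steps are all soft functional analysis; the genuine work was done earlier in proving that the strong and inductive topologies agree and that $\calD'_{\Lambda_\ell}$ is nuclear. Consequently I do not anticipate a substantial obstacle here: the only point requiring care is bookkeeping, namely that the stability theorem is applied to a genuinely Hausdorff inductive limit with the continuous linking injections $E_\ell\hookrightarrow E_{\ell+1}$ used to define it, so that no regularity or compact-regularity of the limit (which in fact fails, since the limit is not strict) is needed.
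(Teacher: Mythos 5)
Your proof is correct and takes essentially the same route as the paper's: each $E_\ell$ is nuclear as a linear subspace of the nuclear space $\calD'_{\Lambda_\ell}$ (the earlier proposition applied to the closed cone $\Lambda_\ell$), and $\calE'_\Lambda$ is then nuclear as a countable inductive limit of nuclear spaces, the strong topology coinciding with the inductive one by Lemma~\ref{stronglem}. The additional points you spell out (Hausdorffness of the limit and the direct-sum/quotient reduction) are simply the standard proof of the stability theorem that the paper invokes by citation.
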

\begin{proof}
Each $E_\ell$ is nuclear because it is
a vector subspace of the nuclear
space $\calD'_{\Lambda_\ell}$ with the normal
topology~\cite[p.~514]{Treves}.
Thus, $\calE'_\Lambda$ is nuclear since it is the countable
inductive limit of the nuclear spaces $E_\ell$~\cite[p.~514]{Treves}.
\end{proof}

\section{Bornological properties}
\label{borno-sect}

We study the bornological properties of $\calD'_\Gamma$ 
because they enable us to prove that $\calD'_\Gamma$ is complete
and because they have a better behaviour than the topological
properties with respect to the tensor product of sections.
More precisely, if $\Gamma_c(E)$ is the space of
compactly supported sections of a vector bundle
$E$ over $M$, then there is a bornological isomorphism between 
$\Gamma_c(E\otimes F)$ and $\Gamma_c(E)\otimes_{C^\infty(M)}^\beta\Gamma(F)$,
where $F$ is another vector bundle over $M$~\cite{Nigsch-13}.
As a consequence, there is also a bornological isomorphism
between the distribution spaces $\Gamma_c(E\otimes F)'$ and 
$\Gamma(E^*)\otimes_{C^\infty(M)}^{\beta} \Gamma_c(F)'$~\cite{Nigsch-13}.

\subsection{Bornological concepts}
We start by recalling some elementary concepts of
bornology theory~\cite{Hogbe-77}.
\begin{dfn}
A \emph{bornology} on a set $X$ is a family $\calB$ of subsets
of $X$ satisfying the following axioms:
\begin{itemize}
\item [B.1:] $\calB$ is a covering of $X$, i.e.
  $X=\cup_{B\in \calB} B$.
\item [B.2:] $\calB$ is hereditary under inclusion:
   if $A\in \calB$ and $B\subset A$, then $B\in \calB$.
\item [B.3:] $\calB$ is stable under finite union.
\end{itemize}
\end{dfn}
A pair $(X,\calB)$ is called a \emph{bornological set} and
the elements of $\calB$ are called the \emph{bounded subsets}
(or the bounded sets) of $X$. 


To define a convex bornological space we need
the concept of a \emph{disked hull}~\cite[p.~6]{Hogbe-77}.
We recall that a subset $A$ of a vector space is
a \emph{disk} if it is convex 
and balanced (i.e. if $x\in A$ and $\lambda\in \bbK$
with $|\lambda|\le 1$, then $\lambda 
x\in A$)~\cite[p.~4]{Hogbe-77}.

\begin{dfn}
If $E$ is a vector space, the disked hull
of a subset $A$ of $E$, denoted by $\Gamma(A)$, is
the smallest disk containing $A$.
\end{dfn}

\begin{dfn}
Let $E$ be a vector space on $\bbK$. A bornology
$\calB$ on $E$ is said to be 
a \emph{convex bornology}
if, for every $A$ and $B$ in $\calB$ and every
$t\in \bbK$, the
sets $A+B$, $tA$ and $\Gamma(A)$ 
belong to $\calB$.
Then $E$ or $(E,\calB)$ is called a convex bornological space.
\end{dfn}

We shall also need to define the convergence of
a sequence in a convex bornological space ~\cite[p.~12]{KrieglMichor}:
\begin{dfn}
Let $E$ be a convex bornological space. A sequence $x_n$ in
$E$ is said to \emph{Mackey-converge} to $x$ if there
exist a disked bounded subset $B$ of $E$ and a sequence
$\alpha_n$ of positive real numbers tending to zero, such that
$(x_n-x)\in \alpha_n B$ for every integer $n$.
\end{dfn}
One writes $x_n\Macto x$ to express the fact that the
sequence $x_n$ Mackey-converges to $x$.
Note that we could equivalently define Mackey convergence
in terms of a bounded subset $B$ which is not disked,
because the disked hull of a bounded set is bounded by
definition of convex bornological spaces.


A convex bornological space is called \emph{separated}
if the only vector subspace of $\calB$ is $\{0\}$.
A convex bornological space is separated iff every Mackey-convergent
sequence has a unique limit~\cite[p.~28]{Hogbe-77}.

\subsection{Completeness of $\calD'_\Gamma$}

The set of bounded maps from a convex bornological space
$E$ to $\bbK$ is called the \emph{bornological dual}
of $E$ and is denoted by $E^\times$.

A powerful theorem of bornology states~\cite[p.~77]{Hogbe-77}.
\begin{thm}
\label{completethm}
If a convex bornological space $E$ 
is regular (i.e. if  $E^\times$ separates points in 
$E$~\cite[p.~66]{Hogbe-77}), then
its bornological dual $E^\times$, endowed with its
natural topology, is a complete locally convex 
space.
\end{thm}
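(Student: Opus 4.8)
The plan is to recognise the natural topology on $E^\times$ as the topology of uniform convergence on the bounded subsets of $E$, that is, the locally convex topology generated by the seminorms $p_B(f)=\sup_{x\in B}|f(x)|$ as $B$ ranges over $\calB$ (equivalently, over the bounded disks of $E$, since the disked hull of a bounded set is again bounded by definition of a convex bornology). These seminorms are finite precisely because every $f\in E^\times$ sends the bounded set $B$ into a bounded subset of $\bbK$. Each singleton $\{x\}$ is bounded (it is covered by some member of $\calB$, and $\calB$ is hereditary), so the seminorms $p_{\{x\}}(f)=|f(x)|$ already separate the points of $E^\times$, making the natural topology Hausdorff; regularity of $E$ will enter later, ensuring that the gauges of the bounded disks are honest norms. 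The goal is then to show that every Cauchy net in $E^\times$ converges to an element of $E^\times$.

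First I would pass to pointwise limits. Let $(f_\alpha)$ be a Cauchy net in $E^\times$ for the natural topology. Since each singleton is bounded, for every fixed $x\in E$ the net $(f_\alpha(x))$ is Cauchy in the complete field $\bbK$, hence converges to a scalar that I call $f(x)$. The resulting map $f:E\to\bbK$ is linear, being a pointwise limit of the linear maps $f_\alpha$. At this stage $f$ is only a linear functional, not yet known to lie in $E^\times$, and the whole content of the theorem is to upgrade this pointwise limit to a bounded functional and to promote pointwise convergence to convergence in the natural topology.

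The key step, and the main obstacle, is an interchange-of-limits estimate carried out uniformly on each bounded disk. Fix a bounded disk $B$ and $\varepsilon>0$. By the Cauchy condition there is an index $\alpha_0$ with $\sup_{x\in B}|f_\alpha(x)-f_\beta(x)|\le\varepsilon$ for all $\alpha,\beta\ge\alpha_0$. Holding $\alpha\ge\alpha_0$ fixed and letting $\beta$ run, the pointwise convergence $f_\beta(x)\to f(x)$ yields $|f_\alpha(x)-f(x)|\le\varepsilon$ for every $x\in B$, hence $p_B(f_\alpha-f)\le\varepsilon$. Taking $\alpha=\alpha_0$ and using that $f_{\alpha_0}\in E^\times$ is bounded on $B$, I obtain $\sup_{x\in B}|f(x)|\le\sup_{x\in B}|f_{\alpha_0}(x)|+\varepsilon<\infty$. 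Since $B$ was an arbitrary bounded disk, $f$ is bounded on every bounded set, so $f\in E^\times$; and the inequality $p_B(f_\alpha-f)\le\varepsilon$ for $\alpha\ge\alpha_0$, valid for each $B$, is exactly the assertion that $f_\alpha\to f$ in the natural topology. This establishes completeness.

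Finally, I would record the structural reformulation that makes the result transparent and pinpoints the role of regularity. Writing $E$ as the bornological inductive limit $\lim_{\rightarrow}E_B$ of the normed spaces $E_B$ spanned by the bounded disks $B$, with the gauge of $B$ as norm (the gauge is a norm rather than a seminorm precisely because regularity forbids a nonzero vector from lying in $\lambda B$ for all $\lambda>0$), one identifies $E^\times$ with the projective limit $\lim_{\leftarrow}E_B'$ of the Banach duals, the transition maps being restrictions; under this identification $p_B$ is the dual norm on $E_B'$. A projective limit of complete spaces is complete, which recovers the theorem. The delicate point in either approach is the same: commuting the limit in $\alpha$ with the supremum over $B$, which succeeds only because the Cauchy condition is uniform over each bounded set and not merely pointwise.
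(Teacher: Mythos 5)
The first thing to note is that the paper does not prove Theorem~\ref{completethm} at all: it is imported as a black box from Hogbe-Nlend~\cite[p.~77]{Hogbe-77}, so there is no internal proof to compare yours against, and your proposal should be judged on its own merits. Judged so, it is correct, and it is the standard argument for completeness of a dual under uniform convergence on a covering family of sets: axioms B.1 and B.2 make singletons bounded, so a Cauchy net $(f_\alpha)$ for the natural topology has a well-defined linear pointwise limit $f$; freezing $\alpha\ge\alpha_0$ in the uniform Cauchy estimate over a bounded disk $B$ and letting $\beta$ run gives $p_B(f_\alpha-f)\le\varepsilon$, which simultaneously shows that $f$ is bounded on $B$ (and every bounded set lies in a bounded disk, so $f\in E^\times$) and that $f_\alpha\to f$ in the natural topology. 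The packaging of $E^\times$ as the projective limit $\lim_{\leftarrow}(E_B)'$ of the Banach duals, with $p_B$ the dual norm, is an equally valid reformulation of the same estimate. One observation is worth making explicit: as you half-notice yourself, \emph{regularity is never used} in your completeness argument --- Hausdorffness of the natural topology comes from point evaluations, and boundedness of the limit functional is inherited from $f_{\alpha_0}$. So your proof actually establishes the conclusion for an arbitrary convex bornological space; the regularity hypothesis is carried along from Hogbe-Nlend's formulation (where it makes the duality separating, hence $E$ separated and the gauges of bounded disks genuine norms, exactly as you locate it), and it is what the paper dutifully verifies before invoking the theorem in Corollary~\ref{completecor}. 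Your argument shows that this verification, while harmless, is not needed for the completeness conclusion itself --- there regularity only matters for identifying $E^\times$ with $\calD'_\Gamma$ in the first place (Proposition~\ref{borndual}), not for completeness of $E^\times$.
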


We are now going to build a bornological space $E$
such that $E^\times$ with its natural topology
is equal to $\calD'_\Gamma$ with its normal
topology. This implies the completeness of $\calD'_\Gamma$.

Recall that $E_\ell$ is the space
$\calE'_{\Lambda_\ell}(L_\ell)$ of the 
distributions compactly supported on $L_\ell$  whose
wavefront set is included in $\Lambda_\ell$,
where the family $(L_\ell)$ exhausts $\Omega$
and the family $(\Lambda_\ell)$ exhausts $\Lambda$.
To every locally convex space $E_\ell$ we associate
the convex bornological space ${}^bE_\ell$ which is
the vector space $E_\ell$ equipped with its
von Neumann bornology (i.e. the bornology defined by 
the bounded sets of the locally convex space 
$E_\ell$)~\cite[p.~48]{Hogbe-77}.
Let $E$ be the bornological inductive limit
of ${}^bE_\ell$, which is the vector space
$\calE'_\Lambda$ equipped with the bornology
defined by the bounded sets of $E_\ell$ for all
integers $\ell$~\cite[p.~33]{Hogbe-77}.

The bornological dual $E^\times$
of a convex bornological space $E$ is a locally convex
space for the natural topology defined by
the bounded sets of $E$. 
In other words, the seminorms of 
$E^\times$ are of the form
$p_{B'}(u)=\sup_{v\in B'}|\langle u,v\rangle|$,
where $B'$ runs over the bounded sets of $E$. 

We start by three lemmas, undoubtedly well-known to experts :
\begin{lem}\label{Mactoplem}
If $E$ is a quasi-complete, Hausdorff locally convex space
whose strong dual is a Schwartz space, then the 
Mackey-convergence of a sequence in $E$
is equivalent to its topological convergence.
In particular, this is the case for
$\calD'(\Omega)$ and $\calD'_\Gamma$.
\end{lem}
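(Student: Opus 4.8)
The plan is to prove the two implications separately, the nontrivial one being that topological convergence forces Mackey convergence. Since the statement concerns a sequence, I would set $y_n=x_n-x$ and reduce to null sequences: a sequence Mackey-converges to $0$ exactly when there is a closed bounded disk $B$ with $p_B(y_n)\to 0$, where $p_B$ is the gauge of $B$; equivalently, $y_n\to 0$ in the local normed space $E_B=(\mathrm{span}\,B,p_B)$. The easy implication ($\Macto$ implies topological) is immediate: if $y_n\in\alpha_n B$ with $B$ bounded and $\alpha_n\to 0$, then every continuous seminorm $p$ satisfies $p(y_n)\le\alpha_n\sup_{b\in B}p(b)\to 0$, the supremum being finite because $B$ is bounded.

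For the converse the key structural input is a duality between Schwartz \emph{spaces} and Schwartz \emph{bornologies}. I would argue that, because the strong dual $E'_\beta$ is a Schwartz space, the von Neumann bornology of $E$ is a Schwartz bornology, i.e. for every closed bounded disk $B$ there is a closed bounded disk $C\supseteq B$ such that $B$ is precompact in $E_C$ (see~\cite{Hogbe-77}). Quasi-completeness enters precisely here: a closed bounded disk in a quasi-complete space is completant, so $E_B$ and $E_C$ are genuine Banach spaces and ``precompact'' means ``relatively compact''. Concretely this can also be seen directly from the definition of the Schwartz property applied to the polar neighborhood $B^\circ$ in $E'_\beta$, which produces a bounded $C$ with $C^\circ$ precompact in the local Banach space of $E'_\beta$ at $B^\circ$; Schauder's theorem then transfers precompactness to the transposed inclusion $E_B\hookrightarrow E_C$.

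With this in hand the argument is short. Given $y_n\to 0$ in $E$, the set $\{y_n\}$ is bounded, so its closed disked hull $B_0$ is a completant bounded disk. Choosing $C\supseteq B_0$ as above, the sequence $\{y_n\}\subseteq B_0$ is relatively compact in the Banach space $E_C$. I would then combine this with convergence in the coarser topology: since the inclusion $E_C\hookrightarrow E$ is continuous and $E$ is Hausdorff, any $E_C$-norm limit of a subsequence of $y_n$ must coincide with its $E$-limit $0$. Hence every subsequence has a sub-subsequence converging to $0$ in $E_C$, so $y_n\to 0$ in $E_C$; taking $B=C$ gives $y_n\Macto 0$. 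The main obstacle is exactly the structural step of the previous paragraph --- deducing the Schwartz bornology of $E$ from the Schwartz property of its strong dual --- since everything after it is a routine subsequence argument.

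Finally, for the stated instances I note that $\calD'(\Omega)$ is complete and its strong dual is $\calD(\Omega)$, while $\calD'_\Gamma$ is quasi-complete (via the closed-subspace realization of Corollary~\ref{PLS} inside the complete product $(\mathfrak{s}')^\bbN\times(\mathfrak{s})^\bbN$) and its strong dual is $\calE'_\Lambda$. In both cases the strong dual was shown to be nuclear, and nuclear spaces are Schwartz, so the hypotheses apply.
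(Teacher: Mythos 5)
Your proof is correct, but it takes a genuinely different route from the paper's. The paper reduces the hard implication to Grothendieck's \emph{strict Mackey convergence condition} (topologically convergent sequences are Mackey-convergent as soon as every compact set $K$ is compact in some $E_B$), and verifies that condition by quoting Randtke's theorem --- which concerns \emph{precompact} sets only --- and then using quasi-completeness to pass from ``$K$ precompact in $E_B$'' to ``$K$ compact in $E_{\bar B}$''. You instead establish the stronger, purely bornological statement that the von Neumann bornology of $E$ is a Schwartz bornology (every closed bounded disk $B$ is precompact in $E_C$ for some larger bounded disk $C$), and finish with an elementary sub-subsequence argument in the Banach space $E_C$, quasi-completeness entering only to make closed bounded disks completant. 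What your route buys: the structural step applies to all bounded sets rather than just compact ones (in passing it re-proves that such spaces are semi-Montel, cf.\ Proposition~\ref{semiMontel}), and the endgame is a routine metric-space argument rather than an appeal to Grothendieck. What it costs: that structural step is precisely the delicate point, and your sketch compresses it --- the inclusion $(E')_{C^\circ}\to(E')_{B^\circ}$ is \emph{not} literally the adjoint of $E_B\hookrightarrow E_C$, because the dual unit ball of $E_C$ is only the weak-$*$ closure of the isometrically embedded polar $C^\circ$; one needs this norming/bipolar remark (the weak-$*$ closure of a norm-precompact set in the Banach space $(E_B)'$ is norm-compact) before Schauder's theorem, in the form ``adjoint compact $\Rightarrow$ operator precompact'', can be applied --- or else a precise reference for this co-Schwartz duality rather than a generic pointer to~\cite{Hogbe-77}. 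Finally, your instantiation for $\calD'_\Gamma$ via Corollary~\ref{PLS} is legitimate and non-circular, since the only case of the lemma used downstream (in Lemma~\ref{MackeyDense}, hence in the completeness proof) is that of $\calD'(\Omega)$, which your argument treats independently; the paper instead invokes Proposition~\ref{quasicompleteprop} and Corollary~\ref{completecor} at this point, with the same deferred-verification structure.
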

\begin{proof}
In a locally convex space, every Mackey-convergent
sequence (for the von Neumann bornology) is topologically
convergent~\cite[p.~26]{Hogbe-77}.
We have to prove that, conversely, any topologically convergent
sequence is also Mackey convergent.
Grothendieck~\cite{Grothendieck-54} showed that this holds
if the \emph{strict Mackey convergence
condition} is satisfied:
In a Hausdorff topological vector
space $E$, the strict Mackey convergence condition holds if,
for every compact subset $K$ of $E$, there is a bounded disk $B$
in $E$ such that $K$ is compact in $E_B=Span(B)$ (normed with 
the gauge of $B$, see~\cite[p.~26]{Hogbe-77},
\cite[p.~158]{Perez-Carreras},\cite[p.~285]{Horvath},
\cite{Gilsdorf-92}).

To show that this condition is satisfied with the hypotheses
of the lemma, we use the following theorem due
to Randtke~\cite{Randtke-72}:
\emph{Let $E$ be a locally convex Hausdorff space whose strong dual is a
Schwartz space. Then, for each precompact set $A$ of
$E$, there is a balanced, convex, bounded subset
$C$ of $E$ such that $C$ absorbs $A$ and
$A$ is a precompact subset of $E_C$.}
Thus, there is an $\alpha>0$ such that
$A\subset \alpha C$ and, if we denote
$\alpha C$ by $B$, we have a balanced, convex and bounded
subset $B$ of $E$ such that $A\subset B$ and $A$
is precompact in $E_B=E_C$. 

Consider a compact set $K$ in a locally convex space
$E$ that satisfies the hypotheses of the lemma.
According to Randtke's theorem, there is a balanced
convex and bounded subset $B$ containing $K$
for which $K$ is precompact in $E_B$.
The closure $\bar{B}$ of $B$ is a balanced, convex,
bounded and closed subset of $E$ 
such that the injection
$E_B\hookrightarrow E_{\bar{B}}$ is 
continuous~\cite[p.~II.26]{Bourbaki-TVS}.
Moreover, $K$ is also precompact in $E_{\bar{B}}$.
Indeed, $K$ is precompact in $E_B$ iff it is totally bounded, i.e. for
every neighborhood $V$ of zero, equivalently $V=\epsilon B, \epsilon>0$ , there is a finite
number of points $(x_i)_{1\le i \le m}$ of $E_B$ 
such that $K \subset 
\cup_{i=1}^m (x_i+V)$~\cite[p.~145]{Horvath}.
Since $E_B\subset E_{\bar{B}}$,
the points $x_i$ also belong to $E_{\bar{B}}$ and $K\subset 
\cup_{i=1}^m (x_i+\epsilon\bar{B})$
is precompact in $E_{\bar{B}}$.
The closed bounded set $\bar{B}$ is complete because
$E$ is a quasi-complete Hausdorff locally convex 
space~\cite[p.~128]{Horvath}.  As a consequence,
$E_{\bar{B}}$ is complete~\cite[p.~207]{Horvath}
and $K$ is compact in $E_{\bar{B}}$ because
every precompact set is relatively compact in a
complete space~\cite[p.~235]{Horvath}
and $K$ is closed in $E_{\bar{B}}$
(it is the inverse image of $K$ under the continuous
injection $E_{\bar{B}}\hookrightarrow E$~\cite[p.~97]{Schaefer}).
Therefore, $E$ satisfies the strict Mackey convergence
condition and the first part of the lemma is proved.

It remains to show that the conditions of the lemma
are fulfilled for $\calD'(\Omega)$ and $\calD'_\Gamma$.
We know that $\calD'(\Omega)$ is quasi-complete for 
the weak topology and complete for the strong topology.
Its strong dual is $\calD(\Omega)$, which is  
a Schwartz space~\cite[p.~282]{Horvath}. 
Therefore, the Mackey and topological sequential convergence
coincide in $\calD'(\Omega)$ with the weak and strong
topologies.

We proved that $\calD'_\Gamma$ is quasi-complete
with the H\"ormander topology (prop.~\ref{quasicompleteprop})
and is complete with the normal topology 
(cor.~\ref{completecor}). Its strong dual
$\calE'_\Lambda$ is a Schwartz space because
it is nuclear~\cite[p.~581]{KrieglMichor}.
Therefore, the Mackey and topological sequential convergence
coincide in $\calD'_\Gamma$ with the H\"ormander and
normal topologies.
\end{proof}

\begin{lem}\label{MackeyDense}
$\calD(\Omega)$ is Mackey-sequentially-dense in $E$.
\end{lem}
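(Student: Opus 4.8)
The plan is to reduce Mackey convergence in the bornological inductive limit $E$ to ordinary topological convergence inside a single stratum $E_\ell$, where Lemma~\ref{Mactoplem} is available. The key structural remark is that, since $E$ carries the bornological inductive limit bornology of the ${}^bE_\ell$, every bounded set of $E_\ell$ is bounded in $E$; consequently, if a sequence Mackey-converges in some ${}^bE_\ell$ then it Mackey-converges in $E$. Thus it suffices, for each $v\in\calE'_\Lambda$, to produce a sequence $f_n\in\calD(\Omega)$ with $f_n\Macto v$ in one of the ${}^bE_\ell$.

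Fix $v\in\calE'_\Lambda$. Since the $E_\ell$ exhaust $\calE'_\Lambda$ we have $v\in E_{\ell_0}$ for some $\ell_0$, and because $L_\ell$ is a compact neighbourhood of $L_{\ell-1}$ one can enlarge $\ell_0$ to some $\ell$ with $\supp v\subset\mathring L_\ell$. I would take $f_n$ to be a regularization of $v$, say $f_n=v*\rho_{1/n}$ for a standard mollifier $\rho$. For $n$ large, $\supp f_n\subset\supp v+\overline{B_{1/n}}\subset L_\ell$, and $f_n$ is smooth, so $\WF(f_n)=\emptyset\subset\Lambda_\ell$; hence $f_n\in\calE'_{\Lambda_\ell}(L_\ell)=E_\ell$. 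By H\"ormander's regularization/density theorem recalled in Section~\ref{wwk-sect}, applied with the closed cone $\Lambda_\ell$ (which contains $\WF(v)$ by $v\in E_\ell$), the sequence $f_n$ converges to $v$ in $\calD'_{\Lambda_\ell}$, hence in the topology that $E_\ell$ inherits from $\calD'_{\Lambda_\ell}$.

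It remains to upgrade topological convergence to Mackey convergence. The space $\calD'_{\Lambda_\ell}$ is of exactly the same type as $\calD'_\Gamma$ (it is the space attached to the closed cone $\Lambda_\ell$), so it is a quasi-complete Hausdorff locally convex space whose strong dual is nuclear, hence Schwartz; Lemma~\ref{Mactoplem} therefore applies and gives $f_n\Macto v$ in $\calD'_{\Lambda_\ell}$ for the von Neumann bornology. Restricting to the linear subspace $E_\ell$ is immediate: if $f_n-v\in\alpha_n B$ with $B$ a bounded disk of $\calD'_{\Lambda_\ell}$ and $\alpha_n>0$ tending to $0$, then $f_n-v\in E_\ell$ forces $f_n-v\in(\alpha_n B)\cap E_\ell=\alpha_n(B\cap E_\ell)$, and $B\cap E_\ell$ is a bounded disk of $E_\ell$. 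Thus $f_n\Macto v$ in ${}^bE_\ell$, and by the reduction above $f_n\Macto v$ in $E$, which proves that $\calD(\Omega)$ is Mackey-sequentially-dense in $E$.

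The genuinely delicate point is not the approximation itself---which is H\"ormander's density together with elementary support bookkeeping---but ensuring that one stays inside a single stratum $E_\ell$ throughout, so that the inductive-limit bornology collapses to the von Neumann bornology of $E_\ell$ and Lemma~\ref{Mactoplem} becomes applicable. I would also verify there is no circular dependence on the completeness results proved later in the paper: Lemma~\ref{Mactoplem} requires only quasi-completeness and the Schwartz property of the dual, and the von Neumann bornologies attached to the H\"ormander and normal topologies of a $\calD'_\Gamma$-type space coincide (a weakly bounded subset of $\calD'(\Omega)$ is strongly bounded, by barrelledness of $\calD(\Omega)$), so the Mackey convergence obtained is precisely the one relevant for the normal topology carried by $E_\ell$.
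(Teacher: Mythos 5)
Your proof is correct, and it takes a genuinely different route from the paper's at the decisive step. Both arguments begin the same way: H\"ormander's regularization, with support bookkeeping to stay inside a fixed stratum (you enlarge $\ell$ so that $\supp v\subset\mathring{L}_\ell$; the paper passes from $E_\ell$ to $E_{\ell+1}$), produces test functions converging topologically to $v$ in $\calD'_{\Lambda_\ell}$. The difference is the upgrade from topological to Mackey convergence. You apply Lemma~\ref{Mactoplem} directly to $\calD'_{\Lambda_\ell}$ and then restrict the resulting bounded disk to the stratum via $(\alpha_n B)\cap E_\ell=\alpha_n(B\cap E_\ell)$; this is legitimate, since a subset of $E_\ell$ bounded in $\calD'_{\Lambda_\ell}$ is bounded for the induced topology defining ${}^bE_\ell$, and bounded sets of ${}^bE_\ell$ are bounded in $E$ by definition of the inductive-limit bornology. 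The paper instead applies Lemma~\ref{Mactoplem} only to $\calD'(\Omega)$ --- whose quasi-completeness and Schwartz dual are classical facts, free of forward references --- obtaining a disk bounded merely in $\calD'(\Omega)$, and then restores boundedness for the additional seminorms $||\cdot||_{N,V,\chi}$ by hand: exploiting their countability (Corollary~\ref{PLS}), it extracts a subsequence $v_n$ with $p_k(v_n-u)\le 1/n$ for $k\le n$ and exhibits an explicit disk, bounded in $E_{\ell+1}$, that absorbs $v_n-u$ at a rate $\beta_n\to 0$. The paper's detour exists precisely because of the circularity you flag: its proof of Lemma~\ref{Mactoplem} for $\calD'_\Gamma$-type spaces with the normal topology cites Corollary~\ref{completecor}, which rests on Proposition~\ref{borndual}, which rests on the present lemma. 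Your resolution of this point is sound: Lemma~\ref{Mactoplem} only needs quasi-completeness, and quasi-completeness of $\calD'_{\Lambda_\ell}$ for any topology compatible with the duality (second item of Proposition~\ref{quasicompleteprop}) follows from semi-reflexivity, hence from Proposition~\ref{inductprop} and Lemma~\ref{stronglem}, with no dependence on the present lemma or on Corollary~\ref{completecor}; moreover the H\"ormander and normal von Neumann bornologies coincide because bounded sets agree for all topologies of a given dual pair (your Banach--Steinhaus argument for $\calD'(\Omega)$ gives the same conclusion). In exchange for these forward but non-circular references, your proof is shorter, avoids the subsequence extraction, and yields Mackey convergence of the whole mollified sequence; what the paper's version buys is that every ingredient it uses is already available, in its stated generality, at that point of the text.
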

\begin{proof}
Take $u\in{}^bE_\ell=\calE'_{\Lambda_\ell}(L_\ell)$. It suffices to find 
$u_n\in \calD(\Omega)$ such that $u_n-u$ tends bornologically to 0 in 
$\calE'_{\Lambda_{\ell+1}}(L_{\ell+1})$.

From the proof of H\"ormander's density 
Theorem~\cite[p.~262]{HormanderI} we see that there exists 
a sequence $u_n \in \calD(\Omega)$ with $\supp(u_n)\subset L_{\ell+1}$ 
such that $u_n\to u$ in $\calD'_{\Lambda_{\ell+1}}$ and thus in 
$\calE'_{\Lambda_{\ell+1}}(L_{\ell+1})=E_{\ell+1}$.
The (topological) convergence of $u_n$ in $E_{\ell+1}$
implies its convergence in $\calD'(\Omega)$ and, by
lemma~\ref{Mactoplem}, its bornological convergence
in $\calD'(\Omega)$.
Thus, there exists a sequence $\alpha_n$ of positive
real numbers tending to zero and a disked bounded
set $B$ in $\calD'(\Omega)$ such that
$(u_n-u)\in \alpha_n B$ for every integer $n$.

However, we only know that $B$ is bounded in $\calD'(\Omega)$,
while we need to find a set which is bounded
in $E_{\ell+1}$ to show that $u$ is the bornological
limit of a sequence of test functions in $E_{\ell+1}$.
In other words, we still have to show that
$B$ is bounded for the
additional seminorms $||\cdot||_{N,V,\chi}$.

We already used in the proof of corollary \ref{PLS} that these additional
seminorms could be chosen in a
countable set $\{p_n\telque n\in \bbN^*\}$.
We can extract a subsequence $v_n$ from $u_n$ such that,
for all $k\le n$, $p_k(v_n-u)\le 1/n$. 
Hence, for every seminorm $p_k$, we have
$p_k(v_n-u)\le M_k/n$ for all positive integers $n$, where
$M_k=\sup_{n<k}\{n p_k(v_n-u),1\}$ is finite.
If we define the sequence $\beta_n=\max(\alpha_{N_n},1/n)$
of positive real numbers tending to zero,
the Mackey convergence of $u_n$ in $\calD'(\Omega)$ implies that,
for every integer $n$, there is an element $b_n$ of $B$
such that
$v_n-u=\alpha_{N_n}b_n=\beta_n(\alpha_{N_n}/\beta_n) b_n
=\beta_n c_n$ where $c_n=(\alpha_{N_n}/\beta_n) b_n\in B$
because $\alpha_{N_n}/\beta_n\le 1$ and $B$ is balanced.
Moreover, $p_k(v_n-u)/\beta_n\le 1/(n\beta_n) M_k \le M_k$.
Thus, for every $n$, $(v_n-u)/\beta_n$ belongs
to the set
$C=\{x\in B\cap E_{\ell+1}\telque p_k(x)\le M_k 
\text{ for every integer }k\}$,
which is balanced and bounded in $E_{\ell+1}$.

Finally, we have showed that
any distribution $u\in E_\ell$ is the
Mackey-limit in $E_{\ell+1}$ of a sequence of elements
of $\calD(\Omega)$ and the lemma is proved.
\end{proof}

\begin{lem}\label{UniformBound}
Let $B$ be a bounded set in $\mathcal{D}'(\Omega)$, then for every 
$f\in \mathcal{D}(\Omega)$ there exists $M$ such that 
$$\sup_{u\in B}\sup_{\xi\in \bbR^n}(1+|\xi|)^{-M}|\widehat{fu}(\xi)|<\infty.$$
\end{lem}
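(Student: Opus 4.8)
The plan is to reduce the statement to the structure theorem for bounded subsets of $\calD'(\Omega)$ that was already invoked in the proof of Lemma~\ref{stronglem}. Fix $f\in\mathcal{D}(\Omega)$ and set $K=\supp f$; choose a relatively compact open set $U$ with $K\subset U\subset\Omega$. By Schwartz's characterization of bounded sets~\cite[p.~86]{Schwartz-66}, there exist an integer $m$ and a constant $M_0$ such that every $u\in B$ can be written on $U$ as $u=\partial^\alpha f_u$ with $|\alpha|\le m$, where $f_u$ is continuous and $|f_u(x)|\le M_0$ for all $x\in U$. The essential point is that both the order $m$ and the sup-bound $M_0$ are uniform over $u\in B$, which is exactly the uniformity demanded by the conclusion; the rest of the argument is a short computation.

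The second step is to express the Fourier transform as a pairing and integrate by parts. Since $fu$ is compactly supported, $\widehat{fu}(\xi)=\langle fu,e_\xi\rangle=\langle u,f e_\xi\rangle$, where $e_\xi(x)=e^{i\xi\cdot x}$, as in the computation of $\widehat{\lambda\psi}$ in the proof of Proposition~\ref{firstdualprop}. Because $f e_\xi$ is supported in $U$, I may substitute the local representation of $u$ and write $\widehat{fu}(\xi)=(-1)^{|\alpha|}\int f_u(x)\,\partial^\alpha\!\big(f(x)e_\xi(x)\big)\,dx$. This substitution is legitimate precisely because the test function $f e_\xi$ lives inside $U$, where $u=\partial^\alpha f_u$ holds.

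The third step is an elementary Leibniz estimate. Expanding $\partial^\alpha(f e_\xi)=\sum_{\beta\le\alpha}\binom{\alpha}{\beta}(i\xi)^{\alpha-\beta}(\partial^\beta f)\,e_\xi$ and bounding each factor $|\xi|^{|\alpha-\beta|}$ by $(1+|\xi|)^m$ gives $|\partial^\alpha(f e_\xi)(x)|\le C_f\,(1+|\xi|)^m$, where $C_f$ collects the binomial coefficients and the suprema of the derivatives of $f$ up to order $m$. Combining this with $|f_u|\le M_0$ on $U$ and integrating over $K$ yields $|\widehat{fu}(\xi)|\le M_0\,|K|\,C_f\,(1+|\xi|)^m$ for all $\xi\in\bbR^n$ and all $u\in B$. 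Choosing $M=m$ then bounds the desired supremum by the finite constant $M_0\,|K|\,C_f$, completing the argument.

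I do not expect any genuine obstacle: the entire content is concentrated in the uniform finite-order representation of a bounded set of distributions, which is Schwartz's theorem, and everything downstream is one integration by parts followed by a Leibniz bound. The only point meriting a little care is the validity of substituting the local representation $u=\partial^\alpha f_u$ into the pairing, which is assured by the support condition noted above. (If one prefers the representation as a finite sum $u=\sum_{|\alpha|\le m}\partial^\alpha f_{u,\alpha}$, the same estimate applies verbatim after summing the finitely many terms.)
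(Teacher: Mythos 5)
Your proof is correct, but it takes a genuinely different route from the paper's. The paper argues softly: it considers the family of functionals $T_u(g)=u(fg)$ on the Fr\'echet space $C^\infty(\Omega)$, notes that weak boundedness of $B$ makes this family pointwise bounded, invokes the uniform boundedness principle to get a single seminorm bound $\sup_{u\in B}|T_u(g)|\le C\,p_l(g)$, and then simply evaluates on the exponentials $e_\xi$, whose seminorms grow polynomially in $\xi$; the exponent $M$ comes out as the index of the seminorm $p_l$. You instead invoke Schwartz's structure theorem for bounded sets of $\calD'(\Omega)$ (uniform local order $m$ and uniformly bounded continuous potentials $f_u$), substitute the representation $u=\partial^\alpha f_u$ into $\widehat{fu}(\xi)=\langle u,fe_\xi\rangle$, and conclude by integration by parts and a Leibniz estimate, getting the explicit exponent $M=m$. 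Both arguments ultimately rest on a Banach--Steinhaus-type result (Schwartz's theorem is itself a consequence of the barrelledness of $\calD(\Omega)$), and your tool is one the paper already uses in the proof of Lemma~\ref{stronglem}, so your proof is fully consistent with the paper's toolkit; what it buys is an explicit, concrete value of $M$ tied to the local order of $B$ near $\supp f$, at the cost of importing a heavier structure theorem, whereas the paper's proof is shorter and needs nothing about the local structure of distributions. Your handling of the one delicate point --- that the substitution $u=\partial^\alpha f_u$ is only valid against test functions supported in $U$, which $fe_\xi$ is --- is correct, as is your remark that the finite-sum variant $u=\sum_{|\alpha|\le m}\partial^\alpha f_{u,\alpha}$ of the structure theorem changes nothing.
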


\begin{proof}This lemma is an obvious consequence of uniform boundedness principle.
Consider $(T_u)_{u\in B}$ the family of maps 
$T_u: C^\infty(\Omega)\to \mathbb{C}$ defined on the Fr\'echet space 
$C^\infty(\Omega) $ by $T_u(g)=u(fg).$ Since 
$fg\in \mathcal{D}(\Omega)$ and $B$ is weakly bounded, 
$\forall g\in C^\infty(\Omega),\exists C_{g}<\infty,\forall u\in
B$, $|T_u(g)|\leq C_{g}.$ Thus by the uniform boundedness principle, 
there exists 
a seminorm $p_l$ of $C^\infty(\Omega)$ such that 
$$\sup_{u\in B}|T_u(g)|\leq C p_l(g).$$

Since $\forall\xi\in \bbR^n, p_l(e_{\xi})\leq c(1+|\xi|)^M$ for some 
constants $c$ and $M$, this concludes.
\end{proof}

%

\begin{prop}\label{borndual}If $E$ is the bornological inductive limit of 
the spaces ${}^bE_\ell$ as above, then
$E^\times=\calD'_\Gamma$ and its natural topology is 
equivalent to the normal topology of $\calD'_\Gamma.$
\end{prop}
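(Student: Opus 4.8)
The plan is to prove the two assertions separately: first the algebraic identity $E^\times=\calD'_\Gamma$, and then that the natural topology of $E^\times$ coincides with the normal topology of $\calD'_\Gamma$. Throughout I use that, since $E$ is the bornological inductive limit of the increasing sequence $({}^bE_\ell)$, a set is bounded in $E$ if and only if it is contained in and bounded in some $E_\ell$; consequently a linear form $u$ lies in $E^\times$ if and only if its restriction to each $E_\ell$ maps bounded sets to bounded sets of $\bbK$. The inclusion $\calD'_\Gamma\subseteq E^\times$ is then immediate: by Proposition~\ref{inductprop} every $u\in\calD'_\Gamma$ is continuous on $\calE'_\Lambda$ for the inductive topology, hence continuous, and a fortiori bounded, on each $E_\ell$, so $u\in E^\times$.

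For the reverse inclusion, take $u\in E^\times$. First I would recover a distribution from $u$. For any compact $K\subset\Omega$ contained in some $L_\ell$ the inclusion $\calD(K)\hookrightarrow E_\ell$ is bounded: by Lemma~\ref{continjlem} (see eq.~\eqref{ekNVchi}) the seminorms $\|\cdot\|_{N,V,\chi}$ and $p_B$ of $E_\ell$ are dominated on $\calD(K)$ by the Fréchet seminorms $\pi_{m,K}$, so a $\pi_{m,K}$-bounded set is bounded in $E_\ell$. Hence $u|_{\calD(K)}$ carries bounded sets to bounded sets, and since $\calD(K)$ is metrizable this forces $u|_{\calD(K)}$ to be continuous. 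Thus $T_u:=u|_{\calD(\Omega)}$ is a distribution. To see $\WF(T_u)\subseteq\Gamma$, fix $\psi\in\calD(\Omega)$ and a closed cone $W$ with $\supp\psi\times W\cap\Gamma=\emptyset$, and set $f_k=(1+|k|)^N\psi e_k$. Choosing $\ell$ so large that $\supp\psi\times(-W)\subset\Lambda_\ell$ (possible by the compactness-on-the-sphere argument of Proposition~\ref{inductprop}), the family $\{f_k:k\in W\}$ is bounded in $E_\ell$ exactly as in Proposition~\ref{firstdualprop}: the cone-separation estimate $|k+q|\ge c\,\max(|k|,|q|)$, valid once $W\cap(-V')=\emptyset$, handles each seminorm $\|\cdot\|_{N',V',\chi'}$ of $E_\ell$, while $p_B$ is controlled as before. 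Since $u$ is bounded, $\sup_{k\in W}|u(f_k)|=\sup_{k\in W}(1+|k|)^N|\widehat{T_u\psi}(k)|<\infty$ for every $N$, giving $T_u\in\calD'_\Gamma$. Finally, $u$ and $\langle T_u,\cdot\rangle$ are two elements of $E^\times$ agreeing on $\calD(\Omega)$; as any bounded form is Mackey-sequentially continuous and $\calD(\Omega)$ is Mackey-sequentially dense in $E$ (Lemma~\ref{MackeyDense}), they agree on all of $E$, so $E^\times\subseteq\calD'_\Gamma$.

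It remains to identify the topologies. For \emph{normal $\le$ natural} each normal seminorm must be dominated by a natural one $p_{B'}(u)=\sup_{v\in B'}|\langle u,v\rangle|$ with $B'$ bounded in $E$: the $p_B$ are already of this form since bounded sets of $\calD(\Omega)$ are bounded in $E$, and for $\|\cdot\|_{N,V,\chi}$ one writes $\|u\|_{N,V,\chi}=\sup_{k\in V}|\langle u,(1+|k|)^N\chi e_k\rangle|$ and invokes the boundedness of $\{(1+|k|)^N\chi e_k:k\in V\}$ in some $E_\ell$ established above (with $\psi=\chi$, $W=V$). For \emph{natural $\le$ normal}, fix $B'$ bounded in some $E_\ell$ and reuse the pairing estimate~\eqref{estuv}, taking the partition $\psi_j,\alpha_j,\beta_j$ adapted to $(\Gamma,\Lambda_\ell,L_\ell)$ as in Lemma~\ref{stronglem}, so that it is independent of $v\in B'$. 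The factors $\|v\|_{N,U_{\beta j},\psi_j}$ are then bounded uniformly over $B'$ by definition of boundedness in $E_\ell$, the order $m$ of the $v$ is uniformly bounded by Lemma~\ref{UniformBound}, and the auxiliary family $\{\psi_j f_j:v\in B'\}$ is bounded in $\calD(\Omega)$ just as the set $B_j$ in Lemma~\ref{stronglem}; this yields $\sup_{v\in B'}|\langle u,v\rangle|\le\sum_j\big(p_{B_j}(u)+C_j\|u\|_{M,U_{\alpha j},\psi_j}\big)$, a finite sum of normal seminorms. Hence the two topologies coincide.

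The principal obstacle is the reverse inclusion $E^\times\subseteq\calD'_\Gamma$: the hypothesis provides only boundedness, not continuity, so Proposition~\ref{inductprop} does not apply directly. The resolution is that boundedness already forces continuity on the metrizable slices $\calD(K)$, which is upgraded to a wavefront estimate by testing against the explicitly bounded families $\{f_k\}$, and then transported from $\calD(\Omega)$ to all of $E$ via Mackey-sequential density (Lemmas~\ref{Mactoplem}--\ref{MackeyDense}). The topological identification, though technical, is a routine uniformization of the already-proved pairing estimates, with Lemma~\ref{UniformBound} supplying the only genuinely new ingredient, namely uniform control of the order across a bounded set.
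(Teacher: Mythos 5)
Your proposal is correct and follows essentially the same route as the paper: the forward inclusion via Proposition~\ref{inductprop}, the reverse inclusion by recovering a distribution from boundedness, testing against the bounded families $f_k=(1+|k|)^N\chi e_k$ to get the wavefront estimate, and identifying the extension through Mackey-sequential density (Lemma~\ref{MackeyDense}), with the topological identification obtained by uniformizing the pairing estimate over bounded sets of $E_\ell$ using Lemma~\ref{UniformBound}. The only deviations are cosmetic: you deduce continuity of the restriction from metrizability of $\calD(K)$ where the paper invokes bornologicity of $\calD(\Omega)$, and you bound the natural seminorms $p_{B'}$ directly where the paper first presents $E^\times$ as a projective limit of the $({}^bE_\ell)^\times$.
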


\begin{proof}
From lemma \ref{pairinglem} and proposition \ref{inductprop}, any 
$u\in\calD'_\Gamma$ defines a continuous linear form on each $E_\ell$ 
and thus a bounded linear form of ${}^bE_\ell$, 
i.e. an element of $(E)^\times$. 
This gives an embedding $\calD'_\Gamma\hookrightarrow(E)^\times$ since 
injectivity comes from the fact $\calD(\Omega)\subset E.$

Conversely, we want to prove that each bounded linear form $\lambda$ on $E$: (i) defines a distribution when restricted to $\calD(\Omega)\subset E$; 
(ii) has a wavefront set contained in $\Gamma$.

This will be enough to conclude the computation of the bornological dual 
since, from lemma \ref{MackeyDense} and the fact that  a bounded 
linear functional is Mackey-continuous~\cite[p.~10]{Hogbe-71},
the restriction of a bounded linear 
functional to  $\calD(\Omega)$ has a unique extension to $E$, proving 
that the second map above is injective.

To prove that $\lambda$ restricts to a distribution, we notice that
the injection
$\calD(L_\ell)\hookrightarrow E_\ell$ is continuous because
$E_\ell$ is a normal space of distributions.
Any bounded set $B$ of $\calD(\Omega)$, which is actually in some 
$E_\ell$, is
bounded in $E_\ell$ thus in $E$ because
the image of a bounded set by a continuous linear map
is a bounded set~\cite[p.~109]{Horvath}.
Thus, $\lambda$ is also a bounded map from 
$\calD(\Omega)$ to $\bbK$. It is well-known
that $\calD(\Omega)$ is bornological~\cite[p.~222]{Horvath}.
Hence, $\lambda$ is a continuous map from
$\calD(\Omega)$ to $\bbK$ because any bounded map from 
a bornological locally convex space to
$\bbK$ is continuous~\cite[p.~220]{Horvath}.
In other words, $\lambda$ is an element of $\calD'(\Omega)$.

We still have to show that $\lambda\in \calD'_\Gamma$, i.e. that
for any $\chi\in \calD(\Omega)$ and any closed convex 
neighborhood $V$ such that $\supp\chi\times V\cap \Gamma=\emptyset$,
the seminorm $||\lambda||_{N,V,\chi}$ is finite for all integers $N$.
For this we use again the remark made in the proof of 
proposition~\ref{firstdualprop} that 
$||\lambda||_{N,V,\chi}=\sup_{k\in V} |\lambda(f_k)|$,
where $f_k=(1+|k|)^N \chi e_k$.
Thus, if $B'=\{f_k\telque k\in V\}$ is a bounded
set in $E$, then we know that
$p_{B'}(\lambda)=\sup_{k\in V} |\lambda(f_k)| < {+}\infty$ because 
the image of the bounded set $B'$ by the bounded map $\lambda$
is bounded. It remains to show that $B'$ is a bounded
set of some $E_\ell$. We proceed as in the proof of
lemma~\ref{stronglem}.

First, $\supp\chi$ is a compact subset of the
open set $\pi_1(\Lambda)$. Therefore, there is an integer
$\ell$ such that $L_\ell$ is a compact neighborhood
of $\supp\chi$ and
$U^*\Omega \cap \Lambda_\ell$ is a compact neighborhood of
$U^*\Omega\cap (\supp\chi\times (-V))$ because
$L_\ell$ exhausts $\Omega$ and $\Lambda_\ell$ exhausts $\Lambda$.
This space $E_\ell$ contains $B'$
because each $f_k$ is smooth and compactly supported and we want 
to show that $B'$ is bounded in this $E_\ell.$

Consider $||f_k||_{N',W,\psi}$ where
$\supp\psi\times W \cap \Lambda_\ell=\emptyset$.
If $\supp\psi\cap\supp\chi=\emptyset$, then
$||f_k||_{N',W,\psi}=0$.
If $\supp\psi\cap\supp\chi\not=\emptyset$, then $W\cap (-V)=\emptyset$ 
and thus, by compactness of the intersections of these cones with the 
unit sphere, there is a $c>0$ such that $|k+q|/|q|>c$ and
$|k+q|/|k|>c$ for all $k\in V,q\in W$. 
We follow the proof of proposition~\ref{firstdualprop}
to show that 
\begin{eqnarray*}
||f_k||_{N',W,\psi} & \le & 
 c^{-N-N'}
 \sup_{q\in W}
 (1+|k+q|)^{N+N'}
|\widehat{\psi\chi}(k+q)|.
\end{eqnarray*}

According to eq.~\eqref{ekNVchi}, there is a constant
$C_{N+N',\psi\chi}$ such that
$||f_k||_{N',W,\psi}  \le  
 c^{-N-N'} C_{N+N',\psi\chi}$.
Therefore, 
$||f_k||_{N',W,\psi}$
is uniformly bounded for all values of $k\in V$.

To conclude the proof of the boundedness of $B'$ in $E_\ell$, we show that $p_B(f_k)$ is bounded for all
bounded sets $B\subset\calD(\Omega)$. We know that
$\calD(\Omega)$ is a Montel space~\cite[p.~357]{Treves}.
Thus, it is barrelled and it is enough to show that
$B'$ is weakly bounded: i.e. that, for any $g\in \calD(\Omega)$,
$\langle f_k,g\rangle$ is bounded. Indeed we have
$|\langle f_k,g\rangle| =
(1+|k|)^N |\langle e_k,\chi g\rangle| = 
(1+|k|)^N |\widehat{\chi g}(k)|$,
which is bounded uniformly in $k\in \bbR^n$,
as seen from eq.~\eqref{ekNVchi}.

Finally, we have shown that $B'$ is bounded in $E_\ell$,
which implies that $B'$ is bounded in $E$
and that $||\lambda||_{N,V,\chi}=p_{B'}(\lambda)< {+}\infty$
for all integers $N$ and all $V,\chi$ such that
$\supp\chi\times V\cap\Gamma=\emptyset$. 
This concludes our proof of $\WF(\lambda) \subset \Gamma$.

Moreover, this also shows that the natural  topology 
of $E^\times$ is finer 
than the normal topology of $\calD'_\Gamma$. 
Indeed,
we proved that, for any seminorm $||\cdot||_{N,V,\chi}$
of $\calD'_\Gamma$, there is a bounded set $B'$ in 
$E$ such that $||\cdot||_{N,V,\chi}=p_{B'}$ and
the seminorms $p_B$, where $B$ is bounded in $\calD(\Omega)$
are both in $\calD'_\Gamma$ and $E^\times$.
In other words, $E^\times$ has more seminorms than $\calD'_\Gamma$.

It remains to show the converse, i.e. the continuity of the 
injection $\calD'_\Gamma\mapsto E^\times$.  
For this we have to describe more precisely $E^\times$,
which is the bornological dual of a bornological inductive limit.
In the topological case, it is well known that the 
dual of an inductive limit is a projective
limit~\cite[p.~85]{Robertson}\cite[p.~290]{Kothe-I}. 
We have a similar result for the bornological case.
Indeed, $^bE_\ell$ is the vector space
$E_\ell=\calE'_{\Lambda_\ell}(L_\ell)$ equipped with
the bornology $\calB_\ell$ whose elements are the
subsets of $E_\ell$ which are bounded in $\calD'_{\Lambda_\ell}$.
The injection $j_\ell:{}^bE_\ell\to {}^bE_{\ell+1}$
is bounded because it is continuous. 
Thus, $E=\cup_{\ell} E_\ell$ is a convex bornological
space whose bornology is 
$\calB=\cup_\ell \calB_\ell$~\cite[p.~33]{Hogbe-77}
\cite[p.~195]{Wong}.

By duality, $E^\times=\cap_\ell ({}^bE_\ell)^\times$
as a vector space. The (algebraic) dual map
$j_\ell^*: {}^bE_{\ell+1}^\times \to {}^bE_\ell^\times$
is the inclusion. It is continuous if every
$({}^bE_\ell)^\times$ is equipped with 
its natural topology (since
$\calB_\ell\subset\calB_{\ell+1}$, every seminorm
$p_B$ of ${}^bE_\ell^\times$ where $B\in \calB_\ell$ is also 
a seminorm of ${}^bE_{\ell+1}^\times$).
Therefore, $E^\times$ is the topological
projective limit of the 
locally convex spaces 
$({}^bE_\ell)^\times$~\cite[p.~230]{Kothe-I}.
To show that the injection
$\calD'_\Gamma \hookrightarrow E^\times$ 
is continuous, we just have to prove that each injection
$\calD'_\Gamma \hookrightarrow ({}^bE_\ell)^\times$ 
is continuous~\cite[p.~149]{Wong}.

Said otherwise, we have to show that the 
bound~(\ref{paireq}) we obtained in lemma~\ref{pairinglem} can be 
made uniform in $v\in B$ for some bounded set $B$ in $E_\ell.$ 
First note that the choices of functions $\psi,\alpha,\beta$ 
can be made uniformly for $v\in B$,  $B$ a bounded set in $E_\ell.$
Second, using lemma \ref{UniformBound}, one sees
that the constants $m,C$ used in 
the proof of the bound (\ref{paireq}) can be made uniform in $v\in B$ 
so that $\sup_{v\in B}|\widehat{v\psi_{j}}(k)|
\leq C(1+|k|)^m.$ 
Moreover, by definition of boundedness 
$\sup_{v\in B}||v||_{N,U_{\beta j},\psi_j} 
\leq M_{N,U_{\beta j},\psi_j}$.

We thus obtain:
\begin{eqnarray}
p_B(u)=\sup_{v\in B}|\langle u,v\rangle| & \le & \sum_j
\Big(
p_{B_j'}(u) + ||u||_{M,U_{\alpha j},\psi_j} C
I_n^{M-m} 
\nonumber\\&& + ||u||_{M,U_{\alpha j},\psi_j} M_{N,U_{\beta j},\psi_j} 
  I_n^{N+M} \Big),
\end{eqnarray}
where $B_j':=\{\psi_jf_j^v \telque v\in B \}$ with 
$\widehat{f^v_j}(k)=\alpha_j(-k)(1-\beta_j(k))
\widehat{\psi_j v}(k)$. To prove the expected continuity, it thus only 
remains to show that $B'$ is bounded in $\calD(\Omega)$ 
so that $p_{B'}$ is a seminorm of $\calD'_\Gamma$.

But, let $K_j=\supp \psi_j$,  we deduce:

\begin{align*}\pi_{N,K_j}(\psi_jf_j^v)&\leq 2^N\pi_{N,K_j}(\psi_j)
\pi_{N,K_j}(f_j^v) 
\\&\leq 2^N\pi_{N,K_j}(\psi_j)
(2\pi)^{-n}\sup_{|\gamma|\leq N}|\int_{\supp \beta_j'}dk
(k^\gamma)\alpha_j(-k)(1-\beta_j(k))
\widehat{\psi_j v}(k)|
\\&\leq 2^N\pi_{N,K_j}(\psi_j)I_{n}^{n+1}||v||_{N+n+1,U_{\beta
j},\psi_j},\end{align*}
and the last seminorm is a seminorm in $E_\ell$ since 
$U_{\beta j}=\supp(1-\beta_j)$ 
has been chosen (in the process of choosing $\psi,\alpha,\beta$)
independent of $v\in  B$, so that 
$\supp(\psi_j)\times \supp(1-\beta_j)\cap \Lambda_\ell=\emptyset.$ 
The above estimate thus concludes.
\end{proof}

\begin{cor}
\label{completecor}
$\calD'_\Gamma$ with its normal topology is complete.
\end{cor}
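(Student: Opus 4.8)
The plan is to obtain completeness directly from the bornological machinery already assembled, via Theorem~\ref{completethm}. That theorem asserts that the natural topology turns the bornological dual $E^\times$ of a \emph{regular} convex bornological space $E$ into a complete locally convex space. By Proposition~\ref{borndual} we have already identified $E^\times$, equipped with its natural topology, with $\calD'_\Gamma$ carrying its normal topology. Hence the corollary reduces to the single verification that the bornological inductive limit $E$ of the spaces ${}^bE_\ell$ is regular, i.e. that $E^\times$ separates the points of $E$.

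This regularity is immediate from the separating character of the pairing. Indeed, $E^\times=\calD'_\Gamma$ contains $\calD(\Omega)$, and by Lemma~\ref{pairinglem} the pairing $\langle u,v\rangle$ is separating: if $v\in\calE'_\Lambda$ satisfies $\langle u,v\rangle=0$ for every $u\in\calD'_\Gamma$, then in particular $\langle u,v\rangle=0$ for all $u\in\calD(\Omega)$, which forces $v=0$. Since $E$ is the vector space $\calE'_\Lambda$ endowed with a bornology, this is exactly the statement that $E^\times$ separates the points of $E$, so $E$ is regular.

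With regularity in hand, Theorem~\ref{completethm} yields that $E^\times$ with its natural topology is complete; combined with the topological identification furnished by Proposition~\ref{borndual}, this is precisely the assertion that $\calD'_\Gamma$ with its normal topology is complete. I do not expect any genuine obstacle at this stage: all the analytic content has been absorbed into Proposition~\ref{borndual} (the computation of the bornological dual together with the equivalence of the natural and normal topologies), so the completeness here is a purely formal consequence of the general bornological duality theorem once the trivial separation property is noted. If one insists on isolating a ``hardest'' step, it is the regularity check, but that reduces at once to the separating property of the pairing already established in Lemma~\ref{pairinglem}.
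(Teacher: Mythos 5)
Your proof is correct and follows exactly the paper's own route: reduce to Theorem~\ref{completethm} via the identification $E^\times=\calD'_\Gamma$ (with natural topology equal to the normal topology) from Proposition~\ref{borndual}, and verify regularity of $E$ from the separating property of the pairing in Lemma~\ref{pairinglem}. Nothing is missing.
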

\begin{proof}
From theorem~\ref{completethm}, it remains to check 
that $E$, as a convex bornological space, 
is  regular.  From our computation of the dual, 
it was already proved in lemma~\ref{pairinglem} 
that $E^\times$ separates points in $E$.
Thus, $E$ is a regular convex bornological space and its dual
$\calD'_\Gamma$ is complete with its normal topology, because it
is equivalent to the natural topology.
\end{proof}

\subsection{$\calE'_\Lambda$ is ultrabornological}
\label{Ebornosect}
A locally convex space is \emph{bornological} if its balanced,
convex and bornivorous subsets are neighborhoods of 
zero~\cite[p.~220]{Horvath}.
Bornological spaces have very convenient properties. For example,
every linear map $f$ from a bornological locally convex space $E$
to a locally convex space $F$ is continuous iff
it is bounded (i.e. if $f$ sends every bounded set of $E$
to a bounded set of $F$)~\cite[p.~220]{Horvath}.

\begin{prop}
\label{ELbornoprop}
$\calE'_\Lambda$ is a bornological locally convex space.
\end{prop}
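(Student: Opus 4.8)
The plan is to realize $\calE'_\Lambda$, with its strong topology, as a countable locally convex inductive limit of bornological spaces and then to invoke the classical principle that an inductive limit of bornological spaces is again bornological~\cite[p.~222]{Horvath}. The first candidate is the concrete decomposition $\calE'_\Lambda=\lim_{\to}E_\ell$ already used to define the inductive topology, which coincides with $\beta(\calE'_\Lambda,\calD'_\Gamma)$ by Lemma~\ref{stronglem}. For this route to close, the steps $E_\ell$ would have to be bornological themselves.

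The difficulty is that the $E_\ell$ need not be metrizable: the $p_B$-part of the normal topology of $E_\ell=\calE'_{\Lambda_\ell}(L_\ell)$ already carries the (DFS) topology of the compactly supported distributions on $L_\ell$, so bornologicity of $E_\ell$ is not transparent. I would therefore use instead the representation supplied by Corollary~\ref{PLS}. Since $\calD'_\Gamma$ is a (PLS)-space, its strong dual $\calE'_\Lambda$ is an (LFS)-space, that is, a countable inductive limit $\lim_{\to}G_n$ of Fr\'echet--Schwartz spaces, and this inductive topology is precisely the strong topology. Each step $G_n$ is then Fr\'echet, hence metrizable, hence bornological~\cite[p.~222]{Horvath}.

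Applying the inductive-limit principle to $\lim_{\to}G_n$ gives at once that $\calE'_\Lambda$ is bornological. The only point that I expect to require care, and which is the real obstacle, is the compatibility of topologies: one must be sure that the inductive topology attached to the (LFS) decomposition is genuinely $\beta(\calE'_\Lambda,\calD'_\Gamma)$. This is exactly what Corollary~\ref{PLS} delivers, through the reduction of the projective spectrum of $\calD'_\Gamma$ and the identification of its Mackey dual with its strong dual (both pieces being Montel/Schwartz), combined with the equality of the inductive, Mackey and strong topologies recorded in Lemma~\ref{stronglem}. Once this identification is in place the proof is complete; I note in passing that, since Fr\'echet spaces are in fact ultrabornological and inductive limits of ultrabornological spaces are ultrabornological, the very same argument will yield the stronger conclusion stated in the title of this subsection.
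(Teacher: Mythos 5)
Your proof is correct, but it takes a genuinely different route from the paper's. The paper proves Proposition~\ref{ELbornoprop} via the characterization that a Hausdorff locally convex space is bornological iff it carries its Mackey topology and every bounded linear functional on it is continuous~\cite[p.~221]{Horvath}: the Mackey property is Lemma~\ref{stronglem}, and for the second condition a functional bounded on $\calE'_\Lambda$ is a fortiori bounded for the coarser bornology of the bornological inductive limit $E$, hence by Proposition~\ref{borndual} it restricts on $\calD(\Omega)$ to an element of $\calD'_\Gamma$, whose continuous extension to $\calE'_\Lambda$ must coincide with the original functional by Mackey density of $\calD(\Omega)$ (Lemma~\ref{MackeyDense}). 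You instead exploit the structural Corollary~\ref{PLS}: $\calE'_\Lambda$ with the strong topology is an (LFS)-space, i.e.\ a countable locally convex inductive limit of Fr\'echet--Schwartz spaces, the identification of the inductive topology with the Mackey $=$ strong topology coming from K\"othe's duality for reduced projective limits together with the semi-Montel property of the complete Schwartz space $\calD'_\Gamma$; since Fr\'echet spaces are bornological and locally convex inductive limits of bornological spaces are bornological, the conclusion follows. There is no circularity in this: Corollary~\ref{PLS} and the completeness result its proof invokes (Corollary~\ref{completecor}) are established before, and independently of, the present proposition. Your rejection of the naive decomposition $\lim_{\to}E_\ell$ is also well taken: the steps $E_\ell$ are closed subspaces of the spaces $\calD'_{\Lambda_\ell}$, which are themselves not bornological in general, so nothing transparent gives bornologicity of the $E_\ell$. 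As for what each approach buys: the paper's argument identifies concretely the bounded linear functionals on $\calE'_\Lambda$ (they are exactly the elements of $\calD'_\Gamma$), a fact of independent interest, and stays within the duality--bornology machinery it has already built; your argument is shorter once Corollary~\ref{PLS} is granted and, as you note, delivers ultrabornologicity in the same stroke (Fr\'echet spaces are ultrabornological and inductive limits of ultrabornological spaces are ultrabornological), a strengthening the paper obtains separately from the fact that the strong dual of a complete Schwartz space is ultrabornological.
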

\begin{proof}
By a standard theorem~\cite[p.~221]{Horvath},
a locally convex Hausdorff space $E$ is bornological iff
the topology of $E$ is the Mackey topology and any bounded
linear map from $E$ to $\bbK$ is continuous.
We already know from lemma~\ref{stronglem} that the inductive topology
on $\calE'_\Lambda$ is equivalent to the Mackey topology.
Thus, it remains to show that
a linear map $\lambda:\calE'_\Lambda\to\bbK$ is continuous if
$\sup_{v\in B'} |\lambda(v)| < \infty$ for every
bounded subset $B'$ of $\calE'_\Lambda$.
Since $\lambda$ is a fortiori bounded for the coarser bornology of $E$,
we know from proposition \ref{borndual} that it defines by restriction
on $\calD(\Omega)$ an element of $\calD'_\Gamma$. Then this element
extends to a continuous linear form on $\calE'_\Lambda$ and since, by
lemma~\ref{MackeyDense}, $\calD(\Omega)$ is Mackey dense in $E$ 
and a fortiori in $\calE'_\Lambda$, the extension has to 
coincide with the
original $\lambda$ (which is bounded thus Mackey sequentially
continuous). Therefore, $\lambda$ is continuous.
\end{proof}

Note that the previous argument says $\calE'_\Lambda$ has the same 
bornological dual as $E$, but not necessarily with the same natural topology. 
Indeed, the natural topology of
$(\calE'_\Lambda)^\times$ is the strong
$\beta(\calD'_\Gamma,\calE'_\Lambda)$ topology on $\calD'_\Lambda$.
If the normal topology of $\calD'_\Lambda$ were the
strong topology, then $\calE'_\Lambda$ would be semi-reflexive
because the dual of $\calD'_\Gamma$ for the normal topology is
$\calE'_\Lambda$. Thus, $\calE'_\Lambda$ would be quasi-complete
and we shall prove in section~\ref{comEsect} that this is not
the case when the open cone $\Lambda$ is not closed.

This implies another consequence regarding the regularity
of the inductive limit. 
Recall that an inductive limit of locally convex spaces is said
to be \emph{regular} if each bounded set of $E$
is contained and bounded in some 
$E_\ell$~\cite{Kucera-78,Qiu-00}.
If the inductive limit defining the topology of 
$\calE'_\Lambda$ were regular, then the bornology
of $\calE'_\Lambda$ would be the bornology of $E$
(because we already know that every bounded set of
$E$ is bounded in $\calE'_\Lambda$). In that case,
the natural topologies of their bornological dual
$\calD'_\Gamma$ would be identical and
the normal topology on $\calD'_\Gamma$ would be the strong topology.
Thus, the inductive limit is not regular when
$\Lambda$ is not both open and closed.

Let us see how this bornological property also follows from a general theorem, even giving us a
stronger result:
\begin{prop}
$\calE'_\Lambda$ is an ultrabornological locally convex space.
\end{prop}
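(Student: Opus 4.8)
The plan is to avoid arguing directly from the inductive presentation $\calE'_\Lambda = \lim_{\to} E_\ell$ and instead exploit the structural description already secured in Corollary~\ref{PLS}. Recall that a locally convex space is ultrabornological precisely when it is a locally convex inductive limit of Banach spaces, and that this class is stable under the formation of (arbitrary, in particular countable) inductive limits. The strategy is therefore to exhibit $\calE'_\Lambda$ as an inductive limit of spaces that are themselves ultrabornological, and to read off such a presentation for free from the duality theory rather than to produce it by hand.

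Concretely, I would first invoke Corollary~\ref{PLS}, which asserts that $\calE'_\Lambda$ (with its strong topology) is an (LFS)-space, that is, a countable inductive limit of Fr\'echet--Schwartz spaces; by Lemma~\ref{stronglem} this strong topology coincides with the inductive topology, so no topologies need to be reconciled. Next I would recall two standard permanence facts: every Fr\'echet space is ultrabornological, being a Baire locally convex space (see e.g.~\cite{Perez-Carreras}), and a locally convex inductive limit of ultrabornological spaces is again ultrabornological, since an inductive limit of inductive limits of Banach spaces is itself an inductive limit of Banach spaces (see e.g.~\cite[\S~28]{Kothe-I} or~\cite{Perez-Carreras}). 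Combining the two gives the familiar slogan that every (LF)-space, hence every (LFS)-space, is ultrabornological. Applying this to the presentation furnished by Corollary~\ref{PLS} yields the claim in one line.

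The point that requires care --- and the reason I would \emph{not} argue term by term from $\lim_{\to} E_\ell$ --- is that the original building blocks $E_\ell = \calE'_{\Lambda_\ell}(L_\ell)$ are closed subspaces of the (PLS)-space $\calD'_{\Lambda_\ell}$, which is typically \emph{not} ultrabornological, and ultrabornologicity does not descend to closed subspaces. Thus the naive inductive representation is useless for this purpose, and the essential input is precisely the passage to the dual (LFS)-description, where the steps of the inductive system become Fr\'echet. In this sense the entire content is absorbed into Corollary~\ref{PLS} together with the two permanence theorems, so the main obstacle is conceptual rather than computational: recognising that the correct inductive representation is the (LFS) one rather than the $E_\ell$-limit one starts from.
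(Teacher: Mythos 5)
Your proof is correct, but it follows a genuinely different route from the paper's. The paper argues in one line from duality theory: $\calD'_\Gamma$ is complete (Corollary~\ref{completecor}) and nuclear, hence Schwartz, and the strong dual of a complete Schwartz locally convex space is ultrabornological (citing Horv\'ath and Hogbe-Nlend). You instead exploit the structural representation of Corollary~\ref{PLS}: $\calE'_\Lambda$ with its strong topology is an (LFS)-space, every Fr\'echet space is ultrabornological, and Hausdorff inductive limits of ultrabornological spaces are ultrabornological. Both routes ultimately rest on the same inputs established earlier in the paper --- completeness of $\calD'_\Gamma$ and nuclearity --- since Corollary~\ref{PLS} itself invokes Corollary~\ref{completecor}. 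What the paper's argument buys is independence from the countability/(PLS) machinery: it works verbatim for any complete nuclear space. What your argument buys is that it replaces the less widely known theorem on duals of complete Schwartz spaces by elementary permanence facts; moreover, your observation that one cannot argue term by term from the presentation $\lim_{\to} E_\ell$ --- because the $E_\ell$ are closed subspaces of spaces of type $\calD'_{\Lambda_\ell}$, and the paper later shows such spaces are in general not even bornological --- is a genuine and correct point. Two small repairs: the parenthetical justification ``being a Baire locally convex space'' is off, since Baire implies barrelled but not bornological or ultrabornological; the correct chain is that a Fr\'echet space is metrizable, hence bornological, and complete, hence ultrabornological (every sequentially complete bornological space is ultrabornological). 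Also, when invoking stability of ultrabornologicity under inductive limits, one needs the limit topology to be Hausdorff, which holds here by Proposition~\ref{general} since the pairing with $\calD'_\Gamma$ separates points.
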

\begin{proof}
$\calD'_\Gamma$ is complete and nuclear.
Therefore, by noticing that 
any nuclear locally convex space is 
Schwartz~\cite[p.~581]{KrieglMichor},
we see that
$\calE'_\Lambda$ is ultrabornological
because it is the strong
dual of a complete Schwartz locally convex 
space~\cite[p.~287]{Horvath} \cite[p.~15]{Hogbe-81}.
\end{proof}
Note that ultrabornological spaces are also
called completely bornological~\cite[p.~53]{Hogbe-77}
or fast-bornological~\cite[p.~203]{Wong}.
A locally convex space is ultrabornological
iff it is the topologification of a complete
convex bornological space~\cite[p.~53]{Hogbe-77}.
An ultrabornological space is the inductive
limit of a family of separable Banach spaces~\cite[p.~274]{Jarchow}.
Further characterizations are
known~\cite{Valdivia-74},\cite[p.~207-210]{Hogbe-81},
\cite[Ch.~6]{Perez-Carreras},
\cite[p.~283]{Meise}, \cite[p.~54]{Gach-04}.
The relation between boundedness and continuity is:
A linear map from an ultrabornological space $E$ to
a locally convex space $F$ is continuous iff it
is bounded on each compact disk of $E$~\cite[p.~54]{Hogbe-77}.

\section{Functional properties of $\calD'_\Gamma$ and $\calE'_\Lambda$}
In this section, we put together the results derived up to now
to determine the main functional properties
of $\calD'_\Gamma$ and $\calE'_\Lambda$.

\subsection{General functional properties}

\begin{prop}\label{general}
The space $\calD'_\Gamma$ is a normal space of distributions.
It is Hausdorff, nuclear and semi-reflexive.
Its topological dual is $\calE'_\Lambda$ which 
is Hausdorff, nuclear, and barrelled.
\end{prop}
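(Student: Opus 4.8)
The plan is to read off each assertion from the results already assembled in the preceding sections, so that the bulk of the argument is bookkeeping, with a single genuine duality computation needed for semi-reflexivity. First I would dispose of the claims that are literally restatements of earlier work: that $\calD'_\Gamma$ with its normal topology is a normal space of distributions was verified immediately after Lemma~\ref{continjlem} (continuity of $\calD(\Omega)\hookrightarrow\calD'_\Gamma$, together with density of $\calD(\Omega)$); its Hausdorffness follows because the normal topology is finer than the H\"ormander topology, which is already Hausdorff; and its nuclearity is exactly the nuclearity proposition proved above. The identification of the topological dual of $\calD'_\Gamma$ with $\calE'_\Lambda$ is Proposition~\ref{firstdualprop}. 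On the $\calE'_\Lambda$ side, Hausdorffness is immediate from the fact that the pairing of Lemma~\ref{pairinglem} is separating, so even the weak topology $\sigma(\calE'_\Lambda,\calD'_\Gamma)$ is Hausdorff and the finer inductive topology is a fortiori Hausdorff; and nuclearity of $\calE'_\Lambda$ is again a proposition already established.

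The one assertion that requires a real argument is semi-reflexivity of $\calD'_\Gamma$, and the key is to compute its bidual. By Proposition~\ref{firstdualprop} the dual of $\calD'_\Gamma$ is $\calE'_\Lambda$, so its strong dual is $\calE'_\Lambda$ carrying the topology $\beta(\calE'_\Lambda,\calD'_\Gamma)$. Here is where I would invoke Lemma~\ref{stronglem}: the strong topology on $\calE'_\Lambda$ coincides with the inductive topology. Consequently the strong dual of $\calD'_\Gamma$ is precisely $\calE'_\Lambda$ with its inductive topology, whose dual is $\calD'_\Gamma$ by Proposition~\ref{inductprop}. Thus the canonical map $\calD'_\Gamma\to(\calD'_\Gamma)''_\beta$ is a bijection, which is exactly the definition of semi-reflexivity. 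The only point demanding care, and the step I would flag as the main subtlety, is that the double-dual computation of Proposition~\ref{inductprop} is carried out for the \emph{inductive} topology; it is therefore essential to first match that topology with the strong one via Lemma~\ref{stronglem}, since otherwise the space computed would not be the strong bidual of $\calD'_\Gamma$.

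Finally, for barrelledness of $\calE'_\Lambda$ I would use the standard duality theorem that the strong dual of a semi-reflexive space is barrelled: since $\calD'_\Gamma$ is semi-reflexive and $\calE'_\Lambda$, with its strong (equivalently inductive) topology, is its strong dual, $\calE'_\Lambda$ is barrelled. Alternatively, and even more directly, barrelledness follows from the already-proven fact that $\calE'_\Lambda$ is ultrabornological, since an ultrabornological space is an inductive limit of Banach spaces and hence barrelled. I expect no genuine obstacle beyond the topological matching noted above; the proposition is a summary, and every ingredient is in place once that matching is made explicit.
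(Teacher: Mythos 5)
Your proposal is correct and follows essentially the same route as the paper: the Hausdorff, normality and nuclearity claims are read off from earlier results, semi-reflexivity is obtained by combining Lemma~\ref{stronglem} (inductive $=$ strong topology on $\calE'_\Lambda$) with Proposition~\ref{inductprop} (dual of the inductive limit is $\calD'_\Gamma$), and barrelledness follows because $\calE'_\Lambda$ is the strong dual of a semi-reflexive space. Your flagged subtlety --- that the bidual computation is done for the inductive topology and must be matched to the strong one --- is precisely the pivot of the paper's own argument, so nothing is missing.
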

\begin{proof}
We saw that $\calD'_\Gamma$ is Hausdorff. Its dual
$\calE'_\Lambda$ is also Hausdorff because the pairing
$\langle \cdot,\cdot\rangle$ is separating
(see lemma~\ref{pairinglem}) and the topology of
$\calE'_\Lambda$ is finer than the weak
topology $\sigma(\calE'_\Lambda,
\calD'_\Gamma)$~\cite[p.~185]{Horvath}.
We proved that $\calD'_\Gamma$ is the dual
of $\calE'_\Lambda$ for the inductive 
topology and that the inductive topology
of $\calE'_\Lambda$ is equivalent to the strong topology
$\beta(\calE'_\Lambda, \calD'_\Gamma)$.
Therefore, $\calD'_\Gamma$ is the topological 
dual of $\calE'_\Lambda$, which is the
strong dual of $\calD'_\Gamma$.
This implies that $\calD'_\Gamma$ is 
semi-reflexive~\cite[p.~227]{Horvath}.

The space $\calE'_\Lambda$ is barrelled because it
is the strong dual of a semi-reflexive 
space~\cite[p.~228]{Horvath}.
This can also be deduced from the fact that
the inductive topology of $\calE'_\Lambda$
is equal to its strong topology~\cite[p.~IV.5]{Bourbaki-TVS}.
\end{proof}

In fact, $\calD'_\Gamma$ is even 
a completely reflexive locally convex space, because it is 
complete and Schwartz~\cite[p.~95]{Hogbe-77}.
Recall that a locally convex space $E$
is completely reflexive 
(or ultra-semi-reflexive~\cite[p.~243]{Wong})
if $E=(E')^\times$ algebraically
and topologically~\cite[p.~89]{Hogbe-77},
where $E'$ is the dual of $E$ with the equicontinuous
bornology and $(E')^\times$ is the bornological
dual of $E'$ with its natural topology.
This has two useful consequences: (i)
$\calE'_\Lambda$ equipped with
the equicontinuous bornology is a reflexive
convex bornological space~\cite[p.~136]{Hogbe-77};
(ii) the strong and ultra-strong topologies on
$\calE'_\Lambda$ are equivalent~\cite[p.~90]{Hogbe-77}.

\subsection{Completeness properties of $\calD'_\Gamma$}
We state the results concerning the completeness
of $\calD'_\Gamma$:
\begin{prop}
\label{quasicompleteprop}
In $\calD'_\Gamma$:
\begin{itemize}
\item $\calD'_\Gamma$ is complete for all topologies finer
  than the normal topology and coarser
    than the Mackey topology. 
\item $\calD'_\Gamma$ is quasi-complete for all topologies
  compatible with the duality between $\calD'_\Gamma$
  and $\calE'_\Lambda$:
   all the bounded closed subsets are complete for these
  topologies. In particular, $\calD'_\Gamma$ is quasi-complete
  for the H\"ormander topology.
\end{itemize}
\end{prop}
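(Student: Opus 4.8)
The plan is to treat the two assertions separately, building on the completeness of the normal topology obtained in Corollary~\ref{completecor} and on the semi-reflexivity established in Proposition~\ref{general}. For the first assertion I would argue that, although completeness is not a duality invariant, it can only improve as one \emph{refines} a topology within a fixed dual pair, so that completeness for the normal topology propagates upward to the Mackey topology. For the second I would exploit semi-reflexivity together with Mackey's theorem, which guarantees that the bounded sets do not depend on the chosen compatible topology.

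For the first item, I would first record that, by the Mackey--Arens theorem, a locally convex topology on $\calD'_\Gamma$ has dual $\calE'_\Lambda$ exactly when it lies between $\sigma(\calD'_\Gamma,\calE'_\Lambda)$ and the Mackey topology $\tau(\calD'_\Gamma,\calE'_\Lambda)$; since the normal topology already has dual $\calE'_\Lambda$ it contains the weak topology, so every topology finer than the normal one and coarser than the Mackey one is compatible with the duality. The key step is then the monotonicity of completeness: with $E=\calD'_\Gamma$ and $E'=\calE'_\Lambda$, if $E$ is complete for a compatible topology $\tau_1$ and $\tau_2\supseteq\tau_1$ is another compatible topology, then $E$ is complete for $\tau_2$. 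I would prove this from Grothendieck's description of the completion as the space of linear forms on $E'$ that are $\sigma(E',E)$-continuous on every equicontinuous subset of $E'$: passing from $\tau_1$ to the finer $\tau_2$ enlarges the family of $0$-neighborhoods, hence the family of equicontinuous subsets of $E'$, which strengthens the continuity requirement and can only shrink the completion; since $E$ lies inside every completion, $\widehat E_{\tau_2}\subseteq\widehat E_{\tau_1}=E$ forces $\widehat E_{\tau_2}=E$. Taking $\tau_1$ to be the normal topology (complete by Corollary~\ref{completecor}) yields completeness for every $\tau_2$ between the normal and the Mackey topology.

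For the second item I would fix an arbitrary compatible topology $\tau$, so that $\sigma(\calD'_\Gamma,\calE'_\Lambda)\subseteq\tau\subseteq\tau(\calD'_\Gamma,\calE'_\Lambda)$, a range that includes the H\"ormander topology since it too has dual $\calE'_\Lambda$, and show that every $\tau$-closed bounded set $B$ is $\tau$-complete. Let $\mathcal F$ be a $\tau$-Cauchy filter on $B$. By Mackey's theorem the bounded sets coincide for all compatible topologies, and by semi-reflexivity (Proposition~\ref{general}) every bounded set is relatively $\sigma(\calD'_\Gamma,\calE'_\Lambda)$-compact; since $\tau\supseteq\sigma$, the filter $\mathcal F$ is $\sigma$-Cauchy and hence $\sigma$-converges to some $x\in\calD'_\Gamma$. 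To upgrade this to $\tau$-convergence I would use that $\tau$ admits a basis of absolutely convex $\tau$-closed neighborhoods $U$ of $0$, which by Hahn--Banach are also $\sigma$-closed: choosing $F\in\mathcal F$ with $F-F\subseteq U$ and any $y\in F$, the $\sigma$-limit satisfies $x\in\overline F^{\,\sigma}\subseteq y+U$, whence $F\subseteq y+U\subseteq x+2U$, so $\mathcal F$ converges to $x$ in $\tau$. As $B$ is $\tau$-closed we get $x\in B$, proving that $B$ is $\tau$-complete; thus $\calD'_\Gamma$ is quasi-complete for every compatible topology, in particular for the H\"ormander topology.

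I expect the monotonicity of completeness in the first item to be the main obstacle, since completeness is notoriously \emph{not} preserved under a change of compatible topology in general (it typically fails as one coarsens, as the weak topology already shows); the delicate point is the direction of the inclusion of equicontinuous families and the resulting inclusion of Grothendieck completions. The remainder is routine: the Mackey--Arens identification of the compatible topologies and the weak-to-$\tau$ transfer of Cauchy filters in the second item are standard once semi-reflexivity is in hand. If one prefers to avoid Grothendieck's criterion for the coarser-than-normal range (including the H\"ormander topology), an alternative is to invoke the semi-Montel property directly: a bounded normal-closed set is normal-compact, and any coarser Hausdorff compatible topology agrees with the normal one on it, yielding compactness and \emph{a fortiori} completeness.
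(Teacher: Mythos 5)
Your proposal is correct, and its overall strategy coincides with the paper's: the first item is deduced from completeness in the normal topology (Corollary~\ref{completecor}) plus upward propagation of completeness within the dual pair, and the second from semi-reflexivity (Proposition~\ref{general}) by transferring a Cauchy filter to the weak topology and back. The differences lie in what you prove versus what the paper cites. For the first item the paper simply invokes Bourbaki for the fact that completeness passes to finer compatible topologies; you reprove this via Grothendieck's completeness theorem (a finer compatible topology yields more equicontinuous subsets of $\calE'_\Lambda$, hence a smaller completion containing $\calD'_\Gamma$), which is exactly the standard proof of the cited statement, and your direction-of-inclusion analysis is right. For the second item the paper passes through weak quasi-completeness and the bipolar $C^{\circ\circ}$ of a closed bounded set, whereas you use the equivalent characterization of semi-reflexivity by relative weak compactness of bounded sets and work with $B$ directly; moreover you make fully explicit a step the paper leaves implicit, namely that a $\tau$-Cauchy filter which converges weakly also converges for $\tau$, because $\tau$ has a basis of absolutely convex $\tau$-closed, hence weakly closed, neighborhoods of zero. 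One caution: your closing alternative (deducing the second item from the semi-Montel property) would be circular in the paper's logical order, since Proposition~\ref{semiMontel} is itself proved from the quasi-completeness statement at hand; your main argument, which relies only on Proposition~\ref{general} and Corollary~\ref{completecor}, avoids this.
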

\begin{proof}
We have proved that $\calD'_\Gamma$ is complete for
the normal topology. Thus, it is complete for all
topologies that are finer than the normal topology
and that are compatible with duality~\cite[p.~IV.5]{Bourbaki-TVS}.
We have also showed that $\calD'_\Gamma$ is semi-reflexive.
As a consequence, it is quasi-complete for the weak topology 
$\sigma(\calD'_\Gamma,\calE'_\Lambda)$~\cite[p.~228]{Horvath}.
This implies that $\calD'_\Gamma$ is quasi-complete
for every topology compatible with the duality
between $\calD'_\Gamma$ and $\calE'_\Lambda$,
in particular for the normal topology~\cite[p.~IV.5]{Bourbaki-TVS}.
Since Bourbaki's proof is rather sketchy, we give it in more detail.
Assume that $E$ is quasi-complete for the weak topology
$\sigma(E,E')$ and consider a topology $\calT$ compatible
with duality. The space $E$ is quasi-complete for $\calT$
iff every $\calT$-closed $\calT$-bounded subset of $E$ is 
complete~\cite[p.~128]{Horvath}.
Consider a subset $C$ of $E$ which is closed and bounded for $\calT$.
By the theorem of the bipolars, the bipolar $C^{\circ\circ}$ of
$C$ is a balanced, convex, $\sigma(E,E')$-closed
set containing $C$. We also know that $C$ is bounded for $\calT$
iff it is bounded for $\sigma(E,E')$ because $\calT$
is compatible with duality~\cite[p.~209]{Horvath}. Then, we
use the fact that $C$ is bounded for $\sigma(E,E')$
iff $C^\circ$ is absorbing~\cite[p.~191]{Horvath}.
But $C^\circ=(C^{\circ\circ})^\circ$ so that
$C^{\circ\circ}$ is weakly bounded if and only if $C$ is weakly bounded.
Therefore, $C^{\circ\circ}$ is bounded, convex and closed for
$\sigma(E,E')$, and also for the other topologies compatible with
duality by the first two items of the proposition.
Consider now a Cauchy filter on $C^{\circ\circ}$ for the topology $\calT$.
It is also a Cauchy filter for the weak topology.
Indeed a filter $\frakF$ is Cauchy if and only if, for any neighborhood
$V$ of zero, there is an $F\in\frakF$ such that
$F-F\subset V$. The topology $\calT$ being compatible with
duality, it is finer than the weak topology. Thus,
any weak neighborhood $V$ is also a neighborhood of $\calT$
and $\frakF$ is a Cauchy filter for the weak topology.
This Cauchy filter converges to 
a point $x$ because $E$ is quasi-complete for the weak topology.
Moreover, $x$ is in $C^{\circ\circ}$ because $C^{\circ\circ}$ is 
weakly closed. Therefore, the Cauchy filter converges in $C^{\circ\circ}$
and $C^{\circ\circ}$ is complete for $\calT$. As a consequence,
$C$ itself is also complete because it is a closed 
subset of a complete set~\cite[p.~128]{Horvath}.
\end{proof}

This brings us to the following result
\begin{prop}
\label{semiMontel}
The space $\calD'_\Gamma$ with its normal topology is semi-Montel.
The space $\calE'_\Lambda$ is a normal space of distributions
on which the strong, Mackey, inductive limit
and Arens topologies are equivalent.
\end{prop}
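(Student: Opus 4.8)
The plan is to split the statement into two essentially independent assertions and reduce each one to results already in hand. First I would establish the semi-Montel property of $\calD'_\Gamma$; then I would use it to insert the Arens topology into the chain of equivalent topologies on $\calE'_\Lambda$; and finally I would read off the normal-space-of-distributions claim directly from Proposition~\ref{normalprop}.

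For the semi-Montel property, recall that a space is semi-Montel exactly when every closed bounded set is compact. We have already shown that $\calD'_\Gamma$ with its normal topology is nuclear, hence in particular a Schwartz space, and that it is complete (Corollary~\ref{completecor}). In a nuclear (equivalently, Schwartz) space every bounded subset is precompact; combined with completeness this forces every \emph{closed} bounded subset to be simultaneously precompact and complete, hence compact. I would spell this out as follows: take $B$ bounded and closed, note that $B$ is precompact by nuclearity and complete as a closed subset of the complete space $\calD'_\Gamma$, so $B$ is compact. Thus $\calD'_\Gamma$ is semi-Montel. (Note that it is not Montel, since it fails to be barrelled, which is consistent with the introduction.)

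Next I would treat the equivalence of topologies on $\calE'_\Lambda$. By Lemma~\ref{stronglem} the strong, Mackey and inductive topologies already coincide, so the only thing left is to slot the Arens topology $\kappa(\calE'_\Lambda,\calD'_\Gamma)$ of uniform convergence on the balanced convex compact subsets of $\calD'_\Gamma$ into this chain. Since compact sets are bounded, the Arens topology is automatically coarser than the strong topology. For the reverse comparison I would invoke the semi-Montel property just proved: given any bounded set $B$ of $\calD'_\Gamma$, its closed balanced convex hull $K$ is again bounded, hence compact. As $B\subset K$ we have $p_B\le p_K$, so every strong seminorm is dominated by an Arens seminorm, and the strong topology is therefore coarser than the Arens topology. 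Hence the two coincide, and the strong, Mackey, inductive and Arens topologies on $\calE'_\Lambda$ are all equivalent. Finally, that $\calE'_\Lambda$ is a normal space of distributions is essentially already recorded: item~(iii) of Proposition~\ref{normalprop} asserts that $\calD_\Gamma=\calE'_\Lambda$ equipped with the Arens topology is a normal space of distributions, and we have just identified that Arens topology with the topology $\calE'_\Lambda$ actually carries.

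The genuinely substantive step is the passage from bounded to compact, i.e. the semi-Montel property itself, which rests entirely on the already-proved nuclearity and completeness of $\calD'_\Gamma$; everything after that is a matter of comparing the polars of bounded versus compact sets. The one point demanding a little care is the claim that the closed balanced convex hull of a bounded set is compact: this uses that in a complete (it suffices, quasi-complete) locally convex space the closed convex balanced hull of a (relatively) compact set is compact. This is the only place in the second half of the argument where completeness, rather than mere semi-reflexivity, is genuinely needed, and it is exactly the input supplied by Corollary~\ref{completecor}.
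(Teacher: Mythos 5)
Your proof is correct and follows essentially the same route as the paper: semi-Montel via nuclearity plus (quasi-)completeness, then equivalence of the Arens topology with the strong/Mackey/inductive ones via the semi-Montel property, and finally item~(iii) of Proposition~\ref{normalprop} for the normal-space-of-distributions claim. The only difference is that you prove inline (the precompact-plus-complete argument, and the closed balanced convex hull comparison of polars) what the paper simply cites from Tr\`eves and Horv\'ath.
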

\begin{proof}
We saw that $\calD'_\Gamma$ is quasi-complete and nuclear
for its normal topology. Thus, its bounded subsets are
relatively compact~\cite[p.~520]{Treves} and 
$\calD'_\Gamma$ is semi-Montel by definition of
semi-Montel spaces~\cite[p.~231]{Horvath}.
We already know that the strong, Mackey and inductive limit
topologies are equivalent. It is known that on the dual of a
semi-Montel space, the Arens topology is equivalent to
the strong and Mackey ones~\cite[p.~235]{Horvath}.
By item~(iii) of proposition~\ref{normalprop},
we obtain that $\calE'_\Lambda$ is a normal space of distributions.
\end{proof}

Semi-Montel spaces have interesting stability 
properties~\cite[\S~3.9]{Horvath},
\cite[\S~11.5]{Jarchow}
(for example, a closed subspace of a semi-Montel 
space is semi-Montel~\cite[p.~232]{Horvath},
as well as a strict inductive limit of semi-Montel 
spaces~\cite[p.~240]{Horvath}).
Moreover, if $B$ is a bounded subset of $\calD'_\Gamma$,
then the topology induced on $B$ by the normal
topology is the same as that induced by the weak
$\sigma(\calD'_\Gamma,\calE'_\Lambda)$ 
topology~\cite[p.~231]{Horvath} and
$B$ is metrizable (because
$\calE'_\Lambda$, the strong dual of $\calD'_\Gamma$,
is nuclear~\cite[p.~217]{Hogbe-81}).

The following properties of semi-Montel spaces are a
characterization of convergence~\cite[p.~232]{Horvath}
which is useful in renormalization theory:
\begin{prop}
If $u_i$ is a sequence of elements of $\calD'_\Gamma$ such that 
$\langle u_i,v\rangle$ converges to some number
$\lambda(v)$ in $\bbK$ for all $v\in \calE'_\Lambda$, then 
the map $u:v\mapsto \lambda(v)$ belongs to $\calD'_\Gamma$ and
$u_i$ converges to $u$ in $\calD'_\Gamma$.
\end{prop}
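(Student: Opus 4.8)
The plan is to recognize this as the standard fact that in a semi-reflexive, semi-Montel space a weakly convergent sequence converges for the original topology, once one verifies that the limit functional genuinely lies in the space. Throughout I equip $\calD'_\Gamma$ with its normal topology and regard $\calE'_\Lambda$ as its dual, as established in Proposition~\ref{firstdualprop}.

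First I would note that $\lambda$ is linear, being the pointwise limit of the linear functionals $v\mapsto\langle u_i,v\rangle$, so a priori $\lambda$ is only an element of the algebraic dual of $\calE'_\Lambda$. Next, for each fixed $v\in\calE'_\Lambda$ the convergent scalar sequence $\langle u_i,v\rangle$ is bounded, so the set $A=\{u_i\}$ is weakly bounded in $\calD'_\Gamma$. Since $\calE'_\Lambda$ is barrelled (Proposition~\ref{general}), the Banach--Steinhaus theorem shows that a weakly bounded subset of its dual $\calD'_\Gamma$ is equicontinuous, hence bounded for the normal topology; thus $A$ is a bounded subset of $\calD'_\Gamma$.

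The key step is to produce the limit inside $\calD'_\Gamma$. The sequence $\{u_i\}$ is Cauchy for the weak topology $\sigma(\calD'_\Gamma,\calE'_\Lambda)$, since $\langle u_i,v\rangle$ converges in $\bbK$ for every $v$. Because $\calD'_\Gamma$ is semi-reflexive (Proposition~\ref{general}), it is quasi-complete for this weak topology, as recorded in the proof of Proposition~\ref{quasicompleteprop}. Enclosing $A$ in its weakly closed, convex, balanced hull $C$ --- which is again weakly bounded and weakly closed, hence weakly complete by quasi-completeness --- the weak Cauchy sequence $\{u_i\}$ converges weakly to some $u\in C\subset\calD'_\Gamma$. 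By uniqueness of limits in $\bbK$ one has $\langle u,v\rangle=\lim_i\langle u_i,v\rangle=\lambda(v)$ for all $v$, so the functional $\lambda$ is represented by the genuine element $u\in\calD'_\Gamma$.

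It remains to upgrade weak convergence to convergence for the normal topology. The set $A\cup\{u\}$ is bounded in $\calD'_\Gamma$, and by the remark following Proposition~\ref{semiMontel} the normal topology and the weak topology $\sigma(\calD'_\Gamma,\calE'_\Lambda)$ induce the same topology on any bounded subset of the semi-Montel space $\calD'_\Gamma$. Since $u_i\to u$ weakly, it follows that $u_i\to u$ for the normal topology, which completes the proof. The main obstacle is the middle step: the hypothesis only furnishes a linear functional $\lambda$ on $\calE'_\Lambda$ lying a priori in the algebraic dual, and the real content is that weak boundedness (via barrelledness of $\calE'_\Lambda$) together with weak quasi-completeness (via semi-reflexivity of $\calD'_\Gamma$) forces $\lambda$ to be continuous, i.e.\ to belong to $\calD'_\Gamma$; the semi-Montel property then merely converts weak convergence into normal convergence for free.
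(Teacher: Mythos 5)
Your proof is correct and takes essentially the same approach as the paper: the paper justifies this proposition by invoking the standard convergence characterization for semi-Montel spaces (Horv\'ath, p.~232), applied to $\calD'_\Gamma$ with its normal topology. Your argument simply writes out the proof of that cited fact using exactly the ingredients the paper has already established --- boundedness of $\{u_i\}$ via barrelledness of $\calE'_\Lambda$, existence of the weak limit $u\in\calD'_\Gamma$ via weak quasi-completeness (semi-reflexivity), and the coincidence of the weak and normal topologies on bounded sets to upgrade weak convergence to convergence in $\calD'_\Gamma$.
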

\begin{prop}
If $(u_\epsilon)_{0<\epsilon < \alpha}$ is a family of elements of 
$\calD'_\Gamma$ such that 
$\langle u_\epsilon,v\rangle$ converges to some number
$\lambda(v)$ in $\bbK$ as $\epsilon\to 0$
for all $v\in \calE'_\Lambda$, then 
the map $u:v\mapsto \lambda(v)$ belongs to $\calD'_\Gamma$ and
$u_\epsilon\to u$ in $\calD'_\Gamma$
as $\epsilon\to0$.
\end{prop}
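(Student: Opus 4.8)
The plan is to deduce this continuous-parameter statement directly from the preceding proposition (its sequential version) by exploiting the single structural fact that the value $\epsilon=0$ possesses a countable neighborhood basis in $[0,\alpha)$, so that convergence as $\epsilon\to0$ is entirely detected by sequences in the parameter. No new analysis is needed beyond what was used to establish the sequential case (weak boundedness, the semi-Montel property of $\calD'_\Gamma$, and the metrizability of its bounded subsets recorded after Proposition~\ref{semiMontel}).

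Concretely, I would first fix the candidate limit: for each $v\in\calE'_\Lambda$ the number $\lambda(v)=\lim_{\epsilon\to0}\langle u_\epsilon,v\rangle$ is given by hypothesis and depends on no choice, and the map $v\mapsto\lambda(v)$ is manifestly linear. Now take an arbitrary sequence $\epsilon_n\in(0,\alpha)$ with $\epsilon_n\to0$. Along it one has $\langle u_{\epsilon_n},v\rangle\to\lambda(v)$ in $\bbK$ for every $v\in\calE'_\Lambda$, so the preceding proposition applies verbatim: it yields that the functional $u\colon v\mapsto\lambda(v)$ belongs to $\calD'_\Gamma$ and that $u_{\epsilon_n}\to u$ in the normal topology. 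Since the pairing between $\calD'_\Gamma$ and $\calE'_\Lambda$ is separating (Lemma~\ref{pairinglem}) and the numbers $\lambda(v)$ are fixed, this limit $u$ is one and the same element of $\calD'_\Gamma$ for every sequence $\epsilon_n\to0$. This already settles the first assertion, namely $u\in\calD'_\Gamma$.

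To upgrade the sequential convergence to convergence as $\epsilon\to0$, I would invoke the elementary fact that for a map $f$ on $(0,\alpha)$ valued in an \emph{arbitrary} topological space $X$ one has $f(\epsilon)\to L$ as $\epsilon\to0$ if and only if $f(\epsilon_n)\to L$ for every sequence $\epsilon_n\to0$; the nontrivial direction follows by contraposition, extracting from a putative bad neighborhood $U\ni L$ a point $\epsilon_n\in(0,1/n)$ with $f(\epsilon_n)\notin U$ and thereby a sequence violating the hypothesis. The decisive point is that only the domain $(0,\alpha)$ need be first countable at $0$; no metrizability of the target is required, which is essential here since $\calD'_\Gamma$ itself is not metrizable (only its bounded subsets are). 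Applying this with $X=\calD'_\Gamma$ in its normal topology, $f(\epsilon)=u_\epsilon$, $L=u$, and the sequential conclusion of the previous paragraph, we obtain $u_\epsilon\to u$ in $\calD'_\Gamma$ as $\epsilon\to0$.

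Thus the proof is a reduction rather than a fresh argument, and there is no genuine analytic obstacle beyond the one already overcome for sequences. The only point demanding care is precisely the one just stressed: the argument must be routed through sequences in the \emph{parameter} and must not rely on any sequential characterization of the topology of $\calD'_\Gamma$, and one must verify that the limit functional is independent of the chosen sequence so that the preceding proposition returns a single well-defined $u$ throughout.
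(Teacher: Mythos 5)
Your proof is correct, and it takes a genuinely different route from the paper. The paper offers no separate argument for the family version: both this statement and its sequential predecessor are presented together as instances of the characterization of convergence in semi-Montel spaces cited from \cite[p.~232]{Horvath}, i.e.\ the paper implicitly handles the family $(u_\epsilon)$ directly (via the filter of its tails), relying on the facts that weakly bounded sets in $\calD'_\Gamma$ are equicontinuous and relatively compact and that the weak and normal topologies agree on bounded sets. You instead take the sequential proposition as already established and reduce the continuous-parameter statement to it by the elementary topological observation that convergence as $\epsilon\to 0$ of a map from $(0,\alpha)$ into an \emph{arbitrary} topological space is detected by sequences $\epsilon_n\to 0$, since only the parameter domain needs a countable neighborhood basis at $0$; you are right to stress that this places no sequential or metrizability demand on $\calD'_\Gamma$ itself, and you correctly handle the sequence-independence of the limit, which is pinned down by the numbers $\lambda(v)$ together with the separating pairing and Hausdorffness. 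What your reduction buys is that it is purely formal: no re-examination of boundedness or equicontinuity is needed for families once the sequential case is known. What the paper's route buys is uniformity: the single cited semi-Montel characterization covers sequences, families, and general filters with bounded tails at one stroke, without privileging the sequential statement. The only cosmetic point in your write-up is that the extracted points should be taken in $\bigl(0,\min(1/n,\alpha)\bigr)$ rather than $(0,1/n)$, so that they lie in the domain of the family.
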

By proposition~\ref{quasicompleteprop}, we see that $\calD'_\Gamma$
is quasi-complete for the H\"ormander topology. However, it
is generally not complete because $\calD'(\Omega)$ is not
complete for the weak topology (otherwise, every linear map
from $\calD(\Omega)$ to $\bbK$ would be continuous, whereas 
it is well known that the algebraic dual of $\calD(\Omega)$
is larger than $\calD'(\Omega)$~\cite{Oberguggenberger-13}).

%

\subsection{Bounded sets}
The bounded sets of $\calD'_\Gamma$ are important in
renormalization theory because they are used to
define the scaling degree~\cite{Brunetti2} of
a distribution and the weakly homogeneous 
distributions~\cite{Meyer-98}.

The bounded sets of $\calD'_\Gamma$ were characterized 
in the proof of lemma~\ref{stronglem}: a subset $B'$
of $\calD'_\Gamma$ is bounded if 
$B'$ is a bounded set of $\calD'(\Omega)$
and for every integer $N$, every $\psi\in \calD(\Omega)$ and
every closed cone $V$ such that
$\supp\psi\times V \cap \Gamma=\emptyset$, there
is a constant $M_{N,V,\chi}$ such that
$||u||_{N,V\chi} \le M_{N,V,\chi}$ for all $u\in B'$.
The bounded sets of $\calD'(\Omega)$ have several 
characterizations (see \cite[pp.~86 and 195]{Schwartz-66}
and \cite[pp.~330 and 493]{Edwards}).

We can now give a list of the main properties of the
bounded sets of $\calD'_\Gamma$, which correspond
to a Banach-Steinhaus theorem for $\calD'_\Gamma$:
\begin{thm}
\label{boundsetprop}
In $\calD'_\Gamma$:
\begin{itemize}
\item The bounded subsets
are the same for all topologies
finer than the weak topology $\sigma(\calD'_\Gamma,\calE'_\Lambda)$
and coarser than the strong topology
$\beta(\calD'_\Gamma,\calE'_\Lambda)$.
In particular, they are the same
for the normal and the H\"ormander topologies.
\item The bounded sets are equicontinuous.
\item The closed bounded sets are compact, identical
and topologically equivalent
for the weak, H\"ormander and
normal topologies.
\end{itemize}
\end{thm}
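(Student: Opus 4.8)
The plan is to deduce all three items from structural facts already established, namely the barrelledness of $\calE'_\Lambda$ (proposition~\ref{general}) for the first two items and the semi-Montel property of $\calD'_\Gamma$ (proposition~\ref{semiMontel}) for the third.

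I would treat the first two items together, using equicontinuity as the bridge. Since $\calE'_\Lambda$ is barrelled and its strong dual is $\calD'_\Gamma$, a standard theorem on barrelled spaces~\cite{Horvath} shows that every $\sigma(\calD'_\Gamma,\calE'_\Lambda)$-bounded subset $B$ of $\calD'_\Gamma$ is equicontinuous; this is exactly the second item. For the first item I would observe that an equicontinuous set is automatically $\beta(\calD'_\Gamma,\calE'_\Lambda)$-bounded, while conversely, the strong topology being finer than the weak one, every strongly bounded set is weakly bounded. Hence the three properties --- weakly bounded, equicontinuous, and strongly bounded --- coincide. For an arbitrary topology $\calT$ with $\sigma(\calD'_\Gamma,\calE'_\Lambda)\subseteq \calT \subseteq \beta(\calD'_\Gamma,\calE'_\Lambda)$, boundedness for $\calT$ is squeezed between weak and strong boundedness and therefore agrees with both. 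Because the H\"ormander and normal topologies both lie in this range (each admits $\calE'_\Lambda$ as its dual, by proposition~\ref{firstdualprop} and the remark following it), their bounded sets coincide.

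For the third item I would invoke proposition~\ref{semiMontel}: in a semi-Montel space every bounded set is relatively compact, so every closed bounded subset of $\calD'_\Gamma$ is compact for the normal topology, and hence also for the coarser H\"ormander and weak topologies. To obtain that the closed bounded sets are the \emph{same} sets and carry the same trace topology, I would use the fact recalled above that on any bounded subset the weak topology $\sigma(\calD'_\Gamma,\calE'_\Lambda)$ already coincides with the normal topology~\cite[p.~231]{Horvath}, the H\"ormander topology being squeezed between them. Thus a normal-closed bounded set, being normal-compact, is weakly compact and hence weakly closed; conversely a weakly closed bounded set is closed in its (compact, bounded) normal closure for the weak-equals-normal trace topology, hence normal-closed. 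The three families of closed bounded sets therefore coincide, and on each such set the three topologies induce one and the same compact topology. The only point needing a little care is this last identification for possibly non-convex sets; I expect it to be the sole (minor) obstacle, and it is resolved precisely by passing to the normal closure, on which all three topologies agree so that closedness and the induced topology become topology-independent.
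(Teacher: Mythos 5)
Your proposal is correct and follows essentially the same route as the paper: barrelledness of $\calE'_\Lambda$ plus Banach--Steinhaus for the first two items, the semi-Montel property (i.e.\ quasi-completeness and nuclearity) of $\calD'_\Gamma$ together with the coincidence of the weak and normal topologies on bounded sets for the third, and a squeeze argument to place the H\"ormander topology in the admissible range. The only cosmetic difference is in the last identification step, where the paper argues via the intrinsic nature of compactness while you pass to the normal closure; also note that the direction ``weakly closed $\Rightarrow$ normal-closed'' is immediate since the weak topology is coarser, so your closure argument there is unnecessary (though harmless).
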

\begin{proof}
In general, the bounded subsets of a topological vector space
$E$ are the same for all locally convex Hausdorff topologies on $E$
compatible with the duality between
$E$ and $E'$~\cite[p.~371]{Treves},
i.e. for all topologies finer than the weak topology
and coarser than the Mackey topology~\cite[p.~369]{Treves}.
The barrelledness of $\calE'_\Lambda$ implies that
these bounded sets are also identical with the strongly 
bounded sets~\cite[p.~212]{Horvath}.
In the dual $\calD'_\Gamma$ of the barrelled space $\calE'_\Lambda$, 
a set is bounded if and only if it is equicontinuous~\cite[p.~212]{Horvath}.
In a quasi-complete nuclear space,
every closed bounded subset is compact~\cite[p.~520]{Treves}.
Especially, using propositions \ref{general} and \ref{quasicompleteprop}, this implies that bounded subsets closed for the H\"ormander and normal topologies are compact for these topologies.
In the dual of a barrelled space, the weakly closed 
bounded sets are weakly compact~\cite[p.~212]{Horvath}.
After the proof of prop.~\ref{firstdualprop}, we showed
that the H\"ormander topology is compatible with the 
pairing~\cite[p.~198]{Horvath}.
Thus, by the Mackey-Arens theorem~\cite[p.~205]{Horvath},
it is finer than the weak topology and coarser
than the Mackey one. 

In the remarks following Proposition~\ref{semiMontel}, we 
showed that the weak and normal topologies are equivalent
on the bounded sets. Therefore, the H\"ormander topology
is equivalent to those since it is finer than the weak topology
and coarser than the normal one. As a consequence,
the closed and bounded sets are the same for the
three topologies. Indeed, it suffices to remember that the
bounded sets closed for one of these topologies are compact for the 
corresponding induced topology, and compactness is an internal topological 
property so that they are compact for all the induced topologies since they 
coincide. Finally, compactness implies in a Hausdorff space 
that they are closed for the three topologies.
\end{proof}

In concrete terms, this means that
a subset $B'$ is bounded in $\calD'_\Gamma$
if and only if one (and then all) of the following conditions is
satisfied:
\begin{itemize}
\item[(i)] For every $v\in \calE'_\Lambda$, there
   is a constant $M_v$ such that
   $|\langle u,v\rangle|\le M_v$
  for all $u\in B'$. This defines weakly bounded sets.
\item[(ii)] For every bounded set $B$ of $\calE'_\Lambda$,
  there is a constant $M_B$ such that
   $|\langle u,v\rangle|\le M_B$ for 
  all $u\in B'$ and all $v\in B$. This defines
  strongly bounded sets. 
\item[(iii)] There is a constant $C$ and a finite
  set of seminorms $p_i$ of $\calE'_\Lambda$
  such that 
  $|\langle u,v\rangle| \le C \max_i p_i(v)$.
  This defines equicontinuous sets~\cite[p.~200]{Horvath}.
\end{itemize}
With respect to item (ii) recall that, the inductive
limit being not regular, there are bounded
sets in $\calE'_\Lambda$ that are not contained
and bounded in any $E_\ell$.
However, of course, as we already used, the bounded sets of 
every $E_\ell$ are bounded in $\calE'_\Lambda$.

Note also that the closed \emph{convex} subsets are the same for all topologies
  compatible with the duality between $\calD'_\Gamma$
  and $\calE'_\Lambda$~\cite[p.~370]{Treves}.

\subsection{Completeness properties of $\calE'_\Lambda$}
\label{comEsect}
By contrast with $\calD'_\Gamma$, the completeness properties
of $\calE'_\Lambda$ are very poor. 
More precisely, we have
\begin{thm}
Assume that $\Lambda$ is an open cone which is not closed, 
then $\calE'_\Lambda$ 
with its strong topology is not 
(even weakly) sequentially complete. 
In particular, if $\Omega$ is connected and the dimension of spacetime
is $n>1$, then $\calE'_\Lambda$ is not sequentially complete 
when $\Lambda$ is any open conical nonempty proper subset of
$\dotT^*\Omega$.
\end{thm}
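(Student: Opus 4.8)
The plan is to exhibit a single sequence in $\calE'_\Lambda$ that is Cauchy for the strong topology yet has no limit even for the weak topology $\sigma(\calE'_\Lambda,\calD'_\Gamma)$: since a strongly Cauchy sequence is weakly Cauchy and a weak limit is \emph{a fortiori} the only candidate for a strong limit, one such construction refutes both strong and weak sequential completeness. Before that I would dispose of the reduction in the second sentence. For $n>1$ the punctured space $\bbR^n\setminus\{0\}$ is connected, so if $\Omega$ is connected then $\dotT^*\Omega=\Omega\times(\bbR^n\setminus\{0\})$ is connected; a nonempty proper open subset of a connected space cannot be closed (it would be a nontrivial clopen set). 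Hence every nonempty proper open conic $\Lambda$ is open and not closed, and the particular statement follows from the general one. This is exactly where $n>1$ is used: for $n=1$ a cone such as $\Omega\times\{k>0\}$ is clopen.

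For the general case, fix a codirection $(x_0;\eta_0)\in\overline{\Lambda}\setminus\Lambda$, which exists because $\Lambda$ is not closed; since $\Lambda=(\Gamma')^c$ with $\Gamma'$ closed we have $(x_0;\eta_0)\in\Gamma'$, i.e. $(x_0;-\eta_0)\in\Gamma$, while $(x_0;\eta_0)=\lim_j(x_0;\eta_j)$ with $(x_0;\eta_j)\in\Lambda$, so $(x_0;-\eta_j)\notin\Gamma$. I would then choose compactly supported distributions $s_j$ singular only along the ray through $(x_0;\eta_j)$, supported in a fixed neighbourhood of $x_0$, together with scalars $\epsilon_j$ and frequency scales $R_j\to\infty$, and set $v_m=\sum_{j\le m}\epsilon_j s_j$. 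Each $v_m$ lies in $\calE'_\Lambda$ (finite wavefront set inside $\Lambda$, fixed compact support). The point that makes the pairing tractable is the sign bookkeeping: because $(x_0;\eta_j)\in\Lambda$, the term $\langle u,s_j\rangle$ samples $\widehat{u\varphi}$ along $-\eta_j\in\Gamma^c$, where the seminorms $\|u\|_{N,V_j,\varphi}$ of $\calD'_\Gamma$ furnish rapid decay, so each term is finite.

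Granting such a construction, the conclusion is formal. If $(v_m)$ had a weak limit in $\calE'_\Lambda$, then testing against $\calD(\Omega)\subset\calD'_\Gamma$ would force $v_m$ to converge in $\calD'(\Omega)$ to that same element; so it suffices to arrange that the linear functional $L(u)=\lim_m\langle u,v_m\rangle$ is \emph{not} represented by any element of $\calE'_\Lambda$. Here the decisive structural input is that $\calD'_\Gamma$ is not barrelled: a pointwise limit of the continuous functionals $v_m$ need not be continuous, so $L$ may fail to lie in $(\calD'_\Gamma)'=\calE'_\Lambda$ even though each $v_m$ does. Concretely, the $\epsilon_j$ are tuned so that the wavefront set of the $\calD'$-limit reaches $(x_0;\eta_0)\notin\Lambda$, or equivalently so that $L$ is a nonzero functional vanishing on the dense subspace $\calD(\Omega)$; either way $L\notin\calE'_\Lambda$ and no weak limit exists.

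The hard part — the technical heart of the theorem — is verifying that $(v_m)$ is Cauchy for the strong (equivalently, by Lemma~\ref{stronglem}, the inductive or Mackey) topology, which amounts to showing that $\sum_{j}\epsilon_j\langle u,s_j\rangle$ converges suitably uniformly over bounded sets, hence for every $u\in\calD'_\Gamma$. The obstruction is that $-\eta_j\to-\eta_0\in\Gamma$, so the closed cone $V_j$ on which one controls $\widehat{u\varphi}$ must shrink and the constant $\|u\|_{N,V_j,\varphi}$ degrades at a $u$-dependent rate; moreover $u\varphi$, being a compactly supported distribution, has only finite order $m_u$, giving a polynomial bound $|\widehat{u\varphi}(-\lambda\eta_j)|\le C(1+\lambda)^{m_u}$ with $m_u$ unbounded over $\calD'_\Gamma$. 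Thus the $\epsilon_j$ and the approach rate $\eta_j\to\eta_0$ must be chosen to beat this non-uniform degradation (so the series converges against every $u$) while remaining large enough for the limit to escape $\calE'_\Lambda$; balancing these opposing demands is the crux, and it is precisely the place where one recognises that $\calE'_\Lambda=\lim_{\to}E_\ell$ is \emph{not regular} (the $v_m$ form a bounded weakly Cauchy sequence not bounded in any single $E_\ell$). A more structural route would avoid the explicit balancing altogether: combine the non-regularity of $\calE'_\Lambda=\lim_{\to}E_\ell$ (already established when $\Lambda$ is not closed) with the standard fact from the theory of (LF)-spaces that a sequentially complete inductive limit of Fréchet spaces is regular, and deduce the weak statement using that $\calD'_\Gamma$ is separable by Corollary~\ref{PLS}.
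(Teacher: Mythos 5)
Your overall architecture does coincide with the paper's: dispose of the ``in particular'' clause by connectedness (a nonempty proper open cone in a connected $\dotT^*\Omega$ cannot be clopen), then exhibit a strongly Cauchy sequence in $\calE'_\Lambda$ whose only candidate limit is a distribution whose wavefront set contains a boundary point of $\Lambda$, so that no weak limit can exist in $\calE'_\Lambda$. But what you have written is a plan, not a proof, and the two places you yourself call ``the crux'' are precisely where the paper's entire technical content lies. (i) You never construct the $s_j$ nor verify the strong Cauchy property; you only describe the obstruction (shrinking cones, $u$-dependent order $m_u$). The paper does not resolve this by ``balancing $\epsilon_j$ against the approach rate'': it makes the decay order of the $m$-th building block grow, using H\"ormander's distributions $u_{\eta_m,m}$ with $|\widehat{u_{\eta_m,m}}(\xi)|\le 10^m(1+|\xi|)^{-m}$ and coefficients $1/m!$, and then invokes a uniform boundedness lemma (lemma~\ref{UniformBound}) giving, for each bounded $B\subset\calD'_\Gamma$, a \emph{single} exponent $M_B$ with $\sup_{u\in B}|\widehat{\tilde\chi u}(k)|\le D(1+|k|)^{M_B}$; hence $p_B(v_m)\le C_B\,10^m I_n^{n+1}$ for all $m\ge M_B+n+1$, and convergence of $\sum 10^m/m!$ makes $S_m$ (even Mackey-)Cauchy, with only finitely many exceptional terms per bounded set. (ii) You assert that the $\epsilon_j$ can be ``tuned'' so that the wavefront set of the distributional limit reaches $(x_0;\eta_0)$, but the wavefront set of an infinite sum can both gain and lose directions; the paper needs its Steps 3--5 (directions $\eta_m$ kept mutually separated, auxiliary closed cones $\Gamma_M$, convergence of the tails $v-S_M$ in $\calD'_{\Gamma_M}$, and a no-cancellation argument) precisely to guarantee $(x_m;\eta_m)\in\WF(v)$ for every $m$, whence $(x;\eta)\in\WF(v)$ by closedness of the wavefront set. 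A further flaw: your normalization that the boundary point is approached with fixed base point, $(x_0;\eta_j)\in\Lambda$, is not available in general --- for $\Lambda=(\Omega\setminus\{x_0\})\times(\bbR^n\setminus\{0\})$ no point over $x_0$ lies in $\Lambda$ --- which is why the paper lets both $x_m$ and $\eta_m$ vary and treats $n=1$ (where directions cannot vary) separately.

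Two of your statements are also incorrect as written. First, ``the wavefront set of the $\calD'$-limit reaches $(x_0;\eta_0)$'' and ``$L$ is a nonzero functional vanishing on $\calD(\Omega)$'' are not equivalent: they are mutually exclusive, since in the second case $v_m\to 0$ in $\calD'(\Omega)$ and the limit's wavefront set is empty; the paper realizes the first scenario. Second, and more seriously, your proposed ``more structural route'' is circular. The non-regularity of the inductive limit $\calE'_\Lambda=\lim_{\to}E_\ell$ is \emph{not} established prior to this theorem: in section~\ref{Ebornosect} it is deduced by a forward reference to section~\ref{comEsect} (regularity would force semi-reflexivity, hence quasi-completeness, of $\calE'_\Lambda$), and the corollary following the theorem proves it using the very sequence $S_m$ constructed here. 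So you cannot use non-regularity as an input to prove incompleteness. Moreover, the fact you invoke concerns inductive limits of Fr\'echet spaces, while the $E_\ell$ carry the topology induced by $\calD'_{\Lambda_\ell}$, which contains the seminorms $p_B$ for \emph{all} bounded $B\subset\calD(\Omega)$ and is therefore not metrizable; $\calE'_\Lambda$ is an (LFS)-space only with respect to a different defining spectrum (corollary~\ref{PLS}), and regularity does not transfer between spectra without additional argument. The explicit construction cannot be short-circuited this way.
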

\begin{proof}
In fact, if $\Lambda$ is an open cone
which is not closed in $\dotT^*\Omega$, we exhibit an 
explicit counterexample showing that
$\calE'_\Lambda$ is not sequentially complete.
Since the construction of this counterexample is
a bit elaborate, we first describe its main ideas.
Consider a point $(x;\eta)$ in the boundary of $\Lambda$.
There is a sequence of points $(x_m;\eta_m)\in \Lambda$
such that $(x_m;\eta_m)\to (x;\eta)$.
By using an example due to H\"ormander, we construct
a distribution $v_m$ whose wavefront set is exactly
the line $\{(x_m;\lambda\eta_m)\telque \lambda>0\}$.
Then we show that the sum $v=\sum_m v_m/m!$ is 
a well-defined distribution which does not belong to
$\calE'_\Lambda$ because the point $(x;\eta)$
belongs to its wavefront set. Since the series
defining $v$ is a Cauchy sequence, we have defined
a Cauchy sequence in $\calE'_\Lambda$ 
whose limit is not in $\calE'_\Lambda$.

The proof consists of several steps:
(i) description of H\"ormander's example,
(ii) construction of the counter-example
  $v=\sum v_m/m!$,
(iii) choice of the sequence $(x_m;\eta_m)$
and of the closed cones $\Gamma_M$,
(iv) calculation of the seminorms of $v_m$
  in $\calD'_{\Gamma_M}$,
(v) determination of the wavefront set of $v$,
(vi) proof that the series is Cauchy in 
  $\calE'_\Lambda$,
(vii) discussion of the case where 
$\Lambda$ is both open and closed.

\begin{step}
H\"ormander's distribution
\end{step}

To build this counterexample we start
from a family of  distributions, defined by 
H\"ormander~\cite[p.~188]{Hormander-97}, whose
wavefront sets are made of a single point $x$
and a single direction $\lambda k$ and whose
order is arbitrary:
Let $\chi\in C^\infty(\bbR,[0,1])$ be equal to 1
in $(-\infty,1/2)$ and to 0 in $(1,+\infty)$, 
with $0\le \chi\le 1$. Fix $0 < \rho < 1$,
let $\eta\in \mathbb{R}^n$ be a unit vector, 
take an orthonormal basis 
$(e_1=\eta,e_2,...,e_n)$ and write coordinates in this coordinate system.

Define $u_{\eta,s}\in \calS'(\bbR^n)$, for $s\in \bbR$, by
\begin{eqnarray*}
\widehat{u_{\eta,s}}(\xi) &=& (1-\chi(\xi_1)) \xi_1^{-s}
  \chi((\xi_2^2+\dots+\xi_n^2)/\xi_1^{2\rho}).
\end{eqnarray*}
 Then
$\WF(u_{\eta,s})=\{(0;\xi); \xi_2=\dots=\xi_n=0, \xi_1>0\}=\{0\}\times \bbR_+^*\eta$
and $u_{\eta,s}$ coincides with a function in $\calS(\bbR^n)$
outside a neighborhood of the origin~\cite[p.~188]{Hormander-97}.
It is clear that, if
$\xi=\lambda \eta$ and $\lambda>1$,
then
$|\widehat{u}_{\eta,s}(\xi)|=\lambda^{-s}$ for
any $\lambda>1$, where $s$ is an arbitrary
real number. Thus, the degree of growth
can be an arbitrary polynomial degree.
Moreover, H\"ormander actually proves that 
for any real number $t$ and
any integer $m$, there is a constant $C(t,m)$, such that if 
$\alpha,\beta$ are multi-indexes, and $|\alpha|\geq C(t,m)$ 
then 
$\{x^{\alpha}\partial^\beta u_{\eta,s}, s\geq t, 
|\beta|\leq m,\eta\in S^{n-1}\}$
are bounded continuous functions  on $\bbR^n$, 
uniformly bounded by a constant $D(t,m).$

One should also note that when the last factor in the definition does 
not vanish, we have $\xi_1^{2\rho}\geq (\xi_2^2+...+\xi_n^2)$ so that 
$|\xi_1|^2\geq \frac{|\xi_1|^2+ (\xi_2^2+...+\xi_n^2)^{1/\rho}}{2}\geq 
|\xi|^2/2$ as soon as $\xi_2^2+...+\xi_n^2\geq 1,$ and otherwise 
$|\xi|^{2}\leq |\xi_1|^2+1\leq 5|\xi_1|^2$ when $(1-\chi)(\xi_1)\neq 0$ 
(which implies $\xi_1\geq 1/2$). Moreover, when the first factor does not 
vanish $|\xi|\geq 1/2$ so that $|\xi|\geq (1+2|\xi|)/4\geq (1+|\xi|)/4$. 
As a consequence, we note for $s\geq 0$: 
\begin{align}\label{TechIneq2}|\widehat{u_{\eta,s}}(\xi)| 
\leq (1-\chi(\xi_1))80^{s/2} (1+|\xi|)^{-s}
\chi((\xi_2^2+\dots+\xi_n^2)/\xi_1^{2\rho})
\le 10^s (1+|\xi|)^{-s}.\end{align}

\begin{step}
Construction of the counterexample
\label{stepconst}
\end{step}
Since $\Lambda$ is open and not closed, its boundary
$\partial\Lambda=\overline{\Lambda}\backslash\Lambda$ is not
empty and $\partial\Lambda\cap\Lambda=\emptyset$~\cite[p.~46]{Kelley}.
Moreover, any point $(x;\eta)$ of $\partial\Lambda$ is the limit of a sequence
of points $(x_m;\eta_m)$ in $\Lambda$~\cite[p.~9]{Bredon}.

By starting from H\"ormander's example, we build
a family of distributions $v_m$ such that the wavefront
set of $v_m$ is $\{(x_m;\lambda \eta_m)\telque\lambda >0\}$ and 
$|\widehat{v_m}(\lambda \eta_m)|=(\lambda|\eta_m|)^{-m}$.
For this we use the translation operator $T_x$ acting on
test functions by $(T_x f)(y)=f(y-x)$ and extend it to
distributions by $\langle T_x u,f\rangle = \langle u, T_{-x}f\rangle$.
Thus $T_{x_m}u_{\eta_m,m}$ has the desired properties.
However, we want all distributions $v_m$ to be compactly supported
on $\Omega$. Thus, we define the compact set 
$X=\cup_{m=1}^\infty\{x_m\}\cup \{x\}\subset \Omega$, so that
$\delta=d(X,\Omega^c)>0$, and $\chi$ a smooth function
compactly supported on $B(0,\delta/2)$ and equal to 1
on a neighborhood of the origin.
Then $v_m=T_{x_m} (\chi u_{\eta_m,m})$ is a 
distribution in $\calE'(\Omega)$ with the desired properties.

It is easy to show that the series $v=\sum_{m=1}^\infty v_m/m!$
converges to a distribution in $\calE'(\Omega)$.
Indeed, it is enough to  prove that, for any $f\in \calD(\Omega)$,
the numerical series $\sum_m \langle v_m,f\rangle /m!$
converges in $\bbK$~\cite[p.~13]{Friedlander}.
We have 
\begin{eqnarray*}
\langle v_m,f\rangle =
 \langle T_{x_m}\chi u_{\eta_m,m},f\rangle =
 \langle  u_{\eta_m,m},\chi T_{-x_m}f\rangle =
(2\pi)^{-n} \int_{\bbR^n}
  \widehat{u_{\eta_m,m}}(k)
  \widehat{\chi f_{-x_m}}(-k).
\end{eqnarray*}
where $f_{-x_m}=T_{-x_m}f$.
For every integer $N$ we have by Eq.\eqref{ekNVchi}
\begin{eqnarray*}
|\widehat{\chi f_{-x_m}}(k)| \le (1+|k|)^{-N}(4(n+1)\beta)^N
   |K| \pi_{2N,K}(\chi)
   \pi_{2N,K}(f_{-x_m}),
\end{eqnarray*}
where $K$ is a compact neighborhood of $\supp\chi$
and $|K|$ its volume.

Now, $\pi_{2N,K}(f_{-x_m}) \le \pi_{2N,K'}(f)$,
where $K'$ is a compact neighborhood of $\supp f$.
Thus, there is a constant
$C_N=(4(n+1)\beta)^N |K| \pi_{2N,K}(\chi)\pi_{2N,K'}(f)$,
independent of $m$, such that
$|\widehat{\chi f_{-x_m}}(k)| \le C_N (1+|k|)^{-N}$.
The estimate \eqref{TechIneq2} gives us, for $N=n$,
\begin{eqnarray*}
|\langle v_m,f\rangle| & \le &
C_{n} (2\pi)^{-n} 10^m
\int_{\bbR^n} (1+|k|)^{-n-m} dk
\\&\le &
C_{n} (2\pi)^{-n} 10^m
\int_{\bbR^n} (1+|k|)^{-n-1} dk
\le
C_{n} 10^m I^{n+1}_n,
\end{eqnarray*}
because $m\ge 1$, and the series defining $v$
is absolutely convergent with
$|\langle v,f\rangle| \le C_{n} I^{n+1}_n e^{10}$.

We know that the distribution $v$ is well defined but
we have no control of its wavefront set. Indeed,
the wavefront set of $v$ can contain points that are
not in any $WF(v_m)$ and there can be points that are
in the wavefront set of some $v_m$ but not
in $WF(v)$ (see refs.~\cite{Kashiwara,Hollands2} for 
concrete examples). 
Therefore, we must carefully choose the sequence
$(x_m;\eta_m)$ so that $(x;\eta)$ is indeed in the
wavefront set of $v$. This is done in the next step.

\begin{step}
Choice of the sequence and construction of the cones
\end{step}
We want to ensure that all points $(x_m;\eta_m)$ actually
belong to $\WF(v)$. 
Thus, we choose the elements $(x_m;\eta_m)$ so
that each direction $\eta_m$ is at a finite
distance from the other ones (except when $n= 1$, in which case 
we will choose $x_m$ at a finite
distance from one another), to avoid that their
overlap concurs to remove  $(x;\eta)$ from the
wavefront set of $v$.
Since $\Lambda$ is a cone, we can choose $|\eta|=|\eta_m|=1$
and, up to extraction and since $\Lambda$ is open, it is
possible to shift the points $(x_m;\eta_m)$
so that if $n=1$, $x_m\neq x$ and $\eta_m=\eta$, $|x_{m+1}-x|< |x_m-x|/2, |x_m-x|<1$   and if  $n\neq 1$ $\eta_m\neq \eta$, $|\eta_{m+1}-\eta|<\min(|\eta_m-\eta|,d(\eta_m,
-\Gamma_{x_m}))/2$, where
$\Gamma_{x_m}=\{\xi\telque (x_m;\xi)\in\Gamma\}$,
and $|\eta_m-\eta|<1$ for all $m$. 
Let $\rho_m=\min\big(|\eta_m-\eta|,d(\eta_m,-\Gamma_{x_m})\big)<1$ if $n\neq 1$ and set $\rho_m=1/3^m$ if $n=1$,
and note that if $n\neq 1$,
$\rho_{m+1}< \rho_m/2$ implies $|\eta_m-\eta_k|>\rho_m/2$ for all $k>m$.
Indeed,
if $|\eta_m-\eta_k|\leq \rho_m/2$ were true, 
$\rho_m/2\geq \rho_m/2^{k-m}> \rho_k\geq|\eta_k-\eta|$ 
would imply that
$\rho_m\leq |\eta_m-\eta|\leq |\eta_m-\eta_k|+|\eta_k-\eta|< \rho_m$,
 yielding a contradiction.
Recall that
$v_m=T_{x_m}(\chi u_{\eta_m,m})$ so that
$v_m\in \calE'_\Lambda$ and $\WF(v_m)=\{x_m\}\times \bbR_+^*\eta_m$.

To control the wavefront set, we define 
partial sums $S_m=\sum_{i=1}^m v_i/i!$, and we 
show that the cotangent directions of the
wavefront set of $v-S_m$ do not
meet $(x_i;\eta_i)$ for $i\le m$.
Thus, we have the finite sum 
$v= (v-S_m) + \sum_{i=1}^m v_i/i!$ and,
since the cotangent directions of the wavefront
set of the terms do not overlap, there can be no cancellation
and all $(x_i;\eta_i)$ belong to the wavefront
set for $i\le m$. Then, we have indeed $(x_m;\eta_m)\subset \WF(v)$
for all $m$ because this procedure can be applied for all
values of $m$.

It remains to show that the wavefront set of
$v-S_m$ belongs to a closed conical set $\Gamma_m$
which does not meet $(x_i;\eta_i)$ for $i\le m$.
We first build these $\Gamma_m$ as follows:
Let $X_m=\cup_{l>m}^\infty\{x_l\}\cup \{x\}\subset \Omega$ and $\gamma_{m,i}=X_{m} \times \big(\bbR_+^* \overline{B(\eta_i,\rho_i/4)}\big)$.
It is clear that if $n\neq 1$, $\gamma_{m,i}\cap\gamma_{m,j}=\emptyset$
because, for $j>i$, we have
$|\eta_i-\eta_j|> \rho_i/2$ and $\rho_j<\rho_i$. 
Thus, $|\eta_i-\eta_j|> (\rho_i+\rho_j)/4$ and 
since this expression is symmetric in $i$ and $j$,
it holds for all $i\not= j$.
This shows that the balls 
$\overline{B(\eta_i,\rho_i/4)}$ and 
$\overline{B(\eta_j,\rho_j/4)}$ do not meet and
the result follows.
The closed cones $\gamma_{m,i}$ are then used to
define $\Gamma_m=\big(\bigcup_{i>m}\gamma_{m,i}\big) \cup
(X_{m}\times\bbR_+^*\eta)$.

To show that the wavefront set of $v-S_m$
belongs to $\Gamma_m$, we prove that the
series $\sum_{i=m+1}^\infty v_i/i!$ converges in
$\calD'_{\Gamma_m}$.

\begin{step}
Estimates on seminorms of $v_m$ in $\calD'_{\Gamma_M}$, $m>M$.
\end{step}

Fix $\psi\in \calD(\Omega)$ and any closed  
cone $W$ such that $\supp\psi\times W\cap \Gamma_M=\emptyset$.
For convenience we define the 
distance $||x-y||_\infty=\sup_{i=1,...,n} |x^i-y^i|$,
where $x^i$ is the $i$th coordinate of $x$ in a given
orthonormal basis.
Then, we define the distance between two sets to be
$d_\infty(A,B)=\inf_{x\in A,y\in B} ||x-y||_\infty$.

We first consider the case when
$X_M\cap\supp\psi=\emptyset$. Then, $v_m\psi$ is smooth, and we want 
to show that $\{v_m\psi,m\in \bbN\}$ is bounded in $\calD(\Omega)$, 
since $W$ above can be taken arbitrary.
This is equivalent to prove that $\{\chi\psi_{-x_m}u_{\eta_m,m},m\in \bbN\}$ 
is bounded, where $\psi_{-x_m}=T_{-x_m}\psi$.
Let $\epsilon=d_\infty(X_M,\supp\psi)>0.$ 
Since $\psi$ vanishes in a neighborhood of $x_m$ on the ball 
$B_{\infty}(x_m,\epsilon)$ with $\epsilon>0,$ we deduce that 
$\chi\psi_{-x_m}(y)$ vanishes when $||y||_\infty\leq\epsilon.$
Thus, we can consider that $||y||_\infty / \epsilon \ge 1$.

Then, using the properties of H\"ormander's construction, 
we bound uniformly in $m$. Fix $y$ and choose $y^i$ such that
$|y^i|=||y||_\infty$. Then,
\begin{eqnarray*}
|\partial^\alpha \chi\psi_{-x_m}u_{\eta_m,m}(y)| &\le &
\frac{1}{\epsilon^{C(0,|\alpha|)}}\sum_{\beta\le \alpha} \binom{\alpha}{\beta}
|\partial^\beta \chi\psi_{-x_m}|\,|(y^i)^{C(0,|\alpha|)}
\partial^{\alpha-\beta} u_{\eta_m,m}|
\\&\le& 
\frac{1}{\epsilon^{C(0,|\alpha|)}}2^{| \alpha|}
\pi_{|\alpha|,K_m}(\chi\psi_{-x_m})D(0,|\alpha|),
\end{eqnarray*}
where $K_m=\supp(\psi_{-x_m})$.
To establish Eq.~\eqref{ekNVchi} we showed that 
\begin{eqnarray*}
\pi_{|\alpha|,K_m}(\chi\psi_{-x_m})
&\le & 2^{|\alpha|} 
\pi_{|\alpha|,K_m}(\chi)
\pi_{|\alpha|,K_m}(\psi_{-x_m}).
\end{eqnarray*}
But 
$\pi_{|\alpha|,K_m}(\chi)
\le 
\pi_{|\alpha|,\supp\chi}(\chi)$
and
$\pi_{|\alpha|,K_m}(\psi_{-x_m})=
\pi_{|\alpha|,\supp\psi}(\psi)$.
Thus,
\begin{eqnarray*}
|\partial^\alpha \chi\psi_{-x_m}u_{\eta_m,m}(y)| &\le &
\frac{1}{\epsilon^{C(0,|\alpha|)}}2^{2|\alpha|}
\pi_{|\alpha|,\supp\chi}(\chi)\pi_{|\alpha|,\supp\psi}(\psi)
D(0,|\alpha|)
\end{eqnarray*}
is bounded independently of $m$.

In the case $X_M\cap \supp \psi\neq\emptyset$, we have 
$y\in \supp\psi$ for some $y\in X_M$ and 
$\{y\}\times W \cap \gamma_{M,m}=\emptyset$ for all
$m>M$ by our assumption.
Thus, $W\cap \bbR_+^*\overline{B(\eta_m,\rho_m/4)}=\emptyset$ 
for all $m>M$. Arguing as usual by 
a compactness argument, one can prove that
there is a constant $1>c>0$ 
(independent of $m$) such that for all $k$ satisfying both 
$k\in [\bbR_+^*B(\eta_m,\rho_m/4)]^c$ and
 $(k-q)\in \bbR_+^*B(\eta_m,\rho_m/8)$, 
we have $|q|\geq c\rho_m|k-q|$. 
We will deduce from this and our previous estimates a bound on:

\begin{eqnarray*}
&||v_m||_{N,W,\psi}\leq\sup_{k\in W}(1+|k|)^N
 \int_{\bbR^n}dq|\widehat{u_{\eta_m,m}}(k-q)\widehat{\chi\psi_{-x_m}}(q)|
=I_1+I_2,
\end{eqnarray*}
where $I_1$ corresponds to the integral
over $\Omega_1=\{q\telque \frac{k-q}{|k-q|}\in B(\eta_m,\rho_m/8)\}$ and
$I_2$ over $\Omega_2=\bbR^n\backslash\Omega_1$.
To estimate $I_1$, we use
$|\widehat{u_{\eta_m,m}}(k-q)|\le 10^m (1+|k-q|)^{-m}$
(see Eq.~\eqref{TechIneq2})
and
$(1+|k|)^N\leq (1+|q|)^N(1+|k-q|)^N$ to obtain
\begin{eqnarray*}
I_1 &\le & 10^m\sup_{k\in W}
 \int_{\Omega_1} 
(1+|k-q|)^{N-m} |\widehat{\chi\psi_{-x_m}}(q)|(1+|q|)^N dq.
\end{eqnarray*}
We bound $|\widehat{\chi\psi_{-x_m}}(q)|$
with $||\chi_{x_m}||_{n+1+N,\bbR^n,\psi}(1+|q|)^{-N-n-1}$.
Then, if $N-m \le 0$ we bound $(1+|k-q|)^{N-m}$ with 1
and we obtain $I_1\le 10^m I^{n+1}_n
||\chi_{x_m}||_{n+1+N,\bbR^n,\psi}$,
and if $N-m > 0$, then we bound
$(1+|k-q|)^{N-m}$ with $(c\rho_m)^{m-N}(1+|q|)^{N-m}$
and we find
$$I_1\le 10^m (c\rho_m)^{m-N} I^{n+1}_n
||\chi_{x_m}||_{n+1+2N,\bbR^n,\psi}.$$

To estimate $I_2$, we start as for $I_1$ except that
we use the first inequality of Eq.~\eqref{TechIneq2},
where we replace $80^{m/2}$ by $10^m$:
\begin{eqnarray*}
I_2 &\le & 10^m\sup_{k\in W}
 \int_{\Omega_2} 
  (1-\chi(\langle k-q,\eta_m\rangle))
  \chi(\frac{|k-q|^2-|\langle k-q,\eta_m\rangle|^{2}}
  {|\langle k-q,\eta_m\rangle|^{2\rho}})
\\&&(1+|k-q|)^{N-m}
  |\widehat{\chi\psi_{-x_m}}(q)|(1+|q|)^N dq.
\end{eqnarray*}
By considering the support of $\chi$, we see that
the integrand is zero except (possibly) if 
(i) $\langle k-q,\eta_m\rangle \geq 1/2 $ and
(ii) $|\langle k-q,
\eta_m\rangle|^{2\rho}\ \geq \ |k-q|^2-|\langle k-q,\eta_m\rangle|^{2}$.
Now we show that the three conditions (i), (ii) and
$q\notin \Omega_1$ imply $q\in B(k,r_m)$, where
$r_m=(\rho_m^2/64-(\rho_m^2/128)^2)^{-1/(2-2\rho)}$.
Indeed, $q\notin\Omega_1$ means that
$|(k-q)/|k-q|-\eta_m|^2=2(|k-q|-\langle k-q,\eta_m\rangle)/
|k-q|\geq (\rho_m/8)^2$, so that $|k-q|(1-\rho_m^2/128)\geq \langle k-q,
\eta_m\rangle$. This implies with (i) and (ii):
$|k-q|^2(1-(1-\rho_m^2/128)^2)\leq|k-q|^{2\rho}(1-\rho_m^2/128)^{2\rho}\leq 
|k-q|^{2\rho}$. Thus, $r_m^{2\rho-2} \le |k-q|^{2\rho-2}$
and the result follows because $\rho <1$.
By using $0\le \chi\le 1$ we find
\begin{eqnarray*}
I_2 &\le & 10^m\sup_{k\in W}
 \int_{B(k,r_m)} 
(1+|k-q|)^{N-m}
  |\widehat{\chi\psi_{-x_m}}(q)|(1+|q|)^N dq.
\end{eqnarray*}
We proceed now as for $I_1$ and obtain
$I_2\le 10^m I^{n+1}_n
||\chi_{x_m}||_{n+1+N,\bbR^n,\psi}$ if $N-m \le 0$
and
$I_1\le 10^m (1+r_m)^{N-m} I^{n+1}_n||\chi_{x_m}||_{n+1+N,\bbR^n,\psi}$,
if $N-m > 0$.
We have showed that
$||\chi_{x_m}||_{n+1+N,\bbR^n,\psi}$ can be bounded independently
of $m$. Thus, for $m >N$, there is a constant $C_{n,N}$
such that
$||v_m||_{N,W,\psi} \le 10^m C_{n,N}$. Since the set
of $m\le N$ is finite, we see that
$10^{-m} ||v_m||_{N,W,\psi}$ is bounded for all values of $m$.

Thus, we showed that, for any $W$ and $\psi$ such that
$\supp\psi\times W \cap \Gamma_M=\emptyset$ and 
any integer $N$, the set
$\{10^{-m} ||v_m||_{N,W,\psi}\telque m>M\}$ is bounded in $\bbR$.
To show that the set 
$A=\{10^{-m} v_m\telque m>M\}$ is bounded in 
$\calD'_{\Gamma_M}$, we still have to show that it
is bounded for the seminorms $p_B$ with $B$
bounded in $\calD(\Omega)$. 
In the course of step~2, we showed 
that, for any $f\in \calD$, the set
$p_f(A)$ is bounded in $\bbR$. This means that
$A$ is bounded in $\calD'_{\Gamma_M}$ equipped
with the H\"ormander topology. But we proved that
this is equivalent to being bounded for the
normal topology. Thus, $A$ is bounded
in $\calD'_{\Gamma_M}$ with its normal topology.

\begin{step}
Let  $S_m:=\sum_{k=1}^m\frac{1}{k!}v_k$ ($S_0=0$). 
Then for any $M\ge 0$, the
sequence $(S_m-S_M)_{m\ge M}$ is a Cauchy sequence in 
$\calD'_{\Gamma_M}.$ As a consequence, 
$S_m-S_M$ converges 
to $v-S_M$ in 
$\calD'_{\Gamma_M}$ and $WF(v)\supset\{(x_m;\eta_m),m\in\bbN^*\}.$ 
\end{step}

In the previous step we showed that the set
$A=\{10^{-m} v_m\telque m>M\}$ is bounded in 
$\calD'_{\Gamma_M}$. Thus, for every
seminorm $p_i$ of $\calD'_{\Gamma_M}$
and any $p\ge q >  M$, we have
$p_i(S_p-S_q) \le C_i \sum_{m=q}^p 10^m/m!$, and
each $p_i(S_m-S_M)$ is a Cauchy sequence in $\bbR$.
By the completeness of $\calD'_{\Gamma_M}$,
it implies that $S_m-S_M$ converges
to $v-S_M$ in $\calD'_{\Gamma_M}$.

Since the wavefront set is known for each $v_m$
($WF(v_m)=\{x_m\}\times \bbR_+^*\eta_m$),  $v_M$ is the only 
one among the distributions $v-S_M,v_M,...,v_1$ which is
singular in direction $\bbR_+^*\eta_M$ at $x_M$ (because 
$(x_M;\eta_M)\notin\Gamma_M$ by construction either because $x_m\neq x_M$ if $n=1$ or because $\eta_m\neq \eta_M$ if $n\neq 1$, for $m>M$),
one deduces $\{x_M\}\times \bbR_+^*\eta_M\subset WF(v)$.
Indeed, by choosing a test function $\psi$ such that
$\psi(x_M)\not=0$ and a closed cone $V\subset \bbR_+^* B(\eta_M,\rho_M/4)$,
we have $\supp\psi\times V\cap \WF(v_m)=\emptyset$ for $m<M$
and $\supp\psi\times V\cap \WF(v-S_M)=\emptyset$.
Therefore, $||v-v_M||_{N,V,\psi}$ is finite for all $N$ and
$\widehat{\psi(v-v_M)}(\lambda \eta_M)$ cannot compensate for 
the slow decrease of $\widehat{\psi v_M}(\lambda \eta_M)$,
which is ensured by the fact that
$\WF(\psi v_M)=\WF(v_M)$ when
$\psi(x_M)\not=0$~\cite[p.~121]{Hormander-71}.
Since 
this is valid for any $M$, this proves the wavefront set statement.

It remains to show that the sequence is also Cauchy in $\calE'_\Lambda$.

 \begin{step}
$S_m:=\sum_{k=1}^m\frac{1}{k!}v_k$  is Cauchy in $\mathcal{E}'_{\Lambda}$ 
for the strong topology coming from its duality with $\calD'_\Gamma$ 
(and even Mackey-Cauchy for the corresponding von Neumann bornology). 
Especially, $\mathcal{E}'_{\Lambda}$ is not sequentially complete 
(and not even Mackey-complete).
\end{step}
By construction $WF(S_m)\subset \Lambda$. Assume proved the statement 
about its Cauchy nature, then the last step enables to show that if it were 
(even weakly) convergent in $\mathcal{E}'_{\Lambda}$, then the limit would be 
$v$ (since it would be weakly convergent in 
$\mathcal{D}'_{\overline{\Lambda}}$ where the limit is $v$) 
as a distribution, but since the wavefront set is closed, 
$(x;\eta)\in WF(v)$ and since $(x;\eta)\not\in \Lambda$ 
this gives a contradiction, implying that $S_m$ is a Cauchy 
sequence not (weakly) converging in $\mathcal{E}'_{\Lambda}$.

Thus it remains to show that $S_m$ is Cauchy. 
Take $B\subset \calD'_\Gamma$ 
bounded, we want to show that 
$p_B(S_m)=\sum_{k=1}^m
p_B(v_k)/k!$ is a 
Cauchy sequence.
First choose $\tilde{\chi}\in D(\Omega)$ which is identically one on the 
compact set $\cup_{y\in X} \supp\chi_y=X+\supp\chi$ (the sum of 
two  compact 
sets is compact), where $\chi_y=T_y\chi$.  
Using lemma~\ref{UniformBound}, since $B$ is bounded in $\calD'(\Omega)$, 
fix $M=M_B$ such that $\sup_{u\in B}\sup_{k\in\bbR^n}
(1+|k|)^{-M}|
\widehat{\tilde{\chi}u}(k)|=D<\infty.$ 
Then, for $y\in X$, we bound: 
\begin{align*}
\sup_{u\in B}||u||_{M,\bbR^n,\chi_y} &=
\sup_{u\in B}\sup_{k\in\bbR^n}(1+|k|)^{-M}
|\widehat{\chi_yu}(k)|=
\sup_{u\in B}\sup_{k\in\bbR^n}(1+|k|)^{-M}
|\widehat{\chi_y\tilde{\chi}u}(k)|\\&
\leq \sup_{u\in B}\sup_{k\in\bbR^n}\int_{\bbR^n} dq 
(1+|k-q|)^{M}|\widehat{\chi_y}(k-q)
\widehat{\tilde{\chi}u}(q)|(1+|q|)^{-M}
\\& \leq DI_{n}^{n+1}((1+n)\beta)^{(M+n+1)}\pi_{2(M+n+1),\supp(\chi)}(\chi)
=C_B<\infty.\end{align*}

It now suffices to estimate $p_B(v_m)$ for $m\geq M+n+1.$
Thus, using this inequality and (\ref{TechIneq2}), we deduce for $m\geq M+n+1$:
 
\begin{align*}\sup_{u\in B}|\langle u,v_m\rangle|
&=\sup_{u\in B}\frac{1}{(2\pi)^n}\left|\int_{\bbR^n}dk 
  \widehat{\chi T_{-x_m}u}(k)
  \widehat{u_{\eta_m,m}}(-k)\right|\\&\leq C_B10^m
 \int_{\bbR^n}dk(1+|k|)^M(1+|k|)^{-m}\leq C_B10^mI_{n}^{n+1}.
 \end{align*}
Thus, for $p\geq q\geq M+n+1$, $p_B(S_p-S_q)\leq C_BI_{n}^{n+1}
\sum_{k=q+1}^p\frac {10^k}{k!},$ and $p_B(S_m)$ is Cauchy as we wanted.

More precisely, let us define the following bounded set for the strong topology of $\mathcal{E}'_{\Lambda}$ $$A'=\{v\in \mathcal{E}'_{\Lambda} : p_B(v)\leq \max(C_BI_{n}^{n+1},\max_{m\leq M_B+n+1}(p_B(v_m)))\ \forall B\ \text{bounded in }\ \calD'_\Gamma(\Omega) \}.$$
Note that if $q\leq M_B+n+1\leq p$, or $q\leq p\leq M_B+n+1$, we still have  $$p_B(S_p-S_q)\leq  
\sum_{k=q+1}^p\frac{10^k}{k!}\max(C_BI_{n}^{n+1},\max_{m\leq M_B+n+1}(p_B(v_m))).$$
Thus, if $\lambda_{p,q}=\lambda_{q,p}=
\sum_{k=q+1}^p\frac{10^m}{m!}$ for $p\geq q$, we showed 
$S_p-S_q\in \lambda_{p,q} A'$ and since
$\lambda_{p,q}\underset{p,q\to\infty}{\rightarrow} 0,$ we even
deduce that $S_p$ is Mackey-Cauchy. This concludes.

\begin{step}
Characterization of closed $\Lambda$.
\end{step}
To complete the proof we give some information
on the case when $\Lambda$ is open and closed.
A subset of a topological space $X$ is called \emph{clopen} 
if it is both open and closed in $X$~\cite[p.~10]{Bredon}.
A topological space $X$ is connected if and only if its only clopen
subsets are $X$ and $\emptyset$~\cite[p.~10]{Bredon}.
Now, if $\Omega$ is connected, its cotangent bundle
$T^*\Omega$ is connected. If the dimension of
$\Omega$ is $n>1$ the set 
$\dotT^*\Omega$, which is $T^*\Omega$ with the
zero section removed, is also connected.
In that case $\Lambda$ is clopen if and only if it is either
empty (so that $\calE'_\Lambda=\calD(\Omega)$)
or $\dotT^*\Omega$
(so that $\calE'_\Lambda=\calE'(\Omega)$).
Since both $\calD(\Omega)$ and $\calE'(\Omega)$ are complete,
our theorem is optimal for connected $\dotT^*\Omega$.
\end{proof}
\begin{cor}
If $\Lambda$ is an open cone which is not closed, $\calE'_\Lambda$ is not sequentially complete for any topology
that is coarser than the normal topology and finer than the
weak topology of distributions induced by $\calD'(\Omega)$. In particular, the inductive
limit of $E_\ell$ equipped with the H\"ormander topology
is also not sequentially complete.
\end{cor}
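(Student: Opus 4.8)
The plan is to deduce the corollary directly from the counterexample built in the theorem, exploiting only the order relation between the topologies involved. Recall that in that proof I produced a sequence $S_m=\sum_{k=1}^m v_k/k!$ of elements of $\calE'_\Lambda$, together with a distribution $v\in\calD'(\Omega)$, having three properties: by Step~2 the series converges to $v$ when tested against $\calD(\Omega)$, i.e.\ $S_m\to v$ for the weak topology $\sigma(\calD'(\Omega),\calD(\Omega))$; by Step~5 and the closedness of the wavefront set, the boundary point $(x;\eta)$ lies in $\WF(v)$ while $(x;\eta)\notin\Lambda$, so that $v\notin\calE'_\Lambda$; and by Step~6 the sequence $S_m$ is Mackey-Cauchy, hence Cauchy, for the strong topology $\beta(\calE'_\Lambda,\calD'_\Gamma)$.

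First I would fix any locally convex topology $\tau$ on $\calE'_\Lambda$ that is finer than the restriction of $\sigma(\calD'(\Omega),\calD(\Omega))$ and coarser than the normal (equivalently, strong or inductive) topology. Since $\tau$ is coarser than the strong topology, the strong-Cauchy sequence $S_m$ is $\tau$-Cauchy, because a sequence that is Cauchy for a finer topology is Cauchy for every coarser one. For the non-convergence, suppose $S_m\to w$ in $(\calE'_\Lambda,\tau)$ for some $w\in\calE'_\Lambda$. As $\tau$ is finer than $\sigma(\calD'(\Omega),\calD(\Omega))$, the same sequence converges to $w$ in that weak topology; but $S_m\to v$ there as well, and $\sigma(\calD'(\Omega),\calD(\Omega))$ is Hausdorff, so $w=v$ as distributions. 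This contradicts $w\in\calE'_\Lambda$ together with $v\notin\calE'_\Lambda$. Hence $S_m$ is a $\tau$-Cauchy sequence with no limit in $\calE'_\Lambda$, and $(\calE'_\Lambda,\tau)$ is not sequentially complete.

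Finally I would verify that the H\"ormander inductive-limit topology lies in this range. On each $E_\ell$ the H\"ormander topology has fewer seminorms than the normal one (the seminorms $p_B$ being replaced by the single $p_f$), so its inductive limit is coarser than the normal, that is strong, inductive limit; on the other hand it still contains every seminorm $p_f$ with $f\in\calD(\Omega)$, hence is finer than $\sigma(\calD'(\Omega),\calD(\Omega))$. The argument of the preceding paragraph then applies verbatim. I do not expect any genuine obstacle here, since all the analytic work---the decay estimates on $\widehat{v_m}$, the Cauchy bound on $p_B(S_p-S_q)$ and the identification of $\WF(v)$---has already been carried out in the theorem; the only point that deserves care is that the candidate limit is pinned down to the single distribution $v$ by the coarsest topology of the interval, which is exactly why the entire range of topologies fails to be sequentially complete simultaneously.
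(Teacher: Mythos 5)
Your proposal is correct and follows essentially the same route as the paper: the counterexample sequence $S_m$ from the theorem is Cauchy for any topology coarser than the strong (normal/inductive) one, and any $\tau$-limit would be forced, via the Hausdorff weak topology of $\calD'(\Omega)$, to coincide with $v\notin\calE'_\Lambda$. Your explicit verification that the H\"ormander inductive-limit topology sits between the weak and strong topologies is only implicit in the paper's proof, but it is exactly the intended argument.
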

\begin{proof}
This result is a consequence of the proof above rather than of the statement.
A sequence which is Cauchy for the normal topology 
remains Cauchy for topologies that are coarser than
it, thus our counterexample above is  Cauchy for the topologies considered. 
Therefore, it converges weakly in $\calD'(\Omega)$ and we showed 
that the limit cannot be in $\calE'_\Lambda$ so that 
$\calE'_\Lambda$ is not sequentially complete. 
\end{proof}

\begin{cor}
If $\Lambda$ is an open cone which is not closed, 
then $\calE'_\Lambda$ 
is not a regular inductive limit for the inductive topology
(which is equivalent to the strong topology) and it is not semi-reflexive. 
If $(\Gamma')^c=\Lambda$, $\calD'_{\Gamma}$ is neither bornological 
nor barrelled in its normal topology
\end{cor}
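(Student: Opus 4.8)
The plan is to derive every claim from the single substantive fact proved in the theorem above — that $\calE'_\Lambda$ (with its strong, equivalently inductive, topology) fails to be even weakly sequentially complete — by closing three short duality loops. All the nontrivial analysis has already been spent on the Cauchy series $v=\sum_m v_m/m!$; what remains is a cascade of consistency checks built on the results of the previous sections.

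First I would dispose of semi-reflexivity. As recalled in the proof of proposition~\ref{quasicompleteprop}, a semi-reflexive space is quasi-complete for its weak topology and hence weakly sequentially complete. Were $\calE'_\Lambda$ semi-reflexive it would therefore be weakly sequentially complete, directly contradicting the theorem; so $\calE'_\Lambda$ is not semi-reflexive.

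Next, the failure of regularity, which I would prove by contradiction. If the inductive limit $\calE'_\Lambda=\lim_{\to} E_\ell$ were regular, every bounded subset of $\calE'_\Lambda$ would be contained and bounded in some $E_\ell$; since we already know the converse, that every bounded set of the bornological inductive limit $E$ is bounded in $\calE'_\Lambda$, the von Neumann bornology of $\calE'_\Lambda$ would coincide with the bornology of $E$. The two bornological duals would then carry the same natural topology. But the natural topology of $(\calE'_\Lambda)^\times$ is the strong topology $\beta(\calD'_\Gamma,\calE'_\Lambda)$, whereas by proposition~\ref{borndual} the natural topology of $E^\times$ is the normal topology of $\calD'_\Gamma$; regularity would thus force the strong and normal topologies on $\calD'_\Gamma$ to agree. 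Since $(\calD'_\Gamma,\text{normal})'=\calE'_\Lambda$ by proposition~\ref{firstdualprop}, the dual of $\calD'_\Gamma$ for its strong topology would then again be $\calE'_\Lambda$, i.e. $\calE'_\Lambda$ would be semi-reflexive — contradicting the previous step. Hence the inductive limit is not regular. This is exactly the implication sketched in the remarks after proposition~\ref{ELbornoprop}, now completed by the non-semi-reflexivity just obtained.

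Finally, assuming $(\Gamma')^c=\Lambda$ (so that the dual of $\calD'_\Gamma$ is precisely this $\calE'_\Lambda$), I would rule out barrelledness and bornologicality of $\calD'_\Gamma$ for the normal topology. By proposition~\ref{general}, $\calD'_\Gamma$ is semi-reflexive; a semi-reflexive barrelled space is reflexive, and the strong dual of a reflexive space is reflexive, so barrelledness of $\calD'_\Gamma$ would make $\calE'_\Lambda$ reflexive, in particular semi-reflexive, which is false. For bornologicality I would invoke the standard fact that the strong dual of a bornological space is complete: were $\calD'_\Gamma$ bornological, its strong dual $\calE'_\Lambda$ would be complete, hence sequentially complete, again contradicting the theorem. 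The only delicate point in the whole argument is the regularity step, where one must correctly translate regularity into equality of bornologies and then into equality of the strong and normal topologies on $\calD'_\Gamma$; everything else is a soft transfer of the theorem's failure of completeness across the duality.
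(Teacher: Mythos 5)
Your proof is correct and follows essentially the same route as the paper: semi-reflexivity of $\calE'_\Lambda$ fails because it would imply weak sequential completeness, regularity fails because equality of the bornologies would force the natural topologies of the bornological duals, hence the normal and strong topologies on $\calD'_\Gamma$, to coincide and thus make $\calE'_\Lambda$ semi-reflexive (this is exactly the argument the paper imports from the remarks at the end of its section on ultrabornologicality, where it also notes the alternative of observing that the set $\{S_m\}$ is bounded in $\calE'_\Lambda$ but in no $E_\ell$), and bornologicality of $\calD'_\Gamma$ fails because the strong dual of a separated bornological space is complete. The only, immaterial, difference is the barrelled step: you argue via semi-reflexive $+$ barrelled $=$ reflexive and the reflexivity of strong duals of reflexive spaces, while the paper argues via semi-Montel $+$ barrelled $=$ Montel and the Montel property of strong duals of Montel spaces; both chains end in the same contradiction with the non-semi-reflexivity of $\calE'_\Lambda$.
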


\begin{proof}
If $\calE'_\Lambda$ were semi-reflexive it would be weakly 
sequentially complete~\cite[p.~228]{Horvath}. If the inductive 
limit were regular, it would be semi-reflexive as explained at 
the end of section~4.3. Alternatively, one can see that the set of the Cauchy sequence $\{S_m,m\geq 1\}$ we built is bounded in $\calE'_\Lambda$ and not in any $E_\ell$.

The space $\calD'_\Gamma$ is not bornological because 
the strong dual of a separated bornological space is 
complete~\cite[p.~77]{Hogbe-77}.
If $\calD'_{\Gamma}$ were barrelled in its normal topology so that, 
since it is semi-Montel, it would be a Montel space~\cite[p.~231]{Horvath}, 
then its strong dual $\calE'_\Lambda$ would also be a 
Montel space~\cite[p.~234]{Horvath} and thus again semi-reflexive.
Note that Bourbaki states that a space that is semi-reflexive 
and semi-barrelled is complete~\cite[p.~IV.60]{Bourbaki-TVS},
but this is wrong~\cite{Bourles-13}.
\end{proof}

Remark that $\calD'_\Gamma$ provides a concrete and natural 
example of a complete nuclear space whose strong dual is not
sequentially-complete.
Grothendieck constructed other examples by using
sophisticated techniques of topological tensor
products~\cite[Ch.~II, p.~83 and p.~92]{Grothendieck-55}
(see also \cite[p.~96]{Hogbe-73-nuclear}).

\section{Conclusion}

This paper determined the main functional properties
of H{\"o}rmander's space of distributions $\calD'_\Gamma$
and its dual.
In view of applications to the causal approach of
quantum field theory, we derived 
simple rules to determine whether
a distribution belongs to $\calD'_\Gamma$, whether
a sequence converges in $\calD'_\Gamma$ 
and whether a subset of $\calD'_\Gamma$ is bounded.
The properties of $\calD'_\Gamma$ can also
be useful to other physical applications where the wavefront set
played a crucial 
role~\cite{Ivrii-77,Esser-87,Kay-97,Fewster-00,Strohmaier-02,%
Franco-07,Pinamonti-11}.

By using the functional properties of $\calD'_\Gamma$, the proof of
renormalizability of scalar quantum field theory in
curved spacetime can be considerably simplified and
streamlined with respect to the original derivation
given by Brunetti and Fredenhagen~\cite{Brunetti2}.

The results of the present paper will be
extended in two directions: i) 
The continuity properties of the main operations with
distributions in $\calD'_\Gamma$
(tensor product, pull-back, push-forward,
multiplication of distributions)~\cite{Viet-wf2};
ii) A detailed investigation of the microcausal functionals discussed by
Brunetti, D{\"u}tsch, Fredenhagen, 
Rejzner and Ribeiro~\cite{Brunetti-09,Fredenhagen-11,%
Brunetti-12,Brunetti-13-QG}, which are the basis
of a new and powerful formulation of quantum field theory.
As noticed in ref.~\cite{Brunetti-12}, the space of
microcausal functionals is based on spaces 
of the type $\calE'_\Lambda$ which have very
poor completeness properties. 
This problem can be solved by using the completion
of $\calE'_\Lambda$, which is, because of the nuclearity of $\calE'_\Lambda$, also 
the bornological dual of $\calD'_\Gamma$~\cite[p.~140]{Hogbe-81}.
The topological and bornological properties of this completion will
be discussed in a forthcoming publication 
by the first author~\cite{Dabrowski-13}.

\section{Acknowledgements}
The authors were partially supported by GDR Renormalisation.
We thank Christian G\'erard for useful remarks.
The first author wants to thank the organizing committee of the 2012 Les Houches Winter School on ``New mathematical aspects of quantum field theories" supported by the LABEX MILYON (ANR-10-LABX-0070), where he started learning about quantum field theory in curved spacetime.
The second author is grateful to Katarzyna Rejzner,
Nguyen Viet Dang, Thierry De Pauw, 
Fr\'ed\'eric H\'elein, Christopher Boyd,
Michael Oberguggenberger and Alain Grigis for discussions
at an early stage of this work.


\begin{thebibliography}{10}

\bibitem{Alesker-10}
Alesker, S.: Valuations on manifolds and integral geometry.
\newblock Geom. Funct. Anal. \textbf{20}, 1073--143 (2010)

\bibitem{Arens-47}
Arens, R.: Duality in linear spaces.
\newblock Duke Math. J. \textbf{14}, 787--94 (1947)

\bibitem{Bourbaki-TVS}
Bourbaki, N.: \emph{Elements of Mathematics. Topological Vector Spaces}.
\newblock Berlin: Springer (2003)

\bibitem{Bourles-13}
Bourl{\`e}s, H.: Counterexamples on semi-barrelled spaces -- {F}our false
  statements of {N. B}ourbaki.
\newblock arXiv:1304.0360

\bibitem{Bredon}
Bredon, G.: \emph{Topology and Geometry}.
\newblock New York: Springer (1993)

\bibitem{BDH-13}
Brouder, C., Dang, N.V., H{\'e}lein, F.: A smooth introduction to the wavefront
  set.
\newblock arXiv:1404.1778

\bibitem{Brunetti-09}
Brunetti, R., D{\"u}tsch, M., Fredenhagen, K.: Perturbative algebraic quantum
  field theory and the renormalization groups.
\newblock Adv. Theor. Math. Phys. \textbf{13}, 1541--99 (2009)

\bibitem{Brunetti2}
Brunetti, R., Fredenhagen, K.: Microlocal analysis and interacting quantum
  field theories: renormalization on physical backgrounds.
\newblock Commun. Math. Phys. \textbf{208}, 623--61 (2000)

\bibitem{Brunetti-13-QG}
Brunetti, R., Fredenhagen, K., Rejzner, K.: Quantum gravity from the point of
  view of locally covariant quantum field theory.
\newblock arXiv:1306.1058

\bibitem{Brunetti-12}
Brunetti, R., Fredenhagen, K., Ribeiro, P.L.: Algebraic structure of classical
  field theory. {I. K}inematics and linearized dynamics for real scalar fields.
\newblock arXiv:1209.2148

\bibitem{Cardoso-75}
Cardoso, F.: Wavefront sets, {F}ourier integrals and propagation of
  singularities.
\newblock Bol. Soc. Bras. Mat. \textbf{6}, 39--52 (1975)

\bibitem{Perez-Carreras}
Carreras, P.P., Bonet, J.: \emph{Barrelled Locally Convex Spaces}, volume 131
  of \emph{Notas de Matem{\'a}tica}.
\newblock Amsterdam: North Holland (1987)

\bibitem{Chazarain}
Chazarain, J., Piriou, A.: \emph{Introduction {\`a} la th{\'e}orie des
  {\'e}quations aux d{\'e}riv{\'e}es partielles}.
\newblock Paris: Gauthier-Villars (1981)

\bibitem{Dabrowski-13}
Dabrowski, Y.: Functional properties of generalized {H}{\"o}rmander spaces of
  distributions and spaces of generalized microcausal functionals.
\newblock \emph{in preparation}

\bibitem{Viet-wf2}
Dang, N.V., H\'elein, F., Brouder, C.: Boundedness and continuity of the main
  operations on distributions with a specified wavefront set.
\newblock \emph{in preparation}

\bibitem{Antoni-06}
D'Antoni, C., Hollands, S.: Nuclearity, local quasiequivalence and split
  property for {D}irac quantum fields in curved spacetimes.
\newblock Commun. Math. Phys. \textbf{261}, 133--59 (2006)

\bibitem{Dappiaggi-09}
Dappiaggi, C., Hack, T.P., Pinamonti, N.: The extended algebra of observables
  for {D}irac fields and the trace anomaly of their stress-energy tensor.
\newblock Rev. Math. Phys. \textbf{21}, 1241--312 (2009)

\bibitem{Domanski}
Doma\'nski, P.: Classical {PLS}-spaces: spaces of distributions, real analytic
  functions and their relatives.
\newblock In: \emph{Orlicz Centenary Volume}. Warsaw: Polish Acad. Sci. (2004),
  volume~64 of \emph{Banach Center Publ.}, pp. 51--70

\bibitem{Domanski2}
Doma\'nski, P.: Real analytic parameter dependence of solutions of differential
  equations.
\newblock Rev. Mat. Iberoamericana \textbf{26}, 175--238 (2010)

\bibitem{Duistermaat}
Duistermaat, J.J.: \emph{Fourier Integral Operators}.
\newblock Boston: Birkh{\"{a}}user (1996)

\bibitem{Edwards}
Edwards, R.E.: \emph{Functional Analysis. Theory and Applications}.
\newblock New York: Dover (1995)

\bibitem{Eskin}
Eskin, G.: \emph{Lectures on Linear Partial Differential Equations}, volume 123
  of \emph{Graduate Studies in Mathematics}.
\newblock Providence: Amer. Math. Soc. (2011)

\bibitem{Esser-87}
Esser, P.: Second analytic wave front set in crystal optics.
\newblock Applicable Anal. \textbf{24}, 189--213 (1987)

\bibitem{Fewster-00}
Fewster, C.J.: A general worldline quantum inequality.
\newblock Class. Quant. Grav. \textbf{17}, 1897--911 (2000)

\bibitem{Franco-07}
Franco, D.H.T., Acebal, J.L.: Microlocal analysis and renormalization in finite
  temperature field theory.
\newblock Int. J. Theor. Phys. \textbf{46}, 383--98 (2007)

\bibitem{Fredenhagen-11}
Fredenhagen, K., Rejzner, K.: Batalin-{V}ilkovisky formalism in the functional
  approach to classical field theory.
\newblock Commun. Math. Phys. \textbf{314}, 93--127 (2012)

\bibitem{Fredenhagen-13}
Fredenhagen, K., Rejzner, K.: Batalin-{V}ilkovisky formalism in perturbative
  algebraic quantum field theory.
\newblock Commun. Math. Phys. \textbf{317}, 697--725 (2013)

\bibitem{Friedlander}
Friedlander, F.G., Joshi, M.: \emph{Introduction to the Theory of
  Distributions}.
\newblock Cambridge: Cambridge University Press, second edition (1998)

\bibitem{Gach-04}
Gach, F.: \emph{Topological versus bornological concepts in infinite
  dimensions}.
\newblock University of Vienna: MSc. Thesis (2004)

\bibitem{Gilsdorf-92}
Gilsdorf, T.E.: Boundedly compatible webs and strict {M}ackey convergence.
\newblock Math. Nachr. \textbf{159}, 139--47 (1992)

\bibitem{Grigis}
Grigis, A., Sj{\"o}strand, J.: \emph{Microlocal Analysis for Differential
  Operators}.
\newblock London: Cambridge University Press (1994)

\bibitem{Grothendieck-54}
Grothendieck, A.: Sur les espaces ({F}) et ({DF}).
\newblock Summa Brasiliensis Mathematicae \textbf{3}, 57--123 (1954)

\bibitem{Grothendieck-55}
Grothendieck, A.: \emph{Produits Tensoriels Topologiques et Espaces
  Nucl{\'e}aires}, volume~16 of \emph{Mem. Amer. Math. Soc.}
\newblock Providence: Amer. Math. Soc. (1955)

\bibitem{Grudzinski}
von Grudzinski, O.: \emph{Quasihomogeneous distributions}, volume 165 of
  \emph{Mathematics Studies}.
\newblock Amsterdam: North Holland (1991)

\bibitem{Guillemin}
Guillemin, V., Sternberg, S.: \emph{Geometric Asymptotics}, volume~14 of
  \emph{Math. Surveys}.
\newblock Providence: Amer. Math. Soc. (1977)

\bibitem{Hogbe-71}
Hogbe-Nlend, H.: \emph{Th\'eories des Bornologies et Applications}, volume 213
  of \emph{Lecture Notes in Mathematics}.
\newblock Berlin: Springer (1971)

\bibitem{Hogbe-73-nuclear}
Hogbe-Nlend, H.: Topologies et bornologies nucl{\'e}aires associ{\'e}es.
  {A}pplications.
\newblock Ann. Inst. Fourier \textbf{23}, 89--104 (1973)

\bibitem{Hogbe-77}
Hogbe-Nlend, H.: \emph{Bornologies and Functional Analysis}, volume~26 of
  \emph{Notas de Matem{\'a}tica}.
\newblock Amsterdam: North Holland (1977)

\bibitem{Hogbe-81}
Hogbe-Nlend, H., Moscatelli, V.B.: \emph{Nuclear and Conuclear Spaces},
  volume~52 of \emph{Notas de Matem{\'a}tica}.
\newblock Amsterdam: North Holland (1981)

\bibitem{Hollands-01}
Hollands, S.: The {H}adamard condition for {D}irac fields and adiabatic states
  on {R}obertson-{W}alker spacetimes.
\newblock Commun. Math. Phys. \textbf{216}, 635--61 (2001)

\bibitem{Hollands-08}
Hollands, S.: Renormalized quantum {Y}ang-{M}ills fields in curved spacetime.
\newblock Rev. Math. Phys. \textbf{20}, 1033--172 (2008)

\bibitem{Hollands2}
Hollands, S., Wald, R.M.: Existence of local covariant time ordered products of
  quantum fields in curved spacetime.
\newblock Commun. Math. Phys. \textbf{231}, 309--45 (2002)

\bibitem{Hormander-71}
H{\"o}rmander, L.: Fourier integral operators. {I}.
\newblock Acta Math. \textbf{127}, 79--183 (1971)

\bibitem{HormanderI}
H{\"o}rmander, L.: \emph{The Analysis of Linear Partial Differential Operators
  I. Distribution Theory and Fourier Analysis}.
\newblock Berlin: Springer Verlag, second edition (1990)

\bibitem{Hormander-97}
H{\"o}rmander, L.: \emph{Lectures on Nonlinear Hyperbolic Differential
  Equations}.
\newblock Berlin: Springer Verlag (1997)

\bibitem{Horvath}
Horv{\'a}th, J.: \emph{Topological Vector Spaces and Distributions}.
\newblock Reading: Addison-Wesley (1966)

\bibitem{Ivrii-77}
Ivrii, V.J.: Wave fronts of crystallo-optics system solutions.
\newblock Soviet Math. Dokl. \textbf{18}, 139--41 (1977)

\bibitem{Jarchow}
Jarchow, H.: \emph{Locally Convex Spaces}.
\newblock Stuttgart: B. G. Teubner (1981)

\bibitem{Kashiwara}
Kashiwara, M., Kawai, T.: Second-microlocalization and asymptotic expansions.
\newblock In: D.~Iagolnitzer (ed.), \emph{Complex Analysis, Microlocal Calculus
  and Relativistic Quantum Theory}. Berlin: Springer (1980), volume 126 of
  \emph{Lecture Notes in Physics}, pp. 21--76

\bibitem{Kay-97}
Kay, B.S., Radzikowski, M.J., Wald, R.M.: Quantum field theory on spacetimes
  with compactly generated {C}auchy horizon.
\newblock Commun. Math. Phys. \textbf{183}, 533--56 (1997)

\bibitem{Kelley}
Kelley, J.L.: \emph{General Topology}, volume~27 of \emph{Graduate Texts in
  Mathematics}.
\newblock New York: Springer-Verlag, revised third edition (1955)

\bibitem{Kothe-I}
K{\"o}the, G.: \emph{Topological Vector Spaces I}.
\newblock New York: Springer Verlag (1969)

\bibitem{Kothe-II}
K{\"o}the, G.: \emph{Topological Vector Spaces II}.
\newblock New York: Springer Verlag (1979)

\bibitem{Kratzert-00}
Kratzert, K.: Singularity structure of the two point function of the free
  {D}irac field on globally hyperbolic spacetime.
\newblock Ann. Phys. (Leipzig) \textbf{9}, 475--98 (2000)

\bibitem{KrieglMichor}
Kriegl, A., Michor, P.W.: \emph{The Convenient Setting of Global Analysis}.
\newblock Providence: American Mathematical Society (1997)

\bibitem{Kucera-78}
Kucera, J., McKennon, K.: Bounded sets in inductive limits.
\newblock Proc. Amer. Math. Soc. \textbf{69}, 62--4 (1978)

\bibitem{Mackey-43-2}
Mackey, G.W.: On convex topological linear spaces.
\newblock Proc. Nat. Acad. Sci. \textbf{29}, 315--9 (1943)

\bibitem{Mackey-43}
Mackey, G.W.: On infinite dimensional linear spaces.
\newblock Proc. Nat. Acad. Sci. \textbf{29}, 216--21 (1943)

\bibitem{Mackey-45}
Mackey, G.W.: On infinite-dimensional linear spaces.
\newblock Trans. Amer. Math. Soc. \textbf{57}, 155--207 (1945)

\bibitem{Mackey-46}
Mackey, G.W.: On convex topological linear spaces.
\newblock Trans. Amer. Math. Soc. \textbf{60}, 520--37 (1946)

\bibitem{Meise}
Meise, R., Vogt, D.: \emph{Introduction to Functional Analysis}.
\newblock Oxford: Clarendon Press (1997)

\bibitem{Meyer-98}
Meyer, Y.: \emph{Wavelets, Vibrations and Scalings}, volume~9 of \emph{CRM
  Monograph Series}.
\newblock Providence: Amer. Math. Soc. (1998)

\bibitem{Nigsch-13}
Nigsch, E.A.: Bornological isomorphic representations of distributions on
  manifolds.
\newblock Monatsh. Math. \textbf{170}, 49--63 (2013)

\bibitem{Oberguggenberger-13}
Oberguggenberger, M.: On the algebraic dual of {$\mathcal{D}(\Omega)$}.
\newblock arXiv:1304.2512

\bibitem{Ogrodzka}
Ogrodzka, Z.: On simultaneous extension of infinitely differentiable functions.
\newblock Studia Math. \textbf{28}, 193--207 (1967)

\bibitem{Pietsch}
Pietsch, A.: \emph{Nuclear Locally Convex Spaces}.
\newblock Berlin: Springer (1972)

\bibitem{Pinamonti-11}
Pinamonti, N.: On the initial conditions and solutions of the semiclassical
  {E}instein equations in a cosmological scenario.
\newblock Commun. Math. Phys. \textbf{305}, 563--604 (2011)

\bibitem{Qiu-00}
Qiu, J.H.: Weak property ({$Y_0$}) and regularity of inductive limits.
\newblock J. Math. Anal. Appl. \textbf{246}, 379--89 (2000)

\bibitem{Radzikowski-92-PhD}
Radzikowski, M.J.: \emph{The {H}adamard condition and {K}ay's conjecture in
  (axiomatic) quantum field theory on curved space-time}.
\newblock Princeton University: Ph.D. Thesis (1992)

\bibitem{Radzikowski}
Radzikowski, M.: Micro-local approach to the {H}adamard condition in quantum
  field theory on curved space-time.
\newblock Commun. Math. Phys. \textbf{179}, 529--53 (1996)

\bibitem{Randtke-72}
Randtke, D.: Characterization of precompact maps, {S}chwartz spaces and nuclear
  spaces.
\newblock Trans. Amer. Math. Soc. \textbf{165}, 87--101 (1972)

\bibitem{ReedSimonII}
Reed, M., Simon, B.: \emph{Methods of Modern Mathematical Physics. II {F}ourier
  Analysis, Self-adjointness}.
\newblock New York: Academic Press (1975)

\bibitem{Rejzner-11}
Rejzner, K.: Fermionic fields in the functional approach to classical field
  theory.
\newblock Rev. Math. Phys. \textbf{23}, 1009--33 (2011)

\bibitem{Robertson}
Robertson, A.P., Robertson, W.: \emph{Topological Vector Spaces}.
\newblock Cambridge: Cambridge University Press (1964)

\bibitem{Sanders-10-Dirac}
Sanders, K.: The locally covariant {D}irac field.
\newblock Rev. Math. Phys. \textbf{22}, 381--430 (2010)

\bibitem{Schaefer}
Schaefer, H.H., Wolff, P.H.: \emph{Topological Vector Spaces}.
\newblock Berlin: Springer, second edition (1999)

\bibitem{Schwartz-57}
Schwartz, L.: Th{\'e}orie des distributions {\`a} valeurs vectorielles. {I}.
\newblock Ann. Inst. Fourier \textbf{7}, 1--141 (1957)

\bibitem{Schwartz-66}
Schwartz, L.: \emph{Th{\'e}orie des distributions}.
\newblock Paris: Hermann, second edition (1966)

\bibitem{Stottmeister-14}
Stottmeister, A., Thiemann, T.: The microlocal spectrum condition, initial
  value formulations and background independence.
\newblock arXiv:1312.4173

\bibitem{Strichartz-03}
Strichartz, R.S.: \emph{A Guide to Distribution Theory and Fourier Transforms}.
\newblock Singapore: World Scientific (2003)

\bibitem{Strohmaier}
Strohmaier, A.: Microlocal analysis.
\newblock In: C.~B{\"a}r, K.~Fredenhagen (eds.), \emph{Quantum Field Theory on
  Curved Spacetimes: Concepts and Mathematical Foundations}. Berlin: Springer
  (2009), volume 786 of \emph{Lecture Notes in Physics}, pp. 113--124

\bibitem{Strohmaier-02}
Strohmaier, A., Verch, R., Wollenberg, M.: Microlocal analysis of quantum
  fields on curved spacetimes: {A}nalytic wavefront sets and {R}eeh-{S}chlieder
  theorems.
\newblock J. Math. Phys. \textbf{43}, 5514--30 (2002)

\bibitem{Treves}
Tr{\`e}ves, F.: \emph{Topological Vector Spaces, Distributions and Kernels}.
\newblock New York: Dover (2007)

\bibitem{Valdivia-74}
Valdivia, M.: Some characterizations of ultrabornological spaces.
\newblock Ann. Inst. Fourier \textbf{24}, 57--66 (1974)

\bibitem{Valdivia}
Valdivia, M.: \emph{Topics in Locally Convex Spaces}, volume~67 of
  \emph{Mathematics Studies}.
\newblock Amsterdam: North Holland (1982)

\bibitem{Wagschal-11}
Wagschal, C.: \emph{Distributions, Analyse microlocale, Equations aux
  d{\'e}riv{\'e}es partielles}.
\newblock Paris: Hermann (2011)

\bibitem{Wengenroth}
Wengenroth, J.: \emph{Derived Functors in Functional Analysis}, volume 1810 of
  \emph{Lecture Notes in Mathematics}.
\newblock Berlin: Springer (2003)

\bibitem{Wong}
Wong, Y.C.: \emph{Introductory Theory of Topological Vector Spaces}.
\newblock New York: Marcel Dekker, Inc. (1992)

\end{thebibliography}

\end{document}